\newcolumntype{K}[1]{>{\centering\arraybackslash}p{#1}}
\theoremstyle{theorem}
\newtheorem{lemma}{Lemma}
\newtheorem{remark}{Remark}
\newcommand{\V}{\mathrm{Var}}
\newcommand{\E}{\mathrm{E}}
\newcommand{\C}{\mathrm{Cov}}
\newcommand{\Prob}{\mathrm{P}}
\renewcommand{\arraystretch}{0.9}
\newcommand\blfootnote[1]{%
  \begingroup
  \renewcommand\thefootnote{}\footnote{#1}%
  \addtocounter{footnote}{-1}%
  \endgroup
}
\newtheorem{corollary1}{Corollary}[lemma]
\date{}
\title{On Molecular Flow Velocity Meters}
\author{\IEEEauthorblockN{
Maryam Farahnak-Ghazani, Mahtab Mirmohseni, and Masoumeh Nasiri-Kenari
}\\
\IEEEauthorblockA{Sharif University of Technology}}
\begin{document}
\maketitle
\begin{abstract}
Flow velocity is an important characteristic of the fluidic mediums. In this paper, we introduce a molecular based flow velocity meter consisting of a molecule releasing node and a receiver that counts these molecules. We consider both flow velocity detection and estimation problems, which are employed in different applications. For the flow velocity detection, we obtain the maximum a posteriori (MAP) decision rule. To analyze the performance of the proposed flow velocity detector, we obtain the error probability, its Gaussian approximation and Chernoff information (CI) upper bound, and investigate the optimum and sub-optimum sampling times accordingly. We show that, for binary hypothesis, the sub-optimum sampling times using CI upper bound are the same. Further, the sub-optimum sampling times are close to the optimum sampling times. For the flow velocity estimation, we obtain the MAP and minimum mean square error (MMSE) estimators. We consider the mean square error (MSE) to investigate the error performance of the flow velocity estimators and obtain the Bayesian Cramer-Rao (BCR) and expected Cramer-Rao (ECR) lower bounds. Further, we obtain the optimum sampling times for each estimator. It is seen that the optimum sampling times for each estimator are nearly the same. The proposed flow velocity meter can be used to design a new modulation technique in molecular communication (MC), where information is encoded in the flow velocity of the medium instead of the concentration, type, or release time of the molecules. The setup and performance analysis of the proposed flow velocity detector and estimator for molecular communication system need further investigation.\blfootnote{This work was supported by the Iran National Science Foundation (INSF) Research Grant on Nano-Network Communications and the Research Center of Sharif University of Technology.}
\blfootnote{The authors are with the Department of Electrical Engineering, Sharif University of Technology, Tehran, Iran (email: maryam.farahnak@ee.sharif.edu, \{mirmohseni, mnasiri\}@sharif.edu)}
\end{abstract}

\section{Introduction}
Measuring medium flow velocity is an important problem with many applications; in molecular communication (MC) (for finding the channel state information),
in industry (for abnormality detection),
or in health-care (for measuring the blood flow velocity).
The channel state information (CSI) is needed at the receivers of MC systems \cite{jamali2016channel2}. The parameters of the channel that specify the CSI in flow-assisted diffusion-based systems include the distance between the transmitter and the receiver, diffusion coefficient of the molecules, and the flow velocity of the medium. There are multiple works that consider distance estimation in MC, including \cite{huang2013distance, noel2014bounds, wang2015distance, lin2019high, gulec2020distance}. Joint estimation of the channel parameters is investigated in \cite{noel2015joint}, and estimation of the channel impulse response is considered in \cite{jamali2016channel2, rouzegar2017channel}. 
However, to the best of our knowledge, there is no work that specifically considers either flow velocity estimation or detection in MC.
Regarding the application in industry, flow meters are used in oil and gas industry for anomaly detection \cite{barbariol2019machine}. In healthcare, measuring blood flow velocity is important for monitoring heart function \cite{yusheng2018numerical}.
The classic flow meters are mechanical devices which 
have certain applications based on the passing fluid, whose velocity is intended to be measured \cite{IJET10568}.  
One of the important applications of flow meters is to measure the flow velocity of the blood. As stated above, blood flow velocity measurement is important in medical applications for monitoring heart function in order to diagnose cardiovascular or other vascular diseases \cite{yusheng2018numerical}. Some traditional blood flow velocity measurement methods are {indicator method} \cite{lassen2011indicator, puri2014radioisotope, zierler2000indicator}, {finger plethysmography} \cite{grote2003finger}, and {electromagnetic based method} \cite{blendis1970comparative}. In \cite{lassen2011indicator, puri2014radioisotope, zierler2000indicator}, the blood flow velocity is measured by injecting indicator molecules and using mass balance equation. In \cite{li2016effect}, skin temperature measurement after receiving acupuncture manipulations is used to measure the blood flow velocity. In \cite{grote2003finger}, the finger blood flow velocity is measured using finger plethysmography. These methods have low resolution. Methods with higher resolution, based on the microfluidic technology, are {ultrasonic doppler method} and {laser doppler method} \cite{yusheng2018numerical}. 

In this paper, we use a molecular transmitter-receiver setup to measure the medium flow velocity. The molecules, which exist in the medium or released from a molecular source, can provide significant information, for instance, to design the molecular flow meter. Since the medium flow velocity affects the concentration of the received molecules, the flow velocity can be measured by monitoring the concentration changes. To this end, a molecular receiver can be employed to sense the concentration of the received molecules, and measure the flow velocity. 
This resembles a MC structure where a transmitter releases some molecules and a receiver senses the concentration of these molecules. MC has advantages in mediums that are more compatible with bio or chemical molecules like the human body or environmental applications. MC systems have been studied from different aspects, e.g., system modeling \cite{
Chou2013, 
Einolghozati2013, 
 Kadloor2012, Arjmandi2013}
capacity analysis \cite{aminian1, fekri2, Arjmandi2014}, coding and modulation techniques \cite{Arjmandi2013, Leeson2012, 
Mahfuz2013, IWCIT2016type}, inter-symbol interference (ISI) mitigation techniques \cite{IWCIT2016physical, Noel2014improving, cho2017effective, Mosayebi2014, IWCIT2016type, Movahednasab2015, farahnak2018medium}, and channel estimation \cite{jamali2016channel2, noel2015joint, rouzegar2017channel}. 
The idea of sensing the concentration of molecules to measure the flow velocity was also used in the indicator method, which has been first introduced in 1824 by Hering to measure the blood flow velocity. In this method, some indicator molecules are injected 
to the blood vessel and sampled from other part of the vascular system. Then using the mass balance equation, the mean value of the blood flow velocity is measured. In other words, the mean flow velocity is measured as the change in the fluid volume per unit time. When the change in the concentration of the indicator molecules is fixed, the mean flow velocity is written as the change in the indicator quantity per unit time divided by the change in the concentration of the indicator molecules. Hence, in this method, the steady state behavior of the system is considered.
Further, the indicator method is just devoted to blood flow velocity measurement and is studied in physiology. In this paper, using a MC analysis setup, we introduce a molecular flow meter, which can measure the flow velocity in any fluidic medium with laminar flow, i.e., Raynold number less than $2100$ ($\textrm{Re}<2100$). Further, we assume that the movement of the molecules is affected by both flow velocity and diffusion, and none of them is fully dominated. Hence, we consider the advection-diffusion equations to obtain the average value of the received concentration. Then, we determine the flow velocity by applying the conventional detection or estimation methods. 
As an important application, this flow velocity meter can be used to design a new modulation method in MC, i.e., instead of encoding the information on the properties of the released signal (concentration, type, or the release time of the molecules), we can encode the information on the properties of the medium specifically on the medium flow velocity, and at the receiver, we can decode the information by detecting the medium flow velocity. This modulation method can have advantages on the prior methods in the sense of simplicity of the transmitter.

The degrees of freedom in designing the proposed flow meter include the sensing times of the molecular receiver, which need to be optimized for better performance of the flow meter. For performance investigation, different metrics, such as the time it takes to detect a change in the velocity, and the error probability of the flow meter can be considered. 
The samples taken at the receiver are statistically dependent in general, and obtaining the optimum sampling times is a challenging work. Further, the restrictions in some receivers, like Ligand receivers which have memories, make the problem more challenging. We remark that the medium flow velocity that we want to measure may be a random process, which either exists in the medium or is intentionally generated for communication purposes. 

We consider both flow velocity detection and estimation problems with different applications and assumptions. In some applications,
like designing the flow-based modulation in MC, which is described above, the flow velocity can belong to a finite set with a cardinality equal to the number of messages transmitted per channel use, and we need to design a detector to obtain the flow velocity, and hereby decode the message.
In this case, we assume $M$ hypotheses for the velocity ($M$ different functions of time and location) and use hypotheses testing methods, \cite{poor2013introduction}, to detect the function. 
In some applications, like finding the channel state information in MC, the medium flow velocity can take a real value in general and we need to design an estimator to obtain its value. In this case, we assume a constant flow velocity (both in time and location), which is chosen from a prior probability distribution function (PDF), and apply estimation methods, \cite{poor2013introduction}, to obtain the flow velocity. In this paper, we mostly focus on the flow velocity detection, and at the end of the paper, we briefly consider the flow velocity estimation.

The design of a general flow meter requires knowing the exact statistics of the medium and the existing molecules in the medium, i.e., how molecules are generated and propagated. To study the effect of certain parameters on the performance of the flow meter, we need to simplify the reality by adopting a simple model. Hence, we make a few assumptions and study the effect of sampling time on the performance of the flow meter. We assume that the source of molecules is a node that releases some fixed molecules in some specific time instances, and the receiver is a transparent receiver, \cite{gohari2016information}, with a sampling decoder, i.e., we assume that the receiver has a volume that counts the number of molecules inside its volume
at some time instances. 
We consider an $L$-sample receiver and further assume that the samples at the receiver are statistically independent, which can be achieved if the samples are taken with some time apart. Also, we assume that there is no boundary condition on the medium, since obtaining the channel impulse response of the medium with time variant flow is a challenging work in presence of boundary conditions in the medium. Our main contributions are as follows:
\begin{itemize}
\item{We design a molecular flow velocity meter, counting the number of arrived molecules affected by the flow of the medium. Our setup consists of a molecule releasing node and a receiver that samples the number of counted molecules. }
\item{For the molecular flow velocity {\em detecter}:\\
-- We obtain the maximum-a-posteriori (MAP) decision rule and for the one-sample decoder, we obtain the optimum threshold.\\
-- We consider the performance analysis of the proposed detector. For this purpose, we derive the error probability, its Gaussian approximation, and Chernoff information (CI) upper bound on the error probability.\\
-- We obtain the optimum and sub-optimum sampling times by minimizing the error probability, its Gaussian approximation, and the CI upper bound. For $M=2$, it is seen that the sub-optimum sampling times using CI upper bound are equal. For $M>2$, when the number of samples, $L\rightarrow \infty$, it is seen that the sub-optimum sampling times yield to ${M \choose 2}$ times with ${M \choose 2}$ weights. 
}
\item{For the molecular flow velocity {\em estimator}:\\
-- We obtain the MAP estimator and for the one-sample receiver, we obtain a closed-form estimator.\\
-- We obtain the minimum mean square error (MMSE) estimator and further simplify the equations for the linear MMSE (LMMSE) estimator.\\
-- We investigate the mean square error (MSE) of the above estimators. Further, we obtain the Bayesian and expected Cramer-Rao lower bounds on the MSE.\\
-- We obtain the optimum sampling times that minimize the MSE. When $L \rightarrow \infty$, it can be seen that the sampling times for the MAP estimator yield to two times with two weights.}

\end{itemize}


The structure of the paper is as follows: In Section \ref{sec1}, we describe the proposed molecular flow velocity detector/estimator setup. In Section \ref{sec2:flow_detector}, we consider the flow velocity detector and obtain the MAP decision rule, and derive its performance. Further, we obtain the optimum and sub-optimum sampling times. In Section \ref{sec3:flow_estimator}, we consider the flow velocity estimator and obtain the MAP 
estimator. Then, we obtain the estimation error and the optimum sampling times.
The numerical results are given in Section \ref{Simulation}. Finally, in Section \ref{conclusion} we conclude the paper.

\textbf{Notation}: Throughout the paper, vectors are shown with bold letters and their magnitudes, i.e., norm 2 of the vectors, are shown with non-bold letters.

\vspace{-0.2em}
\section{Molecular Flow meter} \label{sec1}
We propose a molecular flow velocity meter to measure the medium flow velocity in a flow-assisted diffusion-based system. 
To do this, we assume that there is a node at the origin, which releases some constant number of molecules in some time instances, and there is a molecular receiver in point $\bm{r}_0$, which receives these molecules and computes the medium flow velocity.  Hence, $\bm{r}_0$ is a vector which connects the releasing node to the receiver. We note the direction of this connecting line with $\bm{d}$ and its value with $r_0$. Hence, $\bm{r}_0=r_0 \bm{d}$ (see Fig. \ref{fig_sys_model}). 
The releasing node may have different behaviors. Assume $g(\bm{r},t)$ be the concentration of released molecules at point $\bm{r}$ and in time $t$. In the following, we mention some of the possibilities of the releasing node:
\begin{figure}
\centering
\includegraphics[trim={0.2cm 0 0 0cm}, scale=0.63]{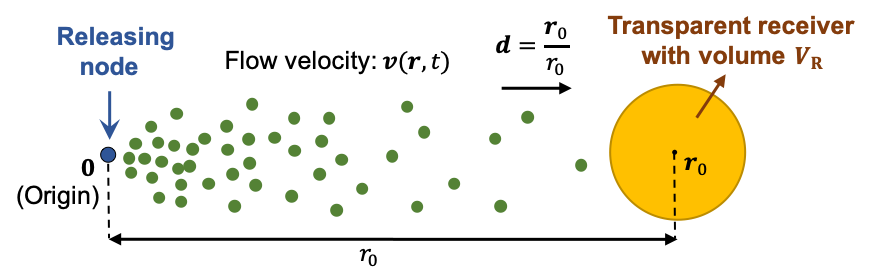}
\caption{The system model of the flow velocity meter}
\label{fig_sys_model}
\vspace{-1.5em}
\end{figure}

i) Burst releasing: a burst of molecules, noted by $\zeta$, is released at time instance $t=t_\textrm{r}$. For this releasing node we have $g(\bm{r},t)=\zeta \delta(t-t_\textrm{r})\delta(\bm{r})$.

ii) Pulse releasing: the molecules with rate $\gamma$ are constantly released starting at $t=t_\textrm{r}$. For this releasing node we have $g(\bm{r},t)=\gamma \delta(\bm{r})u(t-t_\textrm{r})$.

In this paper, we assume the {\em burst releasing} node.


\textbf{Channel model:} In the flow-assisted diffusion-based system, the movement of molecules is affected by both flow velocity and diffusion. Hence, the concentration of molecules at point $\bm{r}$ and in time $t$, noted by $c(\bm{r},t)$, in a medium with flow velocity $\bm{v}(\bm{r},t)$ satisfies the following advection-diffusion equation:
\begin{equation}\label{diff_eq}
\begin{aligned}
\frac{\partial}{\partial t}c(\bm{r},t)+\nabla. (\bm{v}(\bm{r},t)c(\bm{r},t))=D\nabla^2 c(\bm{r},t)+g(\bm{r},t),
\end{aligned}
\end{equation}
where $D$ is the diffusion coefficient of molecules.
When the medium flow is location invariant, i.e., $\bm{v}(\bm{r},t)=\bm{v}(t)$ (which means that the flow velocity is the same in every point of the medium and the change in the flow velocity of one point propagates to other points quickly), the advection-diffusion equation in \eqref{diff_eq}, reduces to:
\begin{equation}\label{difftvarflow}
\begin{aligned}
\frac{\partial }{\partial t}c(\bm{r},t)+\bm{v}(t).\nabla c(\bm{r},t)=D\nabla^2 c(\bm{r},t)+g(\bm{r},t).
\end{aligned}
\end{equation}

\textbf{Reception model:} We assume that the receiver is modeled by a sphere in 3-D with volume $V_\textrm{R}$ (radius $r_\textrm{R}$) and consider a transparent receiver, i.e., the receiver can perfectly count all molecules that fall into its volume. 
n
Denoting the mean number of received molecules in the $l$-th sample as $\Lambda_{l}$, the number of molecules counted by the receiver (noted by $Y_{l}$) follows a Poisson distribution with parameter $\Lambda_{l}$, i.e., $Y_{l} \sim \textrm{Poiss}(\Lambda_{l})$ \cite{gohari2016information}. If the radius of the receiver is sufficiently small compared to the distance between the transmitter and the receiver, $c(\bm{r},t)$ can be assumed uniform inside the receiver volume and $\Lambda_{l}$ can be approximated using the concentration of the received molecules at the origin of the receiver volume as $\Lambda_{l} \approx V_\textrm{R} c(\bm{r}_0,t_{l})$\cite{noel2014diffusive, noel2015joint}.
For simplicity of analysis, we assume that the observations $Y_1,...,Y_L$ are independent. This assumption can be made if the samples are taken with some time apart, i.e., the sampling times have sufficient distance from each other \cite{noel2014optimal}.


The impulse response of this system, $h(\bm{r},t)$, is defined as the concentration of molecules at point $\bm{r}$ and in time $t$, which is the solution of \eqref{difftvarflow}, for input $g(\bm{r},t)=\delta(\bm{r})\delta(t-t_\textrm{r})$. The channel impulse response when the medium flow velocity is location invariant (i.e., $\bm{v}(\bm{r},t)=\bm{v}(t)$) is obtained in \cite{Toshiaki2017} using Ito's calculus for the mean location of molecules, i.e., if $m(t)$ is the mean location of molecules, using Ito's calculus we have
$m(t)=\int_{t_\textrm{r}}^{t}\bm{v}(\tau)d{\tau}$. Hence, for 3-D diffusion,
$$h(\bm{r},t)=\frac{1[t>t_\textrm{r}]}{(4 \pi D (t-t_\textrm{r}))^{{3}/{2}}}\exp(-\frac{||\bm{r}-m(t)||^2}{4 D (t-t_\textrm{r})}).$$
The impulse response of the channel with time variant flow velocity can be written using the impulse response of the channel without flow velocity, noted by $h_0(\bm{r},t)$, as follows:
\begin{equation}\label{impulseresponse}
\begin{aligned}
h(\bm{r},t)=h_0(\bm{r}-\int_{t_\textrm{r}}^{t}\bm{v}(\tau)d{\tau},t),
\end{aligned}
\end{equation}
where
\begin{equation}\label{eq_h0}
\begin{aligned}
h_0(\bm{r},t)=\frac{1[t>t_\textrm{r}]}{(4 \pi D (t-t_\textrm{r}))^{{3}/{2}}}\exp(-\frac{||\bm{r}||^2}{4 D (t-t_\textrm{r})}).
\end{aligned}
\end{equation}

Now, using the channel impulse response we obtain $\Lambda_{l}$. Assuming the burst releasing node, we have $c(\bm{r},t)=\zeta h(\bm{r},t)$ and hence, $\Lambda_{l}= V_\textrm{R}c(\bm{r}_0,t_{l})=\zeta V_\textrm{R} h(\bm{r}_0,t_{l})$. Thus, according to \eqref{impulseresponse}, we obtain
\begin{align}\label{burstnode}
\Lambda_{l}=\zeta V_\textrm{R} h_0\big(\bm{r}_0-\int_{t_\textrm{r}}^{t_{l}}\bm{v}(\tau) d\tau,t_{l}\big).
\end{align}

\section{Flow velocity detector}\label{sec2:flow_detector}
Consider $M$ hypotheses $H_0,H_1,...,H_{M-1}$ corresponding to the flow velocities $\bm{v}_0(\bm{r},t)$, $\bm{v}_1(\bm{r},t),...,$ $\bm{v}_{M-1}(\bm{r},t)$. We let $I=\{0,...,M-1\}$ and denote the mean number of counted molecules at the receiver in sampling time $t_{l}$ for the hypothesis $H_i$ as $\lambda_{i,l}, i \in I, l=1,...,L$. For the location invariant flow velocity and the transparent receiver, from \eqref{burstnode}, we have $\lambda_{i,l}=V_\textrm{R}\zeta h_0(\bm{r}_0-\int_{t_\textrm{r}}^{t_{l}} \bm{v}_i(\tau) d\tau,t_{l})$.
 We assume that the prior probabilities of the hypotheses are equal, i.e., $\Prob(H_i)=\frac{1}{M}, i \in I$.

\begin{lemma}\label{lemma_threshold_L}
For a molecular flow velocity detector with $M$ hypotheses, and $L$-sample decoder at the receiver, the optimum MAP decision rule is obtained as
\begin{align}\label{decision_rule_csk}
\hat{i}=\arg\max_{i \in I} \sum_{l=1}^{L} y_l \ln (\lambda_{i,l})-\lambda_{i,l},
\end{align}
\end{lemma}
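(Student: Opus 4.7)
The plan is to apply the standard MAP rule and then specialize to the Poisson observation model established in the reception-model paragraph. Since the hypotheses are equiprobable ($\Prob(H_i)=1/M$), Bayes' rule reduces MAP to maximum likelihood: the optimum decision is $\hat{i}=\arg\max_{i\in I}\Prob(\bm{y}\mid H_i)$, where $\bm{y}=(y_1,\ldots,y_L)$ is the vector of counts produced by the $L$-sample decoder.

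Next I would exploit the independence of the samples $Y_1,\ldots,Y_L$ (which is assumed in the reception model, achievable by sufficient spacing of sampling times) together with $Y_l\mid H_i \sim \Po(\lambda_{i,l})$. The joint likelihood factorizes as
\begin{equation*}
\Prob(\bm{y}\mid H_i)=\prod_{l=1}^{L}\frac{\lambda_{i,l}^{y_l}\,e^{-\lambda_{i,l}}}{y_l!}.
\end{equation*}
Because $\ln(\cdot)$ is strictly increasing, maximizing this likelihood over $i$ is equivalent to maximizing its log. Taking the logarithm gives
\begin{equation*}
\ln \Prob(\bm{y}\mid H_i)=\sum_{l=1}^{L}\bigl(y_l \ln \lambda_{i,l}-\lambda_{i,l}\bigr)-\sum_{l=1}^{L}\ln(y_l!).
\end{equation*}

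Finally, I would observe that the term $\sum_l \ln(y_l!)$ depends only on the observations and not on the hypothesis index $i$, so it can be dropped from the $\arg\max$. This leaves exactly the rule stated in \eqref{decision_rule_csk}. There is no real obstacle here; the only subtlety worth flagging is that the argument relies on the independence assumption on $Y_1,\ldots,Y_L$ and on the Poisson approximation $Y_l\sim\Po(\lambda_{i,l})$ with $\lambda_{i,l}=V_\textrm{R}\zeta\,h_0(\bm{r}_0-\int_{t_\textrm{r}}^{t_l}\bm{v}_i(\tau)d\tau,t_l)$ given in \eqref{burstnode}, both of which were justified in Section \ref{sec1}.
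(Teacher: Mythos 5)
Your proposal is correct and follows essentially the same route as the paper's proof: equal priors reduce MAP to maximum likelihood, independence factorizes the Poisson likelihood, and taking the logarithm while dropping the hypothesis-independent $\sum_l\ln(y_l!)$ term yields \eqref{decision_rule_csk}. No gaps.
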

\begin{proof} Using MAP decision rule with equal prior probabilities for the hypotheses, we have
$$\hat{i}=\arg\max_{i \in I}\Prob(y_1,y_2,...,y_L|H_i).$$
For the independent observations, $\Prob(y_1,y_2,...,y_L|H_i)=\prod_{l=1}^{L} \Prob(y_l|H_i)$. The conditional probability distribution of $Y_l$ given $H_i$ assuming counting noise at the receiver is $\textrm{Poiss}(\lambda_{i,l})$ for $i \in I, l=1,...,L$. Hence, 
\begin{align}
\nonumber
\hat{i}&=\arg\max_{i \in I} \prod_{l=1}^{L} \frac{(\lambda_{i,l})^{y_l}\exp(-\lambda_{i,l})}{y_l!}\\\nonumber
&=\arg\max_{i \in I}\prod_{l=1}^{L} (\lambda_{i,l})^{y_l}\exp(-\lambda_{i,l})\\
&=\arg\max_{i \in I}\sum_{l=1}^{L}y_l \ln(\lambda_{i,l})-\lambda_{i,l}.
\end{align}
\end{proof}
\begin{corollary1}\label{corollary_lemma_threshold_L_1}
For 
binary hypothesis,
the optimum decision rule is simply obtained as 
\begin{align}\label{threshold2}
\sum_{l=1}^{L} w_l y_l \overset{H_0}{\underset{H_1}\gtrless}\beta,
\end{align}
where $w_l=\ln(\frac{\lambda_{0,l}}{\lambda_{1,l}})$ and $\beta=\sum_{l=1}^{L}(\lambda_{0,l}-\lambda_{1,l})$.
For $L=1$, the optimum decision rule is a simple threshold rule as $y_1 \overset{H_0}{\underset{H_1}\gtrless}\mathcal{T}$, with the threshold
$\mathcal{T}=\frac{\lambda_{0,1}-\lambda_{1,1}}{\ln\big(\frac{\lambda_{0,1}}{\lambda_{1,1}}\big)}.$
\end{corollary1}

\begin{lemma}\label{lemma_Pe_2}
The error probability in detecting the flow velocity, with $M$ hypotheses 
and $L$-sample decoder at the receiver, 
is obtained as follows:
\begin{equation}\label{Pe_Poisson}
\begin{aligned}
P_\textrm{e}&=1-\frac{1}{M}\sum_{i=0}^{M-1}\sum^{\infty}_{\substack{y_1,...,y_L=0, \\ \sum_{l=1}^{L} w_{i,j,l} y_l>\beta_{i,j},\\j\in I,~j \neq i}}\prod_{l=1}^{L} \frac{(\lambda_{i,l})^{y_l} \exp(-\lambda_{i,l})}{y_l!},
\end{aligned}
\end{equation}
where $w_{i,j,l}=\ln(\frac{\lambda_{i,l}}{\lambda_{j,l}})$ and $\beta_{i,j}=\sum_{l=1}^{L}(\lambda_{i,l}-\lambda_{j,l})$.
\end{lemma}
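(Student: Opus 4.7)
My approach is to write $P_\textrm{e}=1-P_\textrm{c}$, where $P_\textrm{c}$ is the probability of a correct decision, and to compute $P_\textrm{c}$ by conditioning on the true hypothesis, evaluating the conditional probability of correctness as a sum of the conditional joint PMF of $(Y_1,\ldots,Y_L)$ over the MAP decision region.

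First, I would apply the law of total probability together with the equiprobable prior assumption to obtain
$$P_\textrm{c}=\sum_{i=0}^{M-1}\Prob(H_i)\,\Prob(\hat{i}=i\mid H_i)=\frac{1}{M}\sum_{i=0}^{M-1}\Prob(\hat{i}=i\mid H_i).$$
Next, I would translate the MAP rule from Lemma \ref{lemma_threshold_L} into an explicit decision region for $H_i$: the event $\{\hat{i}=i\}$ is, by \eqref{decision_rule_csk}, the event that $\sum_{l=1}^{L} y_l\ln\lambda_{i,l}-\lambda_{i,l}$ strictly exceeds the analogous quantity for every other index $j$. Rearranging each of these $M-1$ pairwise comparisons yields exactly the inequalities
$$\sum_{l=1}^{L} w_{i,j,l}\,y_l>\beta_{i,j},\qquad j\in I,\ j\neq i,$$
with $w_{i,j,l}=\ln(\lambda_{i,l}/\lambda_{j,l})$ and $\beta_{i,j}=\sum_{l=1}^{L}(\lambda_{i,l}-\lambda_{j,l})$, matching the notation in the lemma.

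Finally, using the reception model -- under $H_i$ the samples $Y_1,\ldots,Y_L$ are independent with $Y_l\sim\textrm{Poiss}(\lambda_{i,l})$ -- the conditional joint PMF factors as $\prod_{l=1}^{L}(\lambda_{i,l})^{y_l}e^{-\lambda_{i,l}}/y_l!$. Summing this product over the nonnegative integer lattice intersected with the decision region above and substituting into $P_\textrm{c}$ gives the claimed expression for $P_\textrm{e}$.

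The only subtle point -- and hence the main obstacle I foresee -- is the handling of ties in the pairwise comparisons. Because the $Y_l$'s are integer-valued, the event $\sum_l w_{i,j,l}y_l=\beta_{i,j}$ can have strictly positive probability for some parameter choices, so the strict-inequality description of the decision region implicitly assumes a consistent tie-breaking rule (e.g., assigning ties arbitrarily to one index). I would remark that any such fixed rule affects only a set of outcomes that is already accounted for in $P_\textrm{c}$ for exactly one hypothesis, leaving the expression in \eqref{Pe_Poisson} valid as written.
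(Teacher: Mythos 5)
Your proposal is correct and follows essentially the same route as the paper's proof in Appendix A: condition on the true hypothesis with the uniform prior, express the MAP decision region as the pairwise inequalities $\sum_{l=1}^{L} w_{i,j,l}y_l>\beta_{i,j}$ for all $j\neq i$, and sum the factored Poisson PMF over that region. Your remark on tie-breaking is a fine point the paper itself passes over silently, so nothing further is needed.
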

\begin{proof} The proof is provided in Appendix \ref{AppendixProoflemma2}.
\end{proof}

\begin{corollary1}\label{corollary_lemma_Pe_2_1}
For binary hypothesis, the error probability is simplified as:
\begin{align}\label{Pe_M_2}
\nonumber
P_\textrm{e}&=\frac{1}{2}\big[1-\sum_{\substack{y_1,...,y_L=0,\\ \sum_{l=1}^{L} w_l y_l>\beta}}^{\infty}\big(\prod_{l=1}^{L} \frac{(\lambda_{0,l})^{y_l} \exp(-\lambda_{0,l})}{y_l!}\\
&\quad-\prod_{l=1}^{L} \frac{(\lambda_{1,l})^{y_l} \exp(-\lambda_{1,l})}{y_l!}\big)\big],
\end{align}
where $w_l$ and $\beta$ are defined in Corollary \ref{corollary_lemma_threshold_L_1}. Further, the Gaussian approximation on the error probability is obtained as
\begin{align}\label{Pe_approx_Gauss_M_2}
\nonumber
P_\textrm{e}\approx P_\textrm{e,G}&= \frac{1}{2}\big[1-Q(\frac{\beta-\sum_{l=1}^{L}w_l\lambda_{0,l}}{\sqrt{\sum_{l=1}^{L}w_l^2 \lambda_{0,l}}})\\
&\quad +Q(\frac{\beta-\sum_{l=1}^{L}w_l\lambda_{1,l}}{\sqrt{\sum_{l=1}^{L}w_l^2 \lambda_{1,l}}})\big],
\end{align}
where $Q(x)=\frac{1}{2\pi}\int_{x}^{\infty}\exp(-\frac{u^2}{2})du$. For $L=1$, the error probability reduces to
\small
\begin{equation}\label{Pe_L_1}
\begin{aligned}
P_\textrm{e}&=\frac{1}{2}\big[1-\sum_{y_1=0}^{\lfloor\mathcal{T}\rfloor}\frac{(\lambda_{0,1})^{y_1}\exp(-\lambda_{0,1})-(\lambda_{1,1})^{y_1}\exp(-\lambda_{1,1})}{y_1!}\big],
\end{aligned}
\end{equation}
\normalsize
for $\lambda_{0,1}>\lambda_{1,1}$, where $\mathcal{T}$ is defined in Corollary \ref{corollary_lemma_threshold_L_1}. 
Further, the Gaussian approximation on the error probability for $L=1$ is obtained as
\begin{align}\label{Pe_approx_Gauss_L_1}
P_\textrm{e}&=\frac{1}{2}\big[1+Q(\frac{\mathcal{T}-\lambda_{0,1}}{\sqrt{\lambda_{0,1}}})-Q(\frac{\mathcal{T}-\lambda_{1,1}}{\sqrt{\lambda_{1,1}}})\big],
\end{align}
\end{corollary1}
\begin{proof}
The proof is provided in Appendix \ref{AppendixProofcorollary3_1}.
\end{proof}

In the following, we obtain the Chernoff information (CI) upper bound on the error probability for the MAP detecter \cite{Chernoff1952, nielsen2011chernoff}.
Chernoff uses the inequality
\begin{align}\label{chernof_ineq}
\min(a,b)\leq a^s b^{1-s} \quad \forall s \in [0,1],
\end{align}
to upper bound the error probability of the MAP detector.
\begin{lemma}\label{lemma_chernoff_UPe_Lsps_1} The CI upper bound on the error probability with $M$ hypotheses 
and $L$-sample decoder 
is obtained as follows:
\begin{equation}\label{Peu_Lsps_1}
\begin{aligned}
&P_{\textrm{e}}\leq P_{\textrm{e,CI}}=\frac{M-1}{2}\max_{\substack{i_1,i_2 \in I, \\i_1\neq i_2} }
 \min_{s_{i_1,i_2}\in [0,1]} e^{-D_{i_1,i_2}(s_{i_1,i_2})},
\end{aligned}
\end{equation}
where $D_{i_1,i_2}(s_{i_1,i_2})=\sum_{l=1}^{L} [\lambda_{i_1,l}s_{i_1,i_2}+\lambda_{i_2,l}(1-s_{i_1,i_2})-\lambda_{i_1,l}^{s_{i_1,i_2}} \lambda_{i_2,l}^{1-s_{i_1,i_2}}]$. The optimum value of $s_{i_1,i_2}$ is the solution of the following equation:
\begin{equation}\label{opt_s_Lsps_1}
\begin{aligned}
\sum_{l=1}^{L}\big[\lambda_{i_1,l}-\lambda_{i_2,l}-\lambda_{i_1,l}(\frac{\lambda_{i_1,l}}{\lambda_{i_2,l}})^{s_{i_1,i_2}}\ln(\frac{\lambda_{i_1,l}}{\lambda_{i_2,l}})\big]=0.
\end{aligned}
\end{equation}
\end{lemma}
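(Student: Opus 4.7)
The plan is to reduce the $M$-ary MAP error probability to a collection of pairwise binary tests and then apply the Chernoff inequality \eqref{chernof_ineq} to each pair. Since the priors are equal, $P_\textrm{e}=\frac{1}{M}\sum_{i\in I}\Pr(\hat{i}\ne i|H_i)$, and an MAP error under $H_i$ requires some $j\ne i$ with $\Pr(\mathbf{y}|H_j)\ge\Pr(\mathbf{y}|H_i)$. A union bound gives
\[
P_\textrm{e}\le\frac{1}{M}\sum_{i\in I}\sum_{j\ne i}\Pr\big(\Pr(\mathbf{y}|H_j)\ge\Pr(\mathbf{y}|H_i)\,\big|\,H_i\big),
\]
and pairing the $(i,j)$ and $(j,i)$ terms recovers, for each unordered pair, twice the equal-prior binary error probability $P_\textrm{e}^{(i,j)}$ between $H_i$ and $H_j$, so $P_\textrm{e}\le\frac{2}{M}\sum_{i<j}P_\textrm{e}^{(i,j)}$.

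For each pair, I would write $P_\textrm{e}^{(i,j)}=\frac{1}{2}\sum_{\mathbf{y}}\min\big(\Pr(\mathbf{y}|H_i),\Pr(\mathbf{y}|H_j)\big)$ and apply \eqref{chernof_ineq} termwise with parameter $s\in[0,1]$ to obtain $P_\textrm{e}^{(i,j)}\le\frac{1}{2}\sum_{\mathbf{y}}\Pr(\mathbf{y}|H_i)^{s}\Pr(\mathbf{y}|H_j)^{1-s}$. Independence of the $L$ Poisson samples factorizes the outer sum into $L$ single-coordinate sums, each closed-form via the exponential series: $\sum_{y=0}^{\infty}\frac{(\lambda_{i,l}^{s}\lambda_{j,l}^{1-s})^y}{y!}e^{-\lambda_{i,l}s-\lambda_{j,l}(1-s)}=e^{-[\lambda_{i,l}s+\lambda_{j,l}(1-s)-\lambda_{i,l}^{s}\lambda_{j,l}^{1-s}]}$. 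Multiplying across $l$ yields exactly $e^{-D_{i,j}(s)}$, giving $P_\textrm{e}^{(i,j)}\le\tfrac{1}{2}\min_{s\in[0,1]}e^{-D_{i,j}(s)}$ after optimizing $s$.

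Combining the two steps and replacing each pair by the worst one gives
\[
P_\textrm{e}\le\frac{2}{M}\cdot\binom{M}{2}\cdot\frac{1}{2}\max_{i\ne j}\min_{s}e^{-D_{i,j}(s)}=\frac{M-1}{2}\max_{i\ne j}\min_{s}e^{-D_{i,j}(s)},
\]
which is \eqref{Peu_Lsps_1}; equation \eqref{opt_s_Lsps_1} then falls out of $\partial D_{i,j}/\partial s=0$. The function $D_{i,j}(s)$ is concave on $[0,1]$ because $\lambda_{i,l}^{s}\lambda_{j,l}^{1-s}=e^{s\ln\lambda_{i,l}+(1-s)\ln\lambda_{j,l}}$ is an exponential of an affine function (hence convex in $s$) and the remaining linear terms leave the sign unchanged, so the stationary point supplies the global maximizer of $D_{i,j}$, i.e., the global minimizer of $e^{-D_{i,j}(s)}$. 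The main obstacle I expect is the bookkeeping for the prefactor $\frac{M-1}{2}$: one must correctly track that each unordered pair contributes twice in the double sum, that the binary Chernoff carries its own factor $\frac{1}{2}$, and that passing from a sum over $\binom{M}{2}$ pairs to a max introduces $\binom{M}{2}=\frac{M(M-1)}{2}$. The Poisson moment computation and the first-order optimality condition are both routine.
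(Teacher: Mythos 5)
Your proposal follows essentially the same route as the paper's proof: union bound over ordered hypothesis pairs, pairing into $\min\{\Prob(\bm{y}|H_{i_1}),\Prob(\bm{y}|H_{i_2})\}$ (equivalently the pairwise binary error), the Chernoff inequality \eqref{chernof_ineq} applied termwise, factorization over the $L$ independent Poisson samples with the closed-form exponential sum giving $e^{-D_{i_1,i_2}(s_{i_1,i_2})}$, the $\binom{M}{2}$-to-max step yielding the $\frac{M-1}{2}$ prefactor, and stationarity of $D_{i_1,i_2}$ for \eqref{opt_s_Lsps_1}. The only addition is your concavity argument for $D_{i_1,i_2}(s)$, which the paper does not state but which only strengthens the stationarity claim.
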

\begin{proof} The proof is provided in Appendix \ref{AppendixProoflemma3}.
\end{proof} 

There is no closed form solution for the optimum value of $s_{i_1,i_2}$ in \eqref{opt_s_Lsps_1}. In Corrollary \ref{corollary_chernoff_UPe_Lsps_1_1}, we use Holder's inequality, \cite{beckenbach1966holder}, to simplify the bound and obtain a closed form solution for the sub-optimum value of $s_{i_1,i_2}$.

\begin{corollary1}\label{corollary_chernoff_UPe_Lsps_1_1}
Using Holder's inequality on the CI upper bound, the error probability is upper bounded as follows:
\begin{equation}\label{Peu_Lsps_2}
\begin{aligned}
&P_{\textrm{e}}\leq P_{\textrm{e,HCI}}=\frac{M-1}{2}\max_{\substack{i_1,i_2 \in I,\\ i_1\neq i_2} }\min_{s_{i_1,i_2} \in [0,1]} e^{-K_{i_1,i_2}(s_{i_1,i_2})},
\end{aligned}
\end{equation}
where $K(s_{i_1,i_2})=(\sum_{l=1}^{L} \lambda_{i_1,l})s_{i_1,i_2}+(\sum_{l=1}^{L}\lambda_{i_2,l})(1-s_{i_1,i_2})-(\sum_{l=1}^{L}\lambda_{i_1,l})^{s_{i_1,i_2}} (\sum_{l=1}^{L}\lambda_{i_2,l})^{1-s_{i_1,i_2}}$. The optimum value of $s_{i_1,i_2}$ is obtained as
\begin{align}\label{opt_s_Lsps_2}
s_{i_1,i_2}^*=\frac{\ln(\frac{\sum_{l=1}^{L}\lambda_{i_1,l}}{\sum_{l=1}^{L}\lambda_{i_2,l}}-1)-\ln\ln(\frac{\sum_{l=1}^{L}\lambda_{i_1,l}}{\sum_{l=1}^{L}\lambda_{i_2,l}})}{\ln(\frac{\sum_{l=1}^{L}\lambda_{i_1,l}}{\sum_{l=1}^{L}\lambda_{i_2,l}})}.
\end{align}
\end{corollary1}
\begin{proof} 
The proof is provided in Appendix \ref{AppendixProofcorollary4_1}.
\end{proof}

\begin{corollary1}\label{corollary_chernoff_UPe_Lsps_1_2}
For binary hypothesis, the CI upper bound in \eqref{Peu_Lsps_1} reduces to
\begin{align}\label{Peu_Lsps_3}
P_{\textrm{e}}\leq P_{\textrm{e,u}}=\frac{1}{2} \min_{s \in (0,1)}\exp(-D(s)),
\end{align}
where $D(s)=\sum_{l=1}^{L}[\lambda_{0,l}s+\lambda_{1,l}(1-s)-\lambda_{0,l}^{s} \lambda_{1,l}^{1-s}]$. The optimum value of $s$ is the solution of the following equation:
\begin{align}\label{opt_s_Lsps_3}
\sum_{l=1}^{L}\big[\lambda_{0,l}-\lambda_{1,l}-\lambda_{0,l}(\frac{\lambda_{0,l}}{\lambda_{1,l}})^{s}\ln(\frac{\lambda_{0,l}}{\lambda_{1,l}})\big]=0.
\end{align}
For $L=1$, the CI upper bound is simplified as
\begin{align}\label{Peu_1sps_1}
&P_{\textrm{e}}\leq P_{\textrm{e,u}}=\frac{1}{2} e^{-(\lambda_{0,1}s^*+\lambda_{1,1}(1-s^*)-\lambda_{0,1}^{s^*} \lambda_{1,1}^{1-s^*})},
\end{align}
where 
\begin{align}\label{opt_s_1sps_1}
s^*=\frac{\ln(\frac{\lambda_{0,1}}{\lambda_{1,1}}-1)-\ln\ln(\frac{\lambda_{0,1}}{\lambda_{1,1}})}{\ln(\frac{\lambda_{0,1}}{\lambda_{1,1}})}.
\end{align}
\end{corollary1}

\textbf{Optimum and Sub-optimum sampling times}:
\label{subsec:sampling_times}
The optimum sampling times, which minimize the error probability, are
\begin{align}\label{tm_opt_1}
[t_{1}^*,t_{2}^*,...,t_{L}^*]=\arg \min_{t_{1},t_{2},...,t_{L}} P_\textrm{e},
\end{align}
where $P_\textrm{e}$ is given in \eqref{Pe_Poisson}. Since the above optimization problem is hard to solve in general case, the optimum sampling times should be obtained numerically. We use the Gaussian approximation and CI upper bound on the error probability and obtain the sub-optimum sampling times as the solutions of the following optimization problems:
\begin{align}\label{sub_opt_sampling_times_G}
&[t_{1,\textrm{G}},t_{2,\textrm{G}},...,t_{L,\textrm{G}}]=\arg \min_{t_{1},t_{2},...,t_{L}} P_\textrm{e,G},\\\label{sub_opt_sampling_times}
&[t_{1,\textrm{CI}},t_{2,\textrm{CI}},...,t_{L,\textrm{CI}}]=\arg \min_{t_{1},t_{2},...,t_{L}} P_\textrm{e,CI}=\\\nonumber
&\quad\quad\arg \max_{t_{1},t_{2},...,t_{L}} \min_{\substack{i_1,i_2 \in I, \\ i_1 \neq i_2}}\max_{s_{i_1,i_2}}D_{i_1,i_2}(s_{i_1,i_2}),
\end{align}
where $P_\textrm{e,G}$ is defined in \eqref{Pe_approx_Gauss_M_2} and $P_\textrm{e,CI}$, $D_{i_1,i_2}$ are defined in \eqref{Peu_Lsps_1}. 
In Lemma \ref{lemma_opt_tm_inf_L}, using the extension of Caratheodory's theorem \cite{Vershynin2017}, we obtain the sub-optimum sampling times using CI upper bound, when $L \rightarrow \infty$, in Lemma \ref{lemma_opt_tm_gauss}, we obtain the sub-optimum sampling times for binary hypothesis using Gaussian approximation of the error probability, and in Lemma \ref{lemma_opt_tm_chernoff}, we obtain the sub-optimum sampling times for binary hypothesis using CI upper bound.

\begin{lemma}\label{lemma_opt_tm_inf_L}
The sub-optimum sampling times using CI upper bound when $L \rightarrow \infty$ are ${M \choose 2}$ times, $t_{l},l=1,...,{M \choose 2}$, with weight $w_l$, i.e., $Lw_l$ sampling times are equal to $t_{l}$, where $t_{l}$s and $w_l$s are obtained from the following optimization problem:
\begin{equation}\label{optimization_problem_lemma5}
\begin{aligned}
\max_{w_1,...,w_{{M \choose 2}}} \max_{t_{1},...,t_{{M \choose 2}}} \min_{\substack{i_1,i_2 \in I, \\ i_1 \neq i_2}} \max_{s_{i_1,i_2}} \sum_{l=1}^{{M \choose 2}} w_l f_{i_1,i_2}(t_{l},s_{i_1,i_2}),
\end{aligned}
\end{equation}
\normalsize
where $f_{i_1,i_2}(t_{l},s_{i_1,i_2})=\lambda_{i_1,l}s_{i_1,i_2}+\lambda_{i_2,l}(1-s_{i_1,i_2})-\lambda_{i_1,l}^{s_{i_1,i_2}} \lambda_{i_2,l}^{1-s_{i_1,i_2}}$.
\end{lemma}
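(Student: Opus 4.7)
The plan is to recast the sampling-time optimization \eqref{sub_opt_sampling_times} as an optimization over probability measures, and then invoke the extension of Carath\'eodory's theorem from \cite{Vershynin2017} to bound the support of the optimizer. Writing $D_{i_1,i_2}(s_{i_1,i_2})=L\int f_{i_1,i_2}(t,s_{i_1,i_2})\,d\mu_L(t)$, where $\mu_L=\frac{1}{L}\sum_{l=1}^L \delta_{t_l}$ is the empirical distribution of the $L$ sampling times, problem \eqref{sub_opt_sampling_times} reads (up to the irrelevant factor $L$)
\begin{equation*}
\max_{\mu_L}\ \min_{\substack{i_1,i_2\in I\\ i_1\ne i_2}}\ \max_{s_{i_1,i_2}}\int f_{i_1,i_2}(t,s_{i_1,i_2})\,d\mu_L(t).
\end{equation*}
As $L\to\infty$, empirical measures $\mu_L$ are weakly dense in the set of all Borel probability measures on the feasible time domain, so in the limit the decision variable becomes an arbitrary probability measure $\mu$.

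For the Carath\'eodory reduction, fix an optimizer $\mu^*$ of the limit problem and let $s^*_{i_1,i_2}$ denote the corresponding inner maximizers. Define the vector-valued linear functional
\begin{equation*}
F(\mu)=\Bigl(\int f_{i_1,i_2}(t,s^*_{i_1,i_2})\,d\mu(t)\Bigr)_{i_1<i_2}\in\mathbb{R}^{M\choose 2}.
\end{equation*}
Each coordinate is linear in $\mu$, so the image of $F$ over the set of probability measures is the convex hull of the one-point images $\{(f_{i_1,i_2}(t,s^*_{i_1,i_2}))_{i_1<i_2}:t\geq t_\textrm{r}\}\subset\mathbb{R}^{M\choose 2}$. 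The extension of Carath\'eodory's theorem in \cite{Vershynin2017} asserts that any point of such a convex hull in $\mathbb{R}^{M\choose 2}$ is a convex combination of at most ${M\choose 2}$ generators. Hence $F(\mu^*)=F(\mu')$ for some discrete $\mu'=\sum_{l=1}^{M\choose 2} w_l\,\delta_{t_l}$. Substituting $\mu'$ into the minimax objective reproduces the value attained by $\mu^*$ at $s^*$, and since the inner $\max_{s_{i_1,i_2}}$ under $\mu'$ can only be at least that value, $\mu'$ is itself optimal. Translating back to the empirical-measure parameterization, placing $Lw_l$ of the $L$ sampling times at $t_l$ reproduces $\mu'$, and the resulting maximization over $(w_l,t_l,s_{i_1,i_2})$ is exactly \eqref{optimization_problem_lemma5}.

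The main obstacle is the inner $\max_{s_{i_1,i_2}}$: the functional $\mu\mapsto\max_s\int f_{i_1,i_2}(t,s)\,d\mu(t)$ is only convex (not linear) in $\mu$, so Carath\'eodory cannot be applied to it directly. The workaround is to linearize by freezing $s^*_{i_1,i_2}$ at the value optimal for $\mu^*$, apply the theorem to the linear functionals, and verify afterwards that the reduced measure $\mu'$ does no worse under the true (nonlinear) objective because $s^*$ remains a feasible choice of $s$ for $\mu'$. Additional technical care is needed to justify weak compactness of the set of probability measures on the time axis, continuity of $f_{i_1,i_2}$ in $(t,s)$, and the weak-star convergence of empirical measures that underlies the passage to the limit $L\to\infty$.
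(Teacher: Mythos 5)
Your proposal is correct and follows essentially the same route as the paper's proof: both apply the connected-set extension of Carath\'eodory's theorem in $\mathbb{R}^{M\choose 2}$ to the vector of values $f_{i_1,i_2}(\cdot,s^*_{i_1,i_2})$ with the inner maximizers frozen, and then handle the nonlinearity of $\max_{s_{i_1,i_2}}$ by observing that the reduced (discrete, ${M\choose 2}$-atom) choice can only do at least as well since $s^*$ remains feasible, which is exactly the paper's two-sided inequality \eqref{eq_twoside}. Your empirical-measure packaging of the $L\to\infty$ limit is just a restatement of the paper's average-of-points argument, so the two proofs coincide in substance.
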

\begin{proof}
To obtain the optimum sampling times using CI upper bound in \eqref{Peu_Lsps_1}, we must solve
\begin{equation}\label{equation1}
\begin{aligned}
\max_{t_{1},...,t_{L}} \min_{\substack{i_1,i_2 \in I,\\ i_1 \neq i_2}} \max_{s_{i_1,i_2}} D_{i_1,i_2}(s_{i_1,i_2}),
\end{aligned}
\end{equation}
where $D_{i_1,i_2}(s_{i_1,i_2})=\sum_{l=1}^{L} [\lambda_{i_1,l}s_{i_1,i_2}+\lambda_{i_2,l}(1-s_{i_1,i_2})-\lambda_{i_1,l}^{s_{i_1,i_2}} \lambda_{i_2,l}^{1-s_{i_1,i_2}}]$. $\lambda_{i,l}$ is a function of the sampling time $t_{l}$. Hence, $D_{i_1,i_2}(s_{i_1,i_2})=\sum_{l=1}^{L}f_{i_1,i_2}(t_{l},s_{i_1,i_2})$. For each sampling time $t_{l}$, $l=1,...,L$, $\bm{a}_l=\big(f_{0,1}(t_{l},s_{0,1}),...,$ $f_{M-1,M}(t_{l},,s_{M-1,M})\big)$ is a point in $\mathbb{R}^{M \choose 2}$. The average of these points is
 $\frac{1}{L}\sum_{l=1}^{L}\bm{a}_l=\big(\frac{1}{L}\sum_{l=1}^{L}f_{0,1}(t_{l},s_{0,1}),...,\frac{1}{L}\sum_{l=1}^{L}f_{M-1,M}(t_{l},s_{M-1,M})\big).$
When $L \rightarrow \infty$, we have infinite points and the average point is in the convex hull of a set in $\mathbb{R}^{M \choose 2}$. Using the extension of Caratheodory's theorem for connected sets,\cite{Vershynin2017}, every point in the convex hull of a set $T$ in $\mathbb{R}^n$ can be expressed as a convex combination of at most $n$ points of $T$. Here, $n={M \choose 2}$, and we denote these ${M \choose 2}$ points by $\bm{b}_l=\big(f_{0,1}(t_{l}^{'},s_{0,1}),...,f_{M-1,M}(t_{l}^{'},,s_{M-1,M})\big)$, $l=1,..., {M \choose 2}$. Hence, $\frac{1}{L}\sum_{l=1}^{L}\bm{a}_l=\sum_{l=1}^{M \choose 2} $ $w_{l}\bm{b}_l$.
Thus, for a fixed $s_{i_1,i_2}$ and $i_1,i_2 \in I, i_1\neq i_2$, we have $\frac{1}{L}\sum_{l=1}^{L}f_{i_1,i_2}(t_{l},s_{i_1,i_2})=\sum_{l=1}^{M \choose 2} w_{l}f_{i_1,i_2}(t_{l}^{'},s_{i_1,i_2})$, and hence,
\begin{align}\label{eq_cara_1}
\nonumber
&\lim_{L \to \infty} \max_{t_{1},...,t_{L}} \min_{\substack{i_1,i_2 \in I,\\ i_1 \neq i_2}} \frac{1}{L} \sum_{l=1}^{L}f_{i_1,i_2}(t_{l},s_{i_1,i_2})=\\
&~ \max_{w_1,...,w_{{M \choose 2}}} \max_{t_{1}^{'},...,t_{{M \choose 2}}^{'}} \min_{\substack{i_1,i_2 \in I, \\ i_1 \neq i_2}} w_l f_{i_1,i_2}(t_{l}^{'},s_{i_1,i_2}).
\end{align}
Now, we are required to show the following expression:
\begin{subequations}\label{eq_twoside}
\begin{align}
&\lim_{L \to \infty} \max_{t_{1},...,t_{L}} \min_{\substack{i_1,i_2 \in I, \\ i_1 \neq i_2}} \max_{s_{i_1,i_2}} \frac{1}{L} \sum_{l=1}^{L}f_{i_1,i_2}(t_{l},s_{i_1,i_2}) =\label{eq_leftside}\\
&\max_{w_1,...,w_{{M \choose 2}}} \max_{t_{1}^{'},...,t_{{M \choose 2}}^{'}} \min_{\substack{i_1,i_2 \in I, \\i_1 \neq i_2}} \max_{s_{i_1,i_2}} \sum_{l=1}^{{M \choose 2}} w_l f_{i_1,i_2}(t_{l}^{'},s_{i_1,i_2}).\label{eq_rightside}
\end{align}
\end{subequations}
Let $s_{i_1,i_2}^*$ be the optimum value of $s_{i_1,i_2}$ for the optimization problem in \eqref{eq_leftside}. Then,
\begin{equation}
\begin{aligned}
&\lim_{L \to \infty} \max_{t_{1},...,t_{L}} \min_{\substack{i_1,i_2 \in I, \\i_1 \neq i_2}} \max_{s_{i_1,i_2}} \frac{1}{L} \sum_{l=1}^{L}f_{i_1,i_2}(t_{l},s_{i_1,i_2}) \\
&~ \geq \lim_{L \to \infty} \max_{t_{1},...,t_{L}} \min_{\substack{i_1,i_2 \in I, \\i_1 \neq i_2}} \frac{1}{L} \sum_{l=1}^{L}f_{i_1,i_2}(t_{l},s_{i_1,i_2}^*)\\
&~ \overset{(a)}= \max_{w_1,...,w_{{M \choose 2}}} \max_{t_{1}^{'},...,t_{{M \choose 2}}^{'}} \min_{\substack{i_1,i_2 \in I, \\i_1 \neq i_2}} w_l f_{i_1,i_2}(t_{l}^{'},s_{i_1,i_2}^*)\\\label{eq_1}
&~ = \max_{w_1,...,w_{{M \choose 2}}} \max_{t_{1}^{'},...,t_{{M \choose 2}}^{'}} \min_{\substack{i_1,i_2 \in I, \\ i_1 \neq i_2}} \max_{s_{i_1,i_2}} \sum_{l=1}^{{M \choose 2}}  w_l f_{i_1,i_2}(t_{l}^{'},s_{i_1,i_2}),
\end{aligned}
\end{equation}
where (a) follows from \eqref{eq_cara_1}. If $s_{i_1,i_2}^{**}$ is the optimum value of $s_{i_1,i_2}$  
for the optimization problem in \eqref{eq_rightside},
\begin{equation}
\begin{aligned}\label{eq_2}
&\max_{w_1,...,w_{{M \choose 2}}} \max_{t_{1}^{'},...,t_{{M \choose 2}}^{'}} \min_{\substack{i_1,i_2 \in I, \\ i_1 \neq i_2}} \max_{s_{i_1,i_2}} \sum_{l=1}^{{M \choose 2}} w_l f_{i_1,i_2}(t_{l}^{'},s_{i_1,i_2})\\
&\quad \geq \max_{w_1,...,w_{{M \choose 2}}} \max_{t_{1}^{'},...,t_{{M \choose 2}}^{'}} \min_{\substack{i_1,i_2 \in I, \\ i_1 \neq i_2}} \sum_{l=1}^{{M \choose 2}} w_l f_{i_1,i_2}(t_{l}^{'},s_{i_1,i_2}^{**})\\
&\quad \overset{(b)}=\lim_{L \to \infty} \max_{t_{1},...,t_{L}} \min_{\substack{i_1,i_2 \in I, \\ i_1 \neq i_2}} \frac{1}{L} \sum_{l=1}^{L}f_{i_1,i_2}(t_{l},s_{i_1,i_2}^{**})\\
&\quad =\lim_{L \to \infty} \max_{t_{1},...,t_{L}} \min_{\substack{i_1,i_2 \in I, \\ i_1 \neq i_2}} \max_{s_{i_1,i_2}} \frac{1}{L} \sum_{l=1}^{L}f_{i_1,i_2}(t_{l},s_{i_1,i_2}),
\end{aligned}
\end{equation}
\normalsize
where (b) follows from \eqref{eq_cara_1}. Hence, using \eqref{eq_1} and \eqref{eq_2}, we obtain \eqref{eq_twoside}.
\end{proof}

\begin{remark}
From the above lemma, for binary hypothesis, i.e., $M=2$, it can be easily seen that the sub-optimum sampling times when $L \rightarrow \infty$ are the same ($t_{1,\textrm{CI}}=...=t_{L,\textrm{CI}}$) and equal to the sampling time when $L=1$. This result is also true for the limited values of $L$, which is shown in Lemma \ref{lemma_opt_tm_chernoff}.\footnote{
Note that the problem of finding the optimum sampling times of the flow velocity detector can also be seen as either active hypothesis testing or channel discrimination problems, on which there are extensive literatures. In active hypothesis testing, the decision maker has control on the actions and the goal is to find the appropriate actions. 
In channel discrimination problem (with an extensive literature on quantum channels), there are a number of channels which we want to discriminate between and the inputs of the channels are chosen to have minimum error probability. The actions in active hypothesis testing and the inputs in channel discrimination problem are translated to sampling times in our model.
For $M=2$, in \cite{hayashi2009discrimination, nitinawarat2013controlled}, it is also shown that the actions that minimize CI upper bound are equal. 
}

\end{remark}

\begin{lemma}\label{lemma_opt_tm_gauss}
For binary hypothesis, the sub-optimum values of the sampling times $t_{1},t_{2},...,t_{L}$ using Gaussian approximation are the solutions of:
\begin{equation}\label{setofeqs_tms_opt_2}
\begin{aligned}
& \frac{1}{\sigma_0}e^{\frac{-(\beta-\mu_0)^2}{2\sigma_0^2}}\big[g_{0,l}-g_{1,l}-(\frac{g_{0,l}}{\lambda_{0,l}}-\frac{g_{1,l}}{\lambda_{1,l}})\lambda_{0,l}(1+\frac{1}{\sigma_0^2})\\
&~ -w_l g_{0,l}(1+\frac{1}{2\sigma_0^2})\big]-\frac{1}{\sigma_1}e^{\frac{-(\beta-\mu_1)^2}{2\sigma_1^2}} \big[g_{0,l}-g_{1,l}\\
&~ -(\frac{g_{0,l}}{\lambda_{0,l}}-\frac{g_{1,l}}{\lambda_{1,l}})\lambda_{1,l}(1+\frac{1}{\sigma_1^2})-w_l g_{1,l}(1+\frac{1}{2\sigma_1^2})\big]=0,
\end{aligned}
\end{equation}
for $l=1,...,L$, where $\mu_i=\sum_{l=1}^{L}w_l\lambda_{i,l}$, $\sigma_i=\sqrt{\sum_{l=1}^{L}w_l^2\lambda_{i,l}}$, $g_{i,l}=\frac{d}{d t_{l}}\lambda_{i,l}$, $i=0,1$, and $\beta$ and $w_l$ are defined in Corollary \ref{corollary_lemma_threshold_L_1}. For the location invariant flow velocity and the transparent receiver, we have $g_{i,l}=\lambda_{i,l}\big[\frac{-3}{2(t_{l}-t_\textrm{r})}+\frac{\langle \bm{v}_i(t_l),\bm{r}_0-\int_{t_\textrm{r}}^{t_{l}}\bm{v}_i(\tau)d\tau \rangle }{2D(t_l-t_\textrm{r})}+\frac{||\bm{r}_0-\int_{t_\textrm{r}}^{t_{l}}\bm{v}_i(\tau)d\tau||^2}{4D(t_l-t_\textrm{r})^2} \big]$, $i=0,1$.
\end{lemma}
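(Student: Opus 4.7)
The plan is to treat this as a straightforward first-order optimality condition: the sub-optimum $t_l$ by definition solve $\nabla_{\bm{t}} P_{\textrm{e,G}} = 0$, where $P_{\textrm{e,G}}$ is the Gaussian approximation given in \eqref{Pe_approx_Gauss_M_2}. So I would just differentiate \eqref{Pe_approx_Gauss_M_2} with respect to a generic $t_l$, set the result to zero, and rearrange into the form \eqref{setofeqs_tms_opt_2}. Using $Q'(x)=-\frac{1}{\sqrt{2\pi}}\exp(-x^2/2)$, the condition reads
\begin{equation*}
e^{-\frac{(\beta-\mu_0)^2}{2\sigma_0^2}}\frac{\partial}{\partial t_l}\!\left(\frac{\beta-\mu_0}{\sigma_0}\right)
\;=\;
e^{-\frac{(\beta-\mu_1)^2}{2\sigma_1^2}}\frac{\partial}{\partial t_l}\!\left(\frac{\beta-\mu_1}{\sigma_1}\right),
\end{equation*}
and the whole job is to expand the two inner derivatives.

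The expansion reduces to bookkeeping with the chain rule. Since $\beta$, $w_l$, $\mu_i$, $\sigma_i$ all depend on $t_l$ only through $\lambda_{0,l}$ and $\lambda_{1,l}$, and writing $g_{i,l}=\partial \lambda_{i,l}/\partial t_l$, I would first record the elementary pieces
\begin{align*}
\tfrac{\partial \beta}{\partial t_l}&=g_{0,l}-g_{1,l}, &
\tfrac{\partial w_l}{\partial t_l}&=\tfrac{g_{0,l}}{\lambda_{0,l}}-\tfrac{g_{1,l}}{\lambda_{1,l}},\\
\tfrac{\partial \mu_i}{\partial t_l}&=\lambda_{i,l}\tfrac{\partial w_l}{\partial t_l}+w_l\,g_{i,l}, &
\tfrac{\partial \sigma_i^2}{\partial t_l}&=2w_l\lambda_{i,l}\tfrac{\partial w_l}{\partial t_l}+w_l^2 g_{i,l},
\end{align*}
noting that $w_{l'}$ for $l'\neq l$ contributes nothing. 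Plugging these into $\frac{\partial}{\partial t_l}\!\big(\frac{\beta-\mu_i}{\sigma_i}\big)=\frac{1}{\sigma_i}\big(\frac{\partial\beta}{\partial t_l}-\frac{\partial\mu_i}{\partial t_l}\big)-\frac{\beta-\mu_i}{2\sigma_i^3}\,\frac{\partial\sigma_i^2}{\partial t_l}$ and collecting like terms should reproduce the bracketed factors in \eqref{setofeqs_tms_opt_2}; the coefficients $(1+1/\sigma_i^2)$ and $(1+1/(2\sigma_i^2))$ come from combining the $1/\sigma_i$ piece with the $(\beta-\mu_i)/\sigma_i^3$ piece once one uses the identities implied by the MAP boundary. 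This is the step where I expect the most algebra, and where the proof could get messy: the two $Q'$-contributions must be reorganized so that the common prefactor $\frac{1}{\sigma_i}e^{-(\beta-\mu_i)^2/(2\sigma_i^2)}$ factors out cleanly.

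Finally I would specialize $g_{i,l}$ to the setup of Section \ref{sec1}. For the location-invariant flow and transparent receiver, $\lambda_{i,l}=V_\textrm{R}\zeta\, h_0\!\big(\bm{r}_0-\int_{t_\textrm{r}}^{t_l}\bm{v}_i(\tau)d\tau,t_l\big)$ with $h_0$ in \eqref{eq_h0}, so I would differentiate $\ln\lambda_{i,l}$ with respect to $t_l$ using the logarithmic form
$\ln\lambda_{i,l}= \text{const}-\tfrac32 \ln(t_l-t_\textrm{r})-\frac{\|\bm{r}_0-\int_{t_\textrm{r}}^{t_l}\bm{v}_i(\tau)d\tau\|^2}{4D(t_l-t_\textrm{r})}$. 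Leibniz's rule for the integral gives $\frac{d}{dt_l}\int_{t_\textrm{r}}^{t_l}\bm{v}_i(\tau)d\tau=\bm{v}_i(t_l)$, and the quotient-rule derivative of the squared-norm term produces exactly the two contributions $\frac{\langle \bm{v}_i(t_l),\bm{r}_0-\int_{t_\textrm{r}}^{t_l}\bm{v}_i(\tau)d\tau\rangle}{2D(t_l-t_\textrm{r})}$ and $\frac{\|\bm{r}_0-\int_{t_\textrm{r}}^{t_l}\bm{v}_i(\tau)d\tau\|^2}{4D(t_l-t_\textrm{r})^2}$ in the stated expression for $g_{i,l}=\lambda_{i,l}\,\frac{d}{dt_l}\ln\lambda_{i,l}$. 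The resulting system \eqref{setofeqs_tms_opt_2} is only a necessary first-order condition; I would note (without attempting to prove) that convexity/uniqueness is not claimed, which is why it is called a \emph{sub-optimum} sampling time, and the lemma follows.
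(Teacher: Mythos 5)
Your proposal follows essentially the same route as the paper's Appendix proof: impose the first-order condition $\partial P_{\textrm{e,G}}/\partial t_l=0$, expand $\frac{\partial}{\partial t_l}\big(\frac{\beta-\mu_i}{\sigma_i}\big)$ via the quotient rule using exactly the same elementary derivatives of $\beta$, $w_l$, $\mu_i$, $\sigma_i$, and obtain $g_{i,l}$ by logarithmic differentiation of $h_0$ with Leibniz's rule, all of which matches the paper. The only soft spot is that your final collection into the coefficients $(1+1/\sigma_i^2)$ and $(1+1/(2\sigma_i^2))$ is asserted via an unspecified "identity implied by the MAP boundary" rather than carried out, but the paper's own proof likewise jumps from the listed derivatives directly to \eqref{setofeqs_tms_opt_2} without displaying that algebra, so your attempt is on par with it.
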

\begin{proof}
The proof is provided in Appendix \ref{AppendixProoflemma6}.
\end{proof}
\begin{lemma}\label{lemma_opt_tm_chernoff}
The sub-optimum values of the sampling times for binary hypothesis and $L$ sample receiver using CI upper bound are the same and equal to the sampling time of $L=1$, noted by, $t_1$, which is the solution of:
\begin{align}\label{setofeqs_tms_opt_1}
\begin{cases}
g_{0,1}s+g_{1,1}(1-s)-s g_{0,1}(\frac{\lambda_{1,1}}{\lambda_{0,1}})^{1-s}\\
\qquad\qquad\qquad\qquad -(1-s) g_{1,1} (\frac{\lambda_{0,1}}{\lambda_{1,1}})^{s}=0,\\
s=\frac{\ln(\frac{\lambda_{0,1}}{\lambda_{1,1}}-1)-\ln\ln(\frac{\lambda_{0,1}}{\lambda_{1,1}})}{\ln(\frac{\lambda_{0,1}}{\lambda_{1,1}})},
\end{cases}
\end{align}
where $g_{i,1}$ is defined in Lemma \ref{lemma_opt_tm_gauss}.
\end{lemma}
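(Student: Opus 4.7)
The plan is to exploit the additive structure of the exponent $D(s)$ in the binary CI upper bound to decouple the optimization across samples. Since for $M=2$ the outer $\min_{i_1\neq i_2}$ in \eqref{sub_opt_sampling_times} is trivial, the sub-optimum sampling times satisfy
\[
[t_{1,\textrm{CI}},\ldots,t_{L,\textrm{CI}}]=\arg\max_{t_{1},\ldots,t_{L}}\max_{s\in[0,1]}D(s),
\]
with $D(s)=\sum_{l=1}^{L}f(t_{l},s)$ and $f(t,s)=\lambda_{0}(t)s+\lambda_{1}(t)(1-s)-\lambda_{0}(t)^{s}\lambda_{1}(t)^{1-s}$, where I write $\lambda_{i}(t_{l})$ for $\lambda_{i,l}$ to emphasize the dependence on $t_{l}$.

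First I would use $\min_{s}\exp(-D(s))=\exp(-\max_{s}D(s))$ to convert the problem to a pure maximization, and then swap the two maxima (valid for any real-valued function) to obtain
\[
\max_{t_{1},\ldots,t_{L}}\max_{s}D(s)=\max_{s}\max_{t_{1},\ldots,t_{L}}\sum_{l=1}^{L}f(t_{l},s).
\]
For any fixed $s$ the inner problem is separable: each $t_{l}$ enters only its own summand, so the maximum is attained when every $t_{l}$ equals the common maximizer $t^{\ast}(s)=\arg\max_{t}f(t,s)$, giving $\max_{t_{1},\ldots,t_{L}}D(s)=L\,f(t^{\ast}(s),s)$. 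Hence
\[
\max_{t_{1},\ldots,t_{L}}\max_{s}D(s)=L\,\max_{t,s}f(t,s),
\]
which is exactly $L$ times the $L=1$ objective of \eqref{Peu_1sps_1}. The joint optimizer $(t^{\ast},s^{\ast})$ is independent of $L$, so $t_{1,\textrm{CI}}=\cdots=t_{L,\textrm{CI}}=t^{\ast}$.

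It remains to identify the stationarity conditions at $(t^{\ast},s^{\ast})$. The condition $\partial_{s}f=0$ recovers \eqref{opt_s_1sps_1}, i.e.\ the second equation of \eqref{setofeqs_tms_opt_1}. Differentiating $f$ in $t$ and using $\lambda_{0}^{s-1}\lambda_{1}^{1-s}=(\lambda_{1}/\lambda_{0})^{1-s}$ together with $\lambda_{0}^{s}\lambda_{1}^{-s}=(\lambda_{0}/\lambda_{1})^{s}$, the condition $\partial_{t}f=0$ at $(t_{1},s)$ becomes
\[
g_{0,1}s+g_{1,1}(1-s)-s\,g_{0,1}\Bigl(\tfrac{\lambda_{1,1}}{\lambda_{0,1}}\Bigr)^{1-s}-(1-s)\,g_{1,1}\Bigl(\tfrac{\lambda_{0,1}}{\lambda_{1,1}}\Bigr)^{s}=0,
\]
which is the first equation in \eqref{setofeqs_tms_opt_1}. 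Together, these give the claimed system.

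The only real content is the separability observation and the max/max swap; the derivative computation is routine. The mild obstacle worth mentioning is that $\arg\max_{t}f(t,s)$ must be interpreted as an interior stationary point under the smoothness of $\lambda_{0,l}$ and $\lambda_{1,l}$ in $t_{l}$ (guaranteed by \eqref{eq_h0}), so that the first-order condition is both necessary and the natural characterization of the sub-optimum time; boundary behavior is excluded since $f(t,s)\to 0$ as $t\to t_{\textrm r}^{+}$ or $t\to\infty$ while $f>0$ in between for any $s\in(0,1)$ whenever $\lambda_{0,1}\neq\lambda_{1,1}$.
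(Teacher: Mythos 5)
Your proposal is correct and follows essentially the same route as the paper's proof: both exploit the additive structure of $D(s)=\sum_{l=1}^{L}f(t_l,s)$ to reduce the $L$-sample optimization to the $L=1$ problem, and then read off the first-order conditions $\partial f/\partial t_1=0$ and $\partial f/\partial s=0$, which give exactly the system \eqref{setofeqs_tms_opt_1}. The only cosmetic difference is that you establish $t_{1,\textrm{CI}}=\cdots=t_{L,\textrm{CI}}$ via the max/max swap and per-sample separability, whereas the paper sets the joint gradient to zero and infers equality of the $t_l$ from the identical per-sample stationarity equations.
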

\begin{proof}
The proof is provided in Appendix \ref{AppendixProoflemma7}.
\end{proof}

\section{Flow velocity estimator} \label{sec3:flow_estimator}
Here, we obtain the estimation of the flow velocity for the $L$-sample receiver with independent observations $y_1,...,y_L$ in time instances $t_1,...,t_L$. We denote the mean number of the received molecules in sampling time $t_l$ for the flow velocity $\bm{v}$ as $\lambda_{l}(\bm{v}), l=1,...,L$.
For the transparent receiver, $\lambda_{l}(\bm{v})=V_\textrm{R}\zeta h_0(\bm{r}_0- \bm{v}.(t_l-t_\textrm{r}),t_l)$. We obtain the MAP and MMSE estimators for a randomly chosen constant flow velocity, i.e., $\bm{v}(\bm{r},t)=\bm{v}=(v_x,v_y,v_z)$.
We assume that $v_x$, $v_y$, and $v_z$ are independent with prior probability distribution functions (PDF) as $p_x(v_x)$, $p_y(v_y)$, and $p_z(v_z)$, respectively. To investigate the error performance of the estimators, we consider the minimum mean square error (MSE) of the estimators and obtain the Bayesian Cramer-Rao (BCR) lower bound on the MSE of the estimators.

\begin{lemma}\label{lemma_MAP_est_L}
For a molecular flow velocity estimator to estimate randomly chosen constant flow velocity, the MAP estimator is obtained as
\begin{equation}\label{estimation_rule_csk}
\begin{aligned}
\nonumber
\hat{\bm{v}}=\arg\max_{\bm{v}}\sum_{l=1}^{L} [y_l \ln (\lambda_{l}(\bm{v}))-\lambda_{l}(\bm{v})]+\ln(p_{x,y,z}(\bm{v})),
\end{aligned}
\end{equation}
where $p_{x,y,z}(\bm{v})=p_x(v_x)p_y(v_y)p_z(v_z)$. Hence, if $p_{x,y,z}(\bm{v})$ and $\lambda_{l}(\bm{v})$ are differentiable with respect to $\bm{v}$, $\hat{v}_x$, $\hat{v}_y$, and $\hat{v}_z$ are the solutions of the following set of equations:
\begin{equation}\label{estimation_rule_2}
\begin{aligned}
\sum_{l=1}^{L} \frac{1}{\lambda_{l}(\bm{v})}.\frac{\partial \lambda_{l}(\bm{v})}{\partial v_i}.(y_l-\lambda_l(\bm{v}))+\frac{1}{p_i(v_i)}.\frac{d p_i(v_i)}{d v_i}=0.
\end{aligned}
\end{equation}
For the transparent receiver, we have $\frac{\partial \lambda_{l}(\bm{v})}{\partial v_i}=\frac{(\bm{r}_{0,i}-v_i.(t_l-t_\textrm{r}))}{2D}\lambda_l(\bm{v})$, $ i \in \{x,y,z\}$.
\end{lemma}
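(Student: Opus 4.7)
The plan is to derive the MAP estimator by direct application of Bayes' rule, then specialize to the differentiable case by first-order optimality, and finally compute the partial derivative of $\lambda_l(\bm{v})$ for the transparent receiver from the explicit Gaussian form of $h_0$ in \eqref{eq_h0}.

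First I would write the posterior as $p(\bm{v}\mid y_1,\ldots,y_L)\propto p(y_1,\ldots,y_L\mid\bm{v})\,p_{x,y,z}(\bm{v})$, where the joint prior factors as $p_x(v_x)p_y(v_y)p_z(v_z)$ by the independence assumption. Using the independence of the samples and the Poisson reception model $Y_l\mid\bm{v}\sim\mathrm{Poiss}(\lambda_l(\bm{v}))$, I get $p(y_1,\ldots,y_L\mid\bm{v})=\prod_{l=1}^{L}\frac{\lambda_l(\bm{v})^{y_l}e^{-\lambda_l(\bm{v})}}{y_l!}$. Maximizing the posterior is equivalent to maximizing its logarithm; taking logs and dropping the $\bm{v}$-independent term $-\sum_l\ln(y_l!)$ gives exactly the criterion
\begin{equation*}
\hat{\bm{v}}=\arg\max_{\bm{v}}\sum_{l=1}^{L}\bigl[y_l\ln\lambda_l(\bm{v})-\lambda_l(\bm{v})\bigr]+\ln p_{x,y,z}(\bm{v}).
\end{equation*}

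Next, assuming $p_{x,y,z}$ and $\lambda_l$ are differentiable in $\bm{v}$, I would obtain the stationarity conditions by setting $\partial/\partial v_i$ of the objective to zero for $i\in\{x,y,z\}$. Since $\ln p_{x,y,z}(\bm{v})=\ln p_x(v_x)+\ln p_y(v_y)+\ln p_z(v_z)$, only the $i$-th term of the log-prior contributes to $\partial/\partial v_i$, yielding $\frac{1}{p_i(v_i)}\frac{d p_i(v_i)}{d v_i}$. Differentiating $y_l\ln\lambda_l(\bm{v})-\lambda_l(\bm{v})$ gives $\bigl(\frac{y_l}{\lambda_l(\bm{v})}-1\bigr)\frac{\partial\lambda_l(\bm{v})}{\partial v_i}=\frac{1}{\lambda_l(\bm{v})}\frac{\partial\lambda_l(\bm{v})}{\partial v_i}(y_l-\lambda_l(\bm{v}))$, which summed over $l$ and added to the prior term reproduces the displayed set of equations.

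Finally, for the transparent receiver I would substitute $\lambda_l(\bm{v})=V_\textrm{R}\zeta\,h_0\bigl(\bm{r}_0-\bm{v}(t_l-t_\textrm{r}),t_l\bigr)$ with $h_0$ from \eqref{eq_h0}, so that the only $v_i$-dependence is through the exponent $-\|\bm{r}_0-\bm{v}(t_l-t_\textrm{r})\|^2/(4D(t_l-t_\textrm{r}))$. The chain rule gives $\frac{\partial}{\partial v_i}\|\bm{r}_0-\bm{v}(t_l-t_\textrm{r})\|^2=-2(t_l-t_\textrm{r})(\bm{r}_{0,i}-v_i(t_l-t_\textrm{r}))$, and canceling the factor $2(t_l-t_\textrm{r})$ against the denominator yields the stated $\frac{\partial\lambda_l(\bm{v})}{\partial v_i}=\frac{\bm{r}_{0,i}-v_i(t_l-t_\textrm{r})}{2D}\lambda_l(\bm{v})$. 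The only mildly delicate point is making sure the $\arg\max$ is interpreted over the support of $p_{x,y,z}$ (so $\ln p_{x,y,z}(\bm{v})$ is finite), and noting that the stationarity equations are only necessary—not sufficient—conditions, so in practice one still has to pick the global maximizer among the critical points; this is a standard caveat rather than a technical obstacle.
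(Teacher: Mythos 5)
Your proposal is correct and follows essentially the same route as the paper, which simply notes that the proof parallels the MAP detector of Lemma~1: write the posterior with the factored prior and independent Poisson likelihoods, take logarithms, and (for the stationarity conditions and the transparent-receiver derivative) differentiate the explicit Gaussian form of $h_0$. Your version merely spells out the differentiation steps that the paper leaves implicit, so there is no substantive difference.
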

\begin{proof} 
The proof is straightforward similar to the proof of MAP detector in Lemma \ref{lemma_threshold_L}.
\end{proof}

\begin{corollary1} \label{corollary_MAP_est_L}
For one-sample receiver and the location invariant flow velocity in the direction of the connecting line between the releasing node and the transparent receiver, i.e, $\bm{v}=v \bm{d}$, where $\bm{d}=\frac{\bm{r}_0}{r_0}$, with uniform priori PDF for $v$ in the range $S_v=[v_\textrm{min},v_\textrm{max}]$, the MAP estimator of $v$ is obtained as
\begin{equation}\label{estimation_rule_csk_map}
\begin{aligned}
\hat{v}=\begin{cases}
v_1, \quad &\textrm{if}~y_1 \geq \lambda_1(v_1 \bm{d}), v_1 \in S_v,\\
v_2, \quad &\textrm{if}~0 < y_1 < \lambda_1(v_1 \bm{d}), v_2 \in S_v,\\
v_3, \quad &\textrm{if}~0 < y_1 < \lambda_1(v_1 \bm{d}), v_3 \in S_v,\\
v_{\textrm{min}},\quad &\textrm{if}~A_1\cap B_1,\\
v_{\textrm{max}},\quad &\textrm{if}~A_2\cap B_2,
\end{cases}
\end{aligned}
\end{equation}
where $v_1=\frac{r_0}{t_1-t_\textrm{r}}$, $v_2=\frac{r_0+\sqrt{-4D (t_1-t_\textrm{r})(\ln{y_l}-\ln{(\lambda_{1}(v_1\bm{d}))})}}{t_l-t_\textrm{r}}$, $v_3=\frac{r_0-\sqrt{-4D (t_1-t_\textrm{r})(\ln{y_l}-\ln{(\lambda_{1}(v_1\bm{d}))})}}{t_l-t_\textrm{r}}$, $A_1=\{y_1 \geq \lambda_1(v_1 \bm{d}), v_1 \notin S_v\} \cup \{0 <y_1 < \lambda_1(v_1 \bm{d}), v_2 \notin S_v, v_3 \notin S_v\} \cup \{y_1=0 \big\}$, $A_2=\{y_1 \geq \lambda_1(v_1 \bm{d}), v_1 \notin S_v\} \cup \{0 <y_1 < \lambda_1(v_1 \bm{d}), v_2 \notin S_v, v_3 \notin S_v\}\cup \{y_1=0 \big\}$,
$B_1=\{R_\textrm{est,u}(v_\textrm{min}) \geq R_\textrm{est,u}(v_\textrm{max})\}$, and $B_2=\{R_\textrm{est,u}(v_\textrm{min})\leq R_\textrm{est,u}(v_\textrm{max}\}$, in which $R_\textrm{est,u}(v)=y_1 \ln (\lambda_{1}(v\bm{d}))-\lambda_{1}(v\bm{d})$. Note that when the estimator gives two values, one of them is chosen randomly as the estimated value.

\begin{proof}The proof is provided in Appendix \ref{AppendixProofcorrollary8_1}.
\end{proof}
\end{corollary1}

\begin{corollary1}
If the sampling times are equal, we have $\lambda_{1}(\bm{v})=...=\lambda_L(\bm{v})$. Hence, from \eqref{estimation_rule_2}, we should find the solutions of 
\begin{equation}\label{eq_equal_sampling_times}
\begin{aligned}
\frac{1}{\lambda_{l}(\bm{v})}\frac{\partial \lambda_{l}(\bm{v})}{\partial v_i}.(\sum_{l=1}^{L}y_l-L\lambda_l(\bm{v}))+\frac{1}{p_i(v_i)}\frac{d p_i(v_i)}{d v_i}=0,
\end{aligned}
\end{equation}
for $ i \in \{x,y,z\}$, to obtain the estimated values of $v_x,v_y$, and $v_z$. For the transparent receiver and the flow velocity in the direction of the releasing node and the receiver with uniform prior pdf for its magnitude, we should find the solution of $(r_{0}-v.(t_1-t_\textrm{r})).(\frac{1}{L}\sum_{l=1}^{L}y_l-\lambda_{1}(v\bm{d}))=0$. Hence, the procedure to obtain the estimated value of $v$ is similar to the one-sample receiver which is obtained in Corollary \ref{corollary_MAP_est_L}, with the difference that we should use $\frac{1}{L}\sum_{l=1}^{L}y_l$ instead of $y_l$ in the equations, i.e., we should take the average of the samples and replace it as the observation value in the one-sample receiver.
\end{corollary1}

\begin{lemma}\label{lemma_MMSE_est_L}
The MMSE estimator to estimate randomly chosen constant flow velocity with finite mean and variance is obtained as
\begin{equation}\label{estimation_rule_MMSE}
\begin{aligned}
\hat{v}_i&=\E[v_i|y_1,...,y_L]=\frac{\int v_i \exp(R_{\textrm{est}}(\bm{v})) dv_z dv_y dv_x}{\int \exp(R_{\textrm{est}}(\bm{v}))  dv_z dv_y dv_x },
\end{aligned}
\end{equation}
for $i \in \{x,y,z\}$, where $R_{\textrm{est}}(\bm{v})=\sum_{l=1}^{L} [y_l \ln (\lambda_{l}(\bm{v}))-\lambda_{l}(\bm{v})]+\ln(p_{x,y,z}(\bm{v}))$. The linear MMSE (LMMSE) estimator is obtained as:
\begin{equation}\label{estimation_rule_csk_MMSE}
\begin{aligned}
\hat{v}_i=\sum_{l=1}^{L}\frac{\C(Y_l,v_i)}{\V(Y_l)}(y_l-\E[Y_l])+\E[v_i],\quad i \in \{x,y,z\},
\end{aligned}
\end{equation}
where for $l=1,...,L$ and $i\in\{x,y,z\}$,
\begin{align}
\E[Y_l]&=\int p_{x,y,z}(\bm{v})\lambda_l(\bm{v}) dv_z dv_y dv_x,\\\nonumber
\E[Y_l^2]&=\int p_{x,y,z}(\bm{v})\lambda_l(\bm{v})(1+\lambda_l(\bm{v}))  dv_z dv_y dv_x,\\\nonumber
\C(Y_l,v_i)
&=\int p_{x,y,z}(\bm{v}) (v_i-\E[v_i])\lambda_l(\bm{v}) dv_z dv_y dv_x.
\end{align}
\end{lemma}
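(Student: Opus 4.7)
The plan is to prove both parts by standard Bayesian arguments: the MMSE formula follows from the definition of the conditional mean together with Bayes' rule and the Poisson likelihood, while the LMMSE formula follows from the classical orthogonality-principle formula specialized to the case of uncorrelated observations (which is guaranteed by the sample-independence assumption already made in Section \ref{sec1}). The closed-form moment expressions are then obtained by the tower property.

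First I would derive \eqref{estimation_rule_MMSE}. The MMSE estimator of $v_i$ is $\E[v_i \mid y_1,\ldots,y_L]$. Using Bayes' rule, $p(\bm{v}\mid y_1,\ldots,y_L) \propto p(y_1,\ldots,y_L\mid \bm{v})\,p_{x,y,z}(\bm{v})$, and invoking conditional independence of the samples together with $Y_l\mid \bm{v}\sim \textrm{Poiss}(\lambda_l(\bm{v}))$, the log-posterior equals
\begin{equation*}
\ln p(\bm{v}\mid y_1,\ldots,y_L)=R_{\textrm{est}}(\bm{v})+C(y_1,\ldots,y_L),
\end{equation*}
where $C$ collects the $\bm{v}$-independent factors (the $1/y_l!$ terms and the evidence normalizer). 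Exponentiating, normalizing, and integrating $v_i$ against the resulting posterior yields the ratio-of-integrals expression in \eqref{estimation_rule_MMSE}.

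For \eqref{estimation_rule_csk_MMSE}, I would start from the classical LMMSE formula $\hat{\bm{v}}=\E[\bm{v}]+\C(\bm{v},\bm{Y})\,\C(\bm{Y})^{-1}(\bm{y}-\E[\bm{Y}])$. The marginal independence of $Y_1,\ldots,Y_L$ assumed in the reception model makes $\C(\bm{Y})$ diagonal with entries $\V(Y_l)$, which decouples the estimator component-wise and produces the single-index sum in \eqref{estimation_rule_csk_MMSE}. The required moments are then computed via the tower property using $\E[Y_l\mid\bm{v}]=\V(Y_l\mid\bm{v})=\lambda_l(\bm{v})$: this gives $\E[Y_l]=\E_{\bm{v}}[\lambda_l(\bm{v})]$, $\E[Y_l^2]=\E_{\bm{v}}[\lambda_l(\bm{v})+\lambda_l(\bm{v})^2]$, and $\C(Y_l,v_i)=\E[v_i\lambda_l(\bm{v})]-\E[v_i]\E[Y_l]=\E[(v_i-\E[v_i])\lambda_l(\bm{v})]$. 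Expressing each outer expectation as an integral against the prior $p_{x,y,z}$ reproduces the stated closed forms.

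The main obstacle is expositional rather than technical: one must be careful to distinguish the \emph{conditional} independence of the samples given $\bm{v}$ (which is what legitimizes the product-form Poisson likelihood used in the MMSE derivation) from the \emph{marginal} independence assumed in Section \ref{sec1}, which is the stronger property needed to diagonalize $\C(\bm{Y})$ and justify the per-sample decoupling of the LMMSE estimator. Once this distinction is spelled out, every remaining step is a routine application of the tower property and linearity of expectation.
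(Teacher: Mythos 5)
Your proposal matches the paper's proof essentially step for step: the MMSE part is the same Bayes-rule computation with the product-form Poisson likelihood, giving the ratio-of-integrals expression with $R_{\textrm{est}}(\bm{v})$ as the log-posterior up to $\bm{v}$-independent terms, and the LMMSE part is the same classical formula $\hat{\bm{v}}=\bm{y}A+\bm{b}$ with the independence assumption making $C_{YY}$ diagonal, followed by tower-property evaluation of $\E[Y_l]$, $\E[Y_l^2]$, and $\C(Y_l,v_i)$. Your explicit caveat distinguishing conditional independence given $\bm{v}$ (needed for the product likelihood) from marginal uncorrelatedness of the $Y_l$ (needed to diagonalize $C_{YY}$) is a point the paper leaves implicit under its single blanket independence assumption, but it does not alter the argument.
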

\begin{proof}
The proof is provided in Appendix \ref{AppendixProoflemma9}.
\end{proof}

\begin{corollary1}\label{corrolary_LMMSE}
For $\bm{v}=v\bm{d}$, with uniform prior PDF for $v$ in range $S_v=[v_\textrm{min},v_\textrm{max}]$, the MMSE estimator is obtained as
\begin{align}
&\hat{v}=\frac{\int_{S_v}  v \exp\big(\sum_{l=1}^{L}[y_l \ln(\lambda_l(v\bm{d}))-\lambda_l(v\bm{d})]\big) dv}{\int_{S_v} \exp\big(\sum_{l=1}^{L}[y_l \ln(\lambda_l(v\bm{d}))-\lambda_l(v\bm{d})]\big) dv},
\end{align}
and the LMMSE estimator is obtained as
\begin{align}
&\hat{v}=\sum_{l=1}^{L}\frac{\C(Y_l,v)}{\V(Y_l)}(y_l-\E[Y_l])+\frac{v_{+}}{2},
\end{align}
where for $l=1,...,L$,
\begin{align}
&\E[Y_l]=\frac{1}{v_{-}}\int_{S_v} \lambda_l(v\bm{d}) dv, \\\nonumber
& \E[Y_l^2]=\frac{1}{v_{-}}\int_{S_v}\lambda_l(v\bm{d})(1+\lambda_l(v\bm{d}))  dv,\\\nonumber
&\C(Y_l,v)=\frac{1}{v_{-}}\int_{S_v} (v-\frac{v_{+}}{2})\lambda_l(v\bm{d}) dv,
\end{align}
where $v_{+}=v_\textrm{min}+v_\textrm{max}$ and $v_{-}=v_\textrm{max}-v_\textrm{min}$.
\end{corollary1}


\begin{corollary1}
If the sampling times are equal, the LMMSE estimator of $v_i$, $i \in \{x,y,z\}$, is obtained as 
\begin{align}
\hat{v}_i=L\frac{\C(v_iY_1)}{\V(Y_1)}(\frac{1}{L}\sum_{l=1}^{L}y_l-\E[Y_1])+\E[v_i].
\end{align}
\end{corollary1}

In the following, we investigate the error performance of the estimators. The estimation error is $\bm{\epsilon}=\bm{v}-\hat{\bm{v}}$, where $\epsilon_i$ is a random variable ($v_i$s are random variables with prior PDF $p_i(v_i)$ and $\hat{v}_i$ is a function of Poisson random variables $Y_1,..,Y_L$). To investigate the performance of the estimators, we consider the MSE of the estimation, i.e., $\E[\epsilon^2]$ (where $\epsilon=||\bm{\epsilon}||_2$), which is hard to compute in general case. In Section \ref{Simulation}, we obtain the MSE of the considered estimators numerically. In the following, we obtain the Bayesian and expected Cramer-Rao lower bounds on the MSE.
The Bayesian Cramer-Rao (BCR) lower bound on the MSE is defined as \cite{van2004detection}
\begin{equation}\label{cramer_rao_bound_BCR}
\begin{aligned}
\E[\epsilon^2]\geq \textrm{Tr}\{J_B^{-1}\}, \qquad J_B=-\E_{\bm{v},\bm{Y}}[\nabla^2_{\bm{v}\bm{v}}(\ln(\Prob(\bm{v},\bm{y}))],
\end{aligned}
\end{equation}
where $\bm{Y}=(Y_1,...,Y_L)$. $J_B$ can be divided into two matrixes $J_P$ and $J_D$:
\begin{align}\label{cramer_rao_bound2}
\nonumber
&J_B=J_D+J_P, \qquad J_D=\E_{\bm{v}}[J_F(\bm{v})], \\
&J_P=-\E_{\bm{v}}[\nabla^2_{\bm{v}\bm{v}}(\ln(\Prob(\bm{v}))],
\end{align}
where 
\begin{align}\label{cramer_rao_bound3}
J_F(\bm{v})=-\E_{\bm{Y}|\bm{v}}[\nabla^2_{\bm{v}\bm{v}}(\ln(\Prob(\bm{Y}|\bm{v}))]
\end{align}
is Fisher's information matrix. The following conditions must hold for the BCR lower bound:
\begin{itemize}
\item{$\frac{\partial \ln(\Prob(\bm{v},\bm{Y}))}{\partial v_i}$ and $\frac{\partial^2 \ln(\Prob(\bm{v},\bm{Y}))}{\partial v_i \partial v_j}$, for $i,j \in \{x,y,z\}$, are absolutely integrable with respect to $\bm{v}$ and $\bm{Y}$.}
\item{$\lim_{v_i \rightarrow \pm \infty}b(\bm{v})\Prob(\bm{v})=0$, for $i \in \{x,y,z\}$, where $\bm{b}(\bm{v})$ is called the bias function defined as
\begin{align}\label{biasfunction}
\bm{b}(\bm{v})=E_{\bm{Y}|\bm{v}}[\hat{\bm{v}}]-\bm{v}.
\end{align}
}
\end{itemize}
The BCR lower bound is obtained in the following lemma. 
\begin{lemma}\label{lemma_CR_est_L_performance} 
The BCR lower bound on the MSE to estimate randomly chosen constant flow velocity is obtained as
\begin{align}\label{CR_lower_bound}
&\E[\epsilon^2]\geq \textrm{Tr}\{(J_D+J_P)^{-1}\},\\\nonumber
& \{J_D\}_{i,j}=\sum_{l=1}^{L}\E_{\bm{v}}\big[\frac{1}{\lambda_l(\bm{v})}.\frac{\partial \lambda_l(\bm{v})}{\partial v_i}. \frac{\partial \lambda_l(\bm{v})}{\partial v_j}\big],\\\label{CR_lower_bound2}
& \{J_P\}_{i,j}=\begin{cases}
-\sum_{l=1}^{L}\E_{v_i}[\frac{d^2 \ln (p_i(v_i)}{d v_i^2})\big)], &i=j\\
0, &i \neq j
\end{cases},
 \end{align}
for $i,j \in \{x,y,z\}$, where the following conditions must hold:
\begin{itemize}
\item{$\frac{\partial \ln(\Prob(\bm{v},\bm{Y}))}{\partial v_i}$ and $\frac{\partial^2 \ln(\Prob(\bm{v},\bm{Y}))}{\partial v_i \partial v_j}$ are absolutely integrable with respect to $\bm{v}$ and $\bm{Y}$.}
\item{$\lim_{v_i \rightarrow \pm \infty}b(\bm{v})p_{x,y,z}(\bm{v})=0$, where $p_{x,y,z}(\bm{v})$ is defined in Lemma \ref{lemma_MAP_est_L} and $b(\bm{v})$ is defined in \eqref{biasfunction}.}
\end{itemize}
\end{lemma}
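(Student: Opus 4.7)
The plan is to apply the BCR lower bound formula \eqref{cramer_rao_bound_BCR}--\eqref{cramer_rao_bound3} directly, computing the data term $J_D$ and the prior term $J_P$ separately for the Poisson observation model and the independent prior structure already established in Lemma \ref{lemma_MAP_est_L}. The two listed regularity conditions (absolute integrability of the first/second partials of $\ln \Prob(\bm{v},\bm{Y})$ and vanishing of $b(\bm{v})p_{x,y,z}(\bm{v})$ at infinity) are exactly the hypotheses needed to justify the standard interchange of differentiation and integration in the BCR derivation, so I would simply invoke \cite{van2004detection} for the inequality itself and focus the proof on evaluating $J_D$ and $J_P$ in closed form.

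For the data term, I would start from the independence of $Y_1,\ldots,Y_L$ given $\bm{v}$ to write
\begin{equation*}
\ln \Prob(\bm{Y}\mid\bm{v})=\sum_{l=1}^{L}\bigl[Y_l\ln\lambda_l(\bm{v})-\lambda_l(\bm{v})-\ln Y_l!\bigr],
\end{equation*}
and then differentiate, using the chain rule, to obtain
\begin{equation*}
\frac{\partial \ln\Prob(\bm{Y}\mid\bm{v})}{\partial v_i}=\sum_{l=1}^{L}\Bigl(\frac{Y_l}{\lambda_l(\bm{v})}-1\Bigr)\frac{\partial \lambda_l(\bm{v})}{\partial v_i}.
\end{equation*}
Using the equivalent score-form of Fisher's information, $\{J_F(\bm{v})\}_{i,j}=\E_{\bm{Y}\mid\bm{v}}\bigl[\tfrac{\partial \ln \Prob}{\partial v_i}\tfrac{\partial \ln \Prob}{\partial v_j}\bigr]$, the cross terms $l\neq l'$ vanish by independence and by $\E[Y_l\mid\bm{v}]=\lambda_l(\bm{v})$; the diagonal terms reduce via $\V[Y_l\mid\bm{v}]=\lambda_l(\bm{v})$ to $1/\lambda_l(\bm{v})$. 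Averaging $J_F(\bm{v})$ over the prior of $\bm{v}$ then yields the stated expression for $\{J_D\}_{i,j}$.

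For the prior term, the key observation is that the independence assumption gives $\ln p_{x,y,z}(\bm{v})=\ln p_x(v_x)+\ln p_y(v_y)+\ln p_z(v_z)$, so every mixed second partial is identically zero and $J_P$ is diagonal. The diagonal entries then follow directly from the definition $J_P=-\E_{\bm{v}}[\nabla^2_{\bm{v}\bm{v}}\ln \Prob(\bm{v})]$, marginalizing out the other two coordinates against their independent priors.

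The one step I expect to require actual care, rather than mechanical calculation, is justifying the score-variance identity $\E[(\partial \ln\Prob)(\partial \ln\Prob)^{T}]=-\E[\nabla^{2}\ln\Prob]$ at the Poisson likelihood together with differentiating under the integral sign; this is precisely what the two regularity conditions in the lemma statement buy, so I would verify that for the transparent-receiver form $\lambda_l(\bm{v})=\zeta V_{\textrm{R}} h_0(\bm{r}_0-\bm{v}(t_l-t_{\textrm{r}}),t_l)$ both $\partial\lambda_l/\partial v_i$ and $\partial^{2}\lambda_l/\partial v_i\partial v_j$ are smooth and exponentially decaying, so the dominated-convergence hypothesis holds, and then cite the standard BCR derivation for the final inequality.
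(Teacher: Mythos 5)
Your proposal is correct and follows essentially the same path as the paper: decompose $J_B=J_D+J_P$, cite \cite{van2004detection} for the inequality itself with the two listed regularity conditions, exploit the factorized prior to make $J_P$ diagonal, and reduce $J_D$ to $\sum_{l}\E_{\bm{v}}[\lambda_l^{-1}\partial_{v_i}\lambda_l\,\partial_{v_j}\lambda_l]$ via the conditional independence of $Y_1,\ldots,Y_L$. The one genuine (if minor) difference is in how $J_F(\bm{v})$ is evaluated: the paper works directly from its definition \eqref{cramer_rao_bound3} as the negative expected Hessian, where the term multiplying $(y_l-\lambda_l(\bm{v}))$ — which contains the second derivatives $\partial^2\lambda_l/\partial v_i\partial v_j$ — is killed simply by $\E[Y_l\mid\bm{v}]=\lambda_l(\bm{v})$, so no further identity is needed; you instead use the score outer-product form, which additionally invokes $\V(Y_l\mid\bm{v})=\lambda_l(\bm{v})$ and the score-variance identity $\E[(\nabla\ln\Prob)(\nabla\ln\Prob)^{T}]=-\E[\nabla^{2}\ln\Prob]$, the step you correctly flag as needing justification. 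For the Poisson likelihood with the Gaussian-kernel $\lambda_l(\bm{v})$ both computations give the same matrix, so your route is valid; the paper's is just slightly more economical because it sidesteps the information identity altogether, while yours has the small advantage of never writing down the second derivatives of $\lambda_l(\bm{v})$.
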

\begin{proof}The proof is provided in Appendix \ref{AppendixProoflemma10}.
\end{proof}
Although the BCR lower bound is valid for both biased and unbiased estimators, due to the conditions which must hold for the BCR lower bound, this lower bound is not valid when prior distribution is bounded (e.g., uniform distribution). Another lower bound on the MSE is the expected Cramer-Rao (ECR) lower bound, defined as \cite{ben2009lower}
\small
\begin{equation}\label{cramer_rao_bound_ECR1}
\begin{aligned}
\E[\epsilon^2]&\geq \E_{\bm{v}}[ \textrm{Tr}\{(1+\bm{b}^{'}(\bm{v}))J_F^{-1}(\bm{v})(1+\bm{b}^{'}(\bm{v}))^T+||\bm{b}(\bm{v})||^2 \}],
\end{aligned}
\end{equation}
\normalsize
where, $J_F(v)$ and $\bm{b}(\bm{v})$ are defined in \eqref{cramer_rao_bound3}. The following condition must hold for the ECR lower bound:
\begin{itemize}
\item{$\frac{\partial \ln(\Prob(\bm{Y}|\bm{v}))}{\partial v_i}$ and $\frac{\partial^2 \ln(\Prob(\bm{Y}|\bm{v}))}{\partial v_i \partial v_j}$, for $i,j \in \{x,y,z\}$, are absolutely integrable.}
\end{itemize}
Since there is no condition on the distribution of $\bm{v}$, the ECR lower bound is valid for all distributions over $v$ including the bounded distributions. 
However, the bias function $\bm{b}$ should be obtained for each estimator, which might be challenging. In \cite{ben2009lower}, $\bm{b}$ is optimized to obtain a general lower bound on all estimators. For $\bm{v}=v\bm{d}$, and bounded distributions for $v$, i.e, $v \in [v_\textrm{min},v_\textrm{max}]$, the optimal bias function $b(v)$ is the solution of the following differential equation \cite{ben2009lower}:
\small
\begin{equation}\label{cramer_rao_bound_ECR2}
\begin{aligned}
J_F(v)b(v)&=b^{''}(v)+(1+b^{'}(v))\big(\frac{d \ln P(v)}{dv}-\frac{d \ln(J_F(v))}{dv}\big),
\end{aligned}
\end{equation}
\normalsize
within the range $v \in [v_\textrm{min},v_\textrm{max}]$, with boundary condition $b^{'}(v_\textrm{min})=b^{'}(v_\textrm{max})=-1$. The ECR lower bound on the MSE is obtained in the following lemma for the transparent receiver and $\bm{v}=v\bm{d}$ with uniform distribution for $v \in [v_\textrm{min}, v_\textrm{max}]$. 

\begin{lemma}\label{lemma_ECR_est_L_performance} 
For $L=1$ and $\bm{v}=v\bm{d}$, with uniform prior PDF for $v$ in range $S_v=[v_\textrm{min},v_\textrm{max}]$, the ECR lower bound is obtained as
\begin{align}\label{ECR_lower_bound}
\E[\epsilon^2]\geq \frac{1}{v_{-}}\int_{S_v} \big[\frac{(1+b^{'}(v))^2}{J_F(v)}+b^2(v)\big] dv,
 \end{align}
where $J_F(v)=\frac{1}{4D^2}(r_0-v.(t_l-t_\textrm{r}))^2 \lambda_1(v\bm{d})$, $v_{-}$ is defined in Corrollary \ref{corrolary_LMMSE}, and the optimal bias function $b(v)$ is the solution of the following ordinary differential equation:
\begin{align}\label{ECR_lower_bound}
\nonumber
&J_F(v)b(v)=b^{''}(v)-(1+b^{'}(v))\big(\frac{r_0-v.(t_l-t_\textrm{r})}{2D}\\
&\qquad \qquad -\frac{2(t_l-t_\textrm{r})}{r_0-v.(t_l-t_\textrm{r})}\big), \quad v \in (v_\textrm{min},v_\textrm{max}),
\end{align}
with condition $b^{'}(v_\textrm{min})=b^{'}(v_\textrm{max})=-1$.
\end{lemma}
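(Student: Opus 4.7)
The plan is to specialize the general Expected Cramér--Rao framework given in \eqref{cramer_rao_bound_ECR1}--\eqref{cramer_rao_bound_ECR2} to the one-dimensional scalar model $\bm v = v\bm d$ with one sample and a uniform prior, then to compute the two model-dependent objects that appear: the Fisher information $J_F(v)$ for a single Poisson sample, and the logarithmic derivatives that enter the optimal-bias ODE.

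First, I would observe that under the parameterization $\bm v = v\bm d$ the unknown collapses to a single scalar, so the trace in \eqref{cramer_rao_bound_ECR1} reduces to a scalar expression, $\bm b'(\bm v)$ becomes $b'(v)$, and the expectation with respect to the uniform prior on $S_v = [v_{\min}, v_{\max}]$ becomes $\frac{1}{v_-}\int_{S_v}(\cdot)\,dv$. This gives the claimed form of the bound modulo plugging in $J_F(v)$.

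Second, I would compute the Fisher information for a single Poisson observation. Since $Y_1 \sim \mathrm{Poiss}(\lambda_1(v\bm d))$, a direct calculation (or equivalently the standard identity for Poisson Fisher information) yields
\begin{equation*}
J_F(v) \;=\; \frac{1}{\lambda_1(v\bm d)}\left(\frac{d\lambda_1(v\bm d)}{dv}\right)^{\!2}.
\end{equation*}
Under $\bm v = v\bm d$ we have $\bm r_0 - \bm v(t_1-t_\textrm r) = (r_0 - v(t_1-t_\textrm r))\bm d$, so the exponent in $h_0$ becomes $-(r_0 - v(t_1-t_\textrm r))^2/(4D(t_1-t_\textrm r))$. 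Differentiating gives
\begin{equation*}
\frac{d\lambda_1(v\bm d)}{dv} \;=\; \frac{r_0 - v(t_1-t_\textrm r)}{2D}\,\lambda_1(v\bm d),
\end{equation*}
which matches the derivative formula in Lemma \ref{lemma_MAP_est_L}. Substituting into the Poisson Fisher expression produces exactly $J_F(v) = \frac{1}{4D^2}(r_0 - v(t_1-t_\textrm r))^2\lambda_1(v\bm d)$.

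Third, I would instantiate the optimal-bias ODE \eqref{cramer_rao_bound_ECR2}. The uniform prior on $S_v$ has constant density, so $d\ln P(v)/dv = 0$ inside the support. For $d\ln J_F(v)/dv$, I would split $\ln J_F(v)$ into $2\ln|r_0 - v(t_1-t_\textrm r)| + \ln\lambda_1(v\bm d) + \text{const}$ and differentiate termwise using the expression for $d\lambda_1/dv$ just computed:
\begin{equation*}
\frac{d\ln J_F(v)}{dv} \;=\; -\frac{2(t_1-t_\textrm r)}{r_0 - v(t_1-t_\textrm r)} \;+\; \frac{r_0 - v(t_1-t_\textrm r)}{2D}.
\end{equation*}
Plugging this and $d\ln P(v)/dv = 0$ into \eqref{cramer_rao_bound_ECR2} yields the stated ODE, with the sign flip coming from the minus in front of the $d\ln J_F/dv$ term. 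The boundary conditions $b'(v_{\min}) = b'(v_{\max}) = -1$ are inherited directly from the general framework of \cite{ben2009lower}.

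I do not anticipate a serious obstacle here, since the proof is essentially a substitution exercise: the only places that require care are confirming that the Poisson Fisher information truly simplifies to $(d\lambda_1/dv)^2/\lambda_1$ (which relies on $\E[Y_1] = \lambda_1$ cancelling the $\ln\lambda_1$ second-derivative term) and tracking signs when reducing $d\ln J_F/dv$. The regularity hypothesis on $\ln P(Y|v)$ required by the ECR bound is automatic here because $\lambda_1(v\bm d)$ is a smooth, strictly positive Gaussian-like function of $v$ on $S_v$.
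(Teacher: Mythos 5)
Your proposal is correct and follows essentially the same route as the paper, whose proof simply states that the result is straightforward from the general ECR bound \eqref{cramer_rao_bound_ECR1} and the optimal-bias ODE \eqref{cramer_rao_bound_ECR2}; you merely fill in the substitution details (the Poisson Fisher information $J_F(v)=(d\lambda_1/dv)^2/\lambda_1$, the derivative $d\lambda_1(v\bm{d})/dv=\frac{r_0-v(t_1-t_\textrm{r})}{2D}\lambda_1(v\bm{d})$ consistent with Lemma \ref{lemma_MAP_est_L} and Appendix \ref{AppendixProoflemma10}, and $d\ln J_F/dv$ with the uniform-prior term vanishing), all of which check out.
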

\begin{proof}
The proof is straightforward from \eqref{cramer_rao_bound_ECR1}, \eqref{cramer_rao_bound_ECR2}.
\end{proof}

\textbf{Optimum and sub-optimum sampling times}:
Te obtain the optimum sampling times, we minimize the MSE, for $t_1,...,t_L$, i.e., 
\begin{align}\label{opt_sampling_times}
[t_1^*,t_2^*...t_L^*]=\arg \min_{t_1,t_2,...,t_L} \E[\epsilon^2].
\end{align}
However, 
the distribution of $\bm{\epsilon}$ is hard to compute in general case. 
In Section \ref{Simulation}, we obtain the estimation error and the optimum sampling times numerically.
In Lemma \ref{lemma_opt_sampling_times_infL}, when $L \rightarrow \infty$, we obtain the optimum sampling times for an MAP estimator of the magnitude of the flow velocity which is in the direction of the connecting line between the releasing node and the receiver with uniform distribution. 
\begin{lemma} 
\label{lemma_opt_sampling_times_infL}
The optimum sampling times for an MAP estimator of $\bm{v}=v\bm{d}$, with uniform distribution for $v$, when $L \rightarrow \infty$ are at most two distinct times $t_1$ and $t_2$ with weights $\tilde{w}_1$ and $\tilde{w}_2$, respectively, i.e., $L{\tilde w_1}$ sampling times are equal to $t_1$ and $L{\tilde w_2}$ sampling times are equal to $t_2$. The two sampling times and their weights can be obtained from \eqref{opt_sampling_times} numerically.
\end{lemma}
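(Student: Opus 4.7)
The plan is to adapt the Caratheodory-based argument of Lemma \ref{lemma_opt_tm_inf_L} to the scalar-parameter estimation setting here. Under a uniform prior on $S_v$, the MAP rule of Lemma \ref{lemma_MAP_est_L} reduces to maximising $R_\textrm{est}(v)=\sum_{l=1}^{L}[y_l\ln\lambda_l(v\bm{d})-\lambda_l(v\bm{d})]$ over $v\in S_v$, so the MSE is a functional of the sampling schedule that depends on it only through the empirical measure $\mu_L=\tfrac{1}{L}\sum_{l=1}^{L}\delta_{t_l}$.

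First I would carry out a large-$L$ expansion. Linearising the stationarity equation \eqref{estimation_rule_2} around the true value $v_0$ and using $Y_l\sim\mathrm{Poisson}(\lambda_l(v_0\bm{d}))$ gives, to leading order in $1/L$,
\begin{align*}
\E[\epsilon^2]\,\sim\,\frac{1}{L\,v_{-}}\int_{S_v}\frac{dv_0}{J(v_0,\mu)},\qquad J(v_0,\mu)=\int\frac{(\partial_v\lambda(t,v_0\bm{d}))^2}{\lambda(t,v_0\bm{d})}\,d\mu(t),
\end{align*}
so minimising the MSE over $\{t_l\}$ in \eqref{opt_sampling_times} reduces to minimising a convex functional of $\mu$ (since $1/x$ is convex and $J(v_0,\mu)$ is linear in $\mu$).

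Next I would invoke the extension of Caratheodory's theorem for connected sets used in Lemma \ref{lemma_opt_tm_inf_L}. Because the parameter is one-dimensional, the dual formulation of this Bayesian A-optimal design problem carries only two effective constraints (one from the normalisation $\int d\mu=1$ and one from the scalar-parameter first-order condition), so every optimiser can be matched by a probability measure supported on at most two sampling times $t_1,t_2$ with weights $\tilde w_1,\tilde w_2$; this is the scalar-parameter analogue of the $\binom{M}{2}$-point bound obtained in Lemma \ref{lemma_opt_tm_inf_L}. Substituting a two-point $\mu$ back into \eqref{opt_sampling_times} leaves the three free numbers $(t_1,t_2,\tilde w_1)$, to be solved numerically as stated.

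The main obstacle is making the ``at most two points'' reduction rigorous: the cost functional here is an \emph{integral} of a convex function of a linear functional of $\mu$, which is not of the min-of-finite-pairs-of-linear-functionals form appearing in Lemma \ref{lemma_opt_tm_inf_L}, so a direct Caratheodory argument does not apply off the shelf. I would close this gap either by (i) appealing to the classical equivalence theorem for Bayesian A-optimal designs on one-parameter models, or (ii) discretising $S_v$ on a finite grid, applying the finite-dimensional Caratheodory argument to bound the support, and passing to the limit while verifying that the scalar-parameter structure forces the bound down to two.
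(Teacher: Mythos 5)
Your route is genuinely different from the paper's, and it has a real gap at its central step. The paper never expands the MSE to order $1/L$ or optimizes a design measure; its argument is a simple identifiability one. Because $\ln\lambda_l(v\bm{d})$ is quadratic in $v$, a single sampling time leaves a two-fold ambiguity: in the limit $\frac{1}{L}\sum_l y_l\to\lambda_1(v_r\bm{d})$, so both $v_r$ and its mirror value about $r_0/(t_1-t_\textrm{r})$ maximize $R_\textrm{est}$ (cf. \eqref{eq_equal_sampling_times}), whereas two distinct sampling times give two quadratic constraints $\lambda_1(v\bm{d})=a_1$, $\lambda_2(v\bm{d})=a_2$ that determine $v$ uniquely. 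Hence with two distinct times the estimation error tends to zero as $L\to\infty$, which is how the paper concludes that at most two distinct times (with weights to be found numerically from \eqref{opt_sampling_times}) are needed.

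Your proposal has two concrete problems. First, the expansion $\E[\epsilon^2]\sim\frac{1}{Lv_{-}}\int_{S_v} dv_0/J(v_0,\mu)$ presupposes consistency and asymptotic normality of the MAP estimate, which fails precisely for the designs the lemma must discriminate between: for a one-point design the log-likelihood is symmetric in $v$ about $r_0/(t_1-t_\textrm{r})$, the estimator is inconsistent whenever the mirrored value lies in $S_v$, and the MSE stays $O(1)$ rather than $O(1/L)$; this non-identifiability is exactly the mechanism the paper's proof turns on, and your expansion erases it. Second, the ``at most two support points'' reduction is the entire content of the lemma in your formulation, and it is left open. As you yourself note, the objective $\int p(v_0)/J(v_0,\mu)\,dv_0$ involves a continuum of linear functionals of $\mu$ (one per $v_0$), so the Caratheodory argument of Lemma \ref{lemma_opt_tm_inf_L} does not apply, and neither proposed fix closes the gap: the equivalence theorem for Bayesian A-optimality characterizes optimality by a directional-derivative inequality but does not bound the support size by the number of parameters (Bayesian optimal designs for one-parameter nonlinear models with diffuse priors are known to require more support points than locally optimal ones), and discretizing the prior on a $K$-point grid yields, via Caratheodory, a support bound growing with $K$ that does not collapse to two in the limit. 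As written the proposal therefore does not prove the statement; the missing step is supplied much more cheaply by the paper's observation that two distinct times make $v\mapsto(\lambda_1(v\bm{d}),\lambda_2(v\bm{d}))$ injective and hence drive the error to zero.
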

\begin{proof}When $L \rightarrow \infty$, if the magnitude of the flow velocity is $v_r$, the average value of the observations $\frac{1}{L}\sum_{l=1}^{L}y_l$ approaches to $\lambda_{1}(v_r\bm{d})$. Hence, if the samples are taken at the same time, from \eqref{eq_equal_sampling_times}, $\hat{v}$ is the solution of $\frac{1}{L}\sum_{l=1}^{L}y_l-\lambda_{1}(v\bm{d})=0$, we obtain $v_r$ and $v_r+\frac{2 r_0}{(t_1-t_\textrm{r})}$ as the maximizers of $R_\textrm{est}(v\bm{d})$, which means that we may have ambiguity on the estimated flow velocity. 
This is because of the fact that the function $\lambda_l(v\bm{d})$ is not a one by one function of $v$. Since logarithm of the function $\lambda_l(v\bm{d})$ is a Quadratic function, using two values of $\lambda_l(v\bm{d})$, for two different sampling times, we can obtain $v$, i.e., if we have $\lambda_{1}(v\bm{d})=a_1$ and $\lambda_{2}(v\bm{d})=a_2$, we can obtain $v$ uniquely. Note that the flow velocity is chosen randomly and for each flow velocity, every two different sampling times leads to a unique estimation. Hence, we conclude that when $L \rightarrow \infty$, if we have two different times, the estimation error approaches to zero. 
\end{proof}

\renewcommand{\arraystretch}{0.7}
\begin{figure}
\centering
\captionof{table}{Simulation and numerical analysis parameters}
\begin{tabular}{l|l}
\hline\hline
Parameter & Value\\\hline
$D$ & $10^{-8}~\textrm{m}^2/\textrm{s}$\\
$\zeta$ & $10000$\\
$r_0$ & $100 ~\mu \textrm{m}$\\
$r_\textrm{R}$ & $1.5 \times 10^{-5}$\\
$t_\textrm{r}$ & $0$\\
\hline\hline
\end{tabular}
\label{parameters}
\vspace{-1.5em}
\end{figure}

\vspace{-0.2em}
\section{Simulation and Numerical Results}\label{Simulation}
In this section, we provide some simulation and numerical results to evaluate the performance of the proposed flow velocity detector and estimator. For the evaluation, we use the parameters given in table \ref{parameters}. In part A, we consider the flow velocity detector, and in part B, we consider the flow velocity estimator. In both parts, we assume that the flow velocity is in the direction of the connecting line between the releasing node and the receiver.

\subsection{Flow velocity detector}\label{simulation_ook}
Here, we assume binary and multiple hypotheses for the flow velocity. We assume that the hypotheses in the flow velocity detector are location and time invariant, i.e., $\bm{v}_i(\bm{r},t)=v_i\bm{d}, i \in I$.

\begin{figure*}
\centering
\begin{subfigure}[t]{0.23\textwidth}
\centering
\includegraphics[trim={1.5cm 0 0 0},scale=0.25]{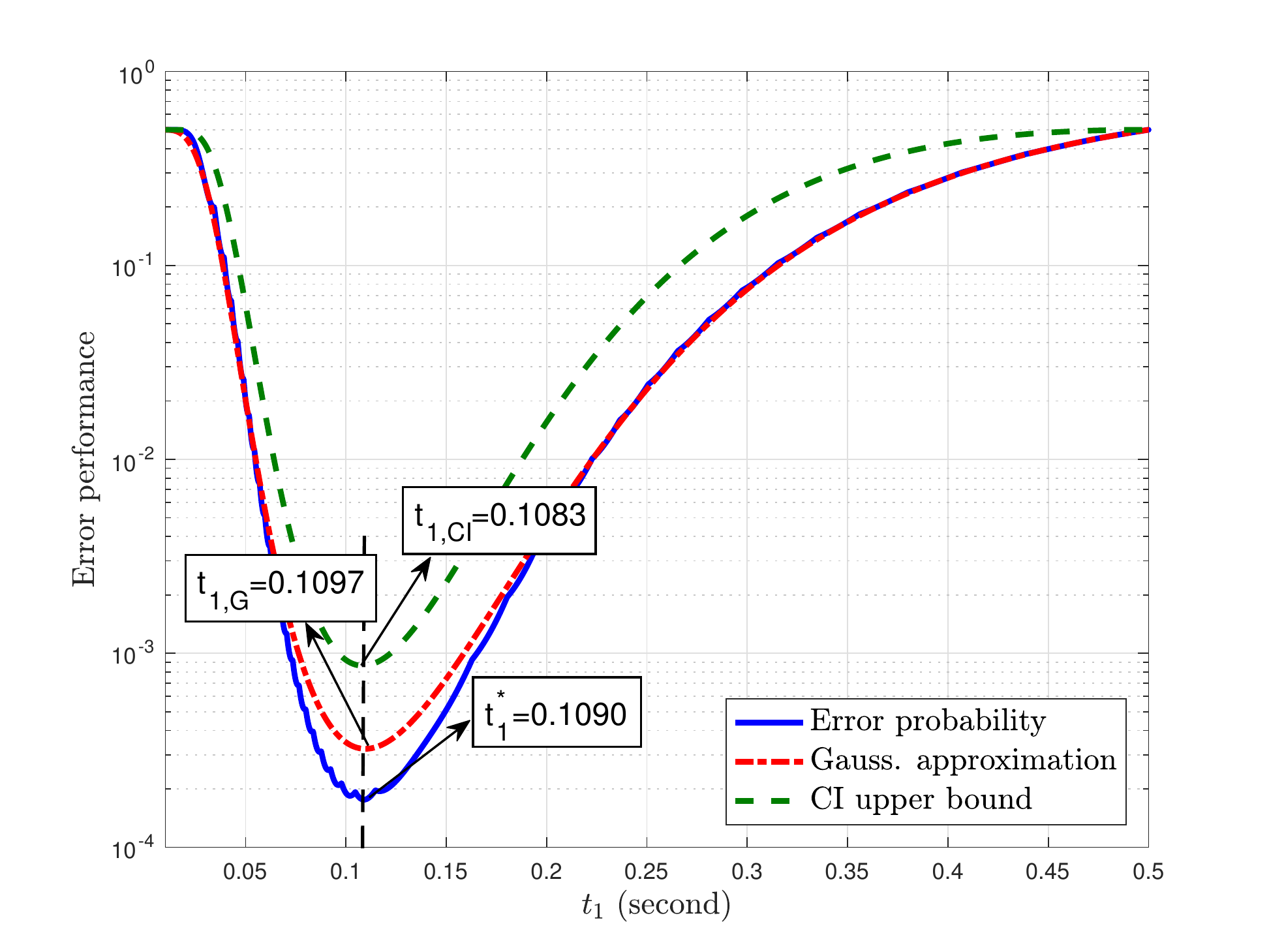}
\vspace{-1.5em}
\caption{Error performance versus $t_1$ for $L=1$.}
\label{fig_pe_tm}
\end{subfigure}\quad
\begin{subfigure}[t]{0.23\textwidth}
\centering
\includegraphics[trim={1.5cm 0 0 0},scale=0.25]{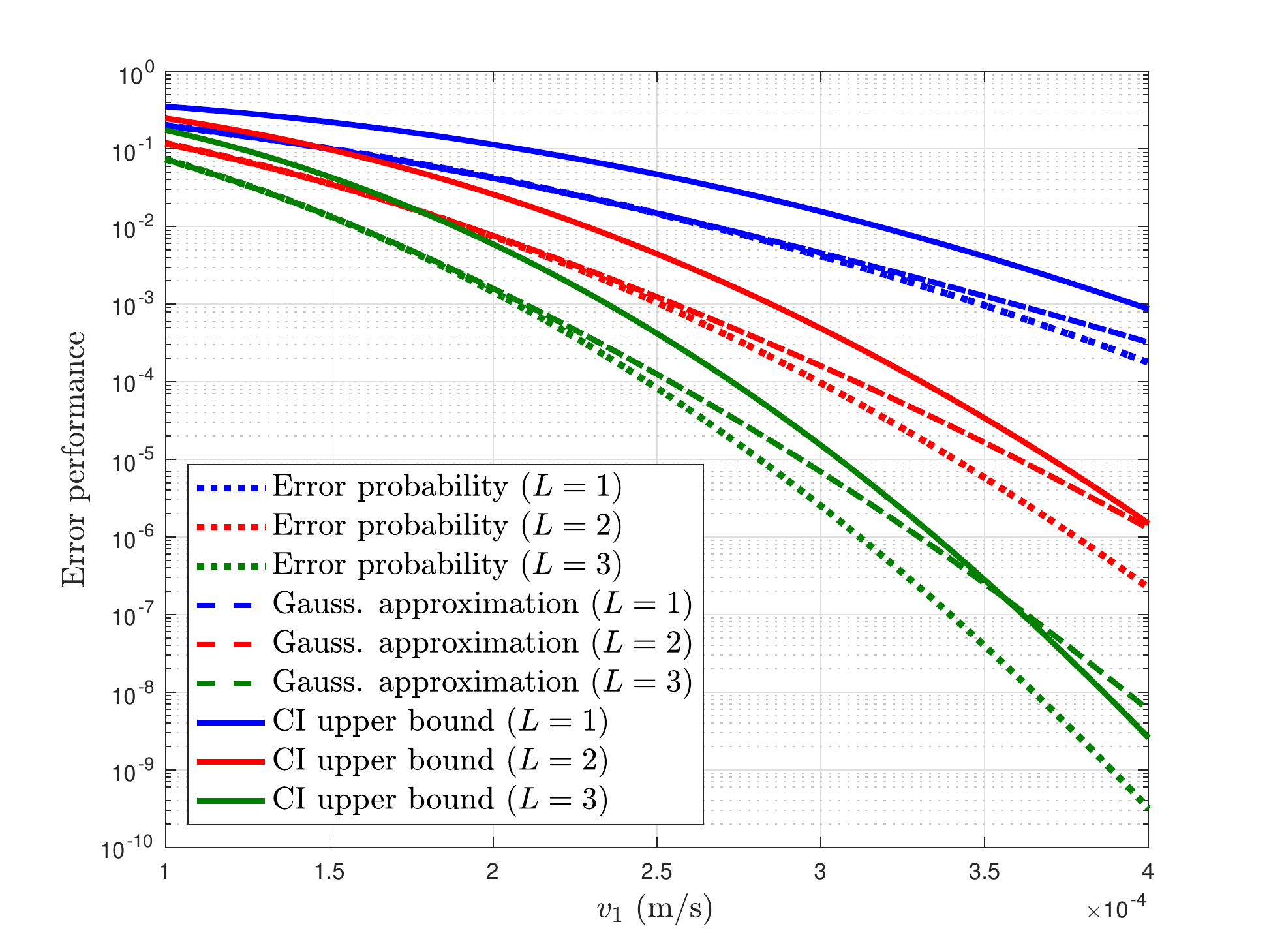}
\vspace{-1.5em}
\caption{Error performance versus $v_1$ for $L=1,2,3$ with related optimum sampling times.}
\label{fig_pe_v0}
\end{subfigure}\quad
\begin{subfigure}[t]{0.23\textwidth}
\centering
\includegraphics[trim={1.5cm 0 0 0cm},scale=0.25]{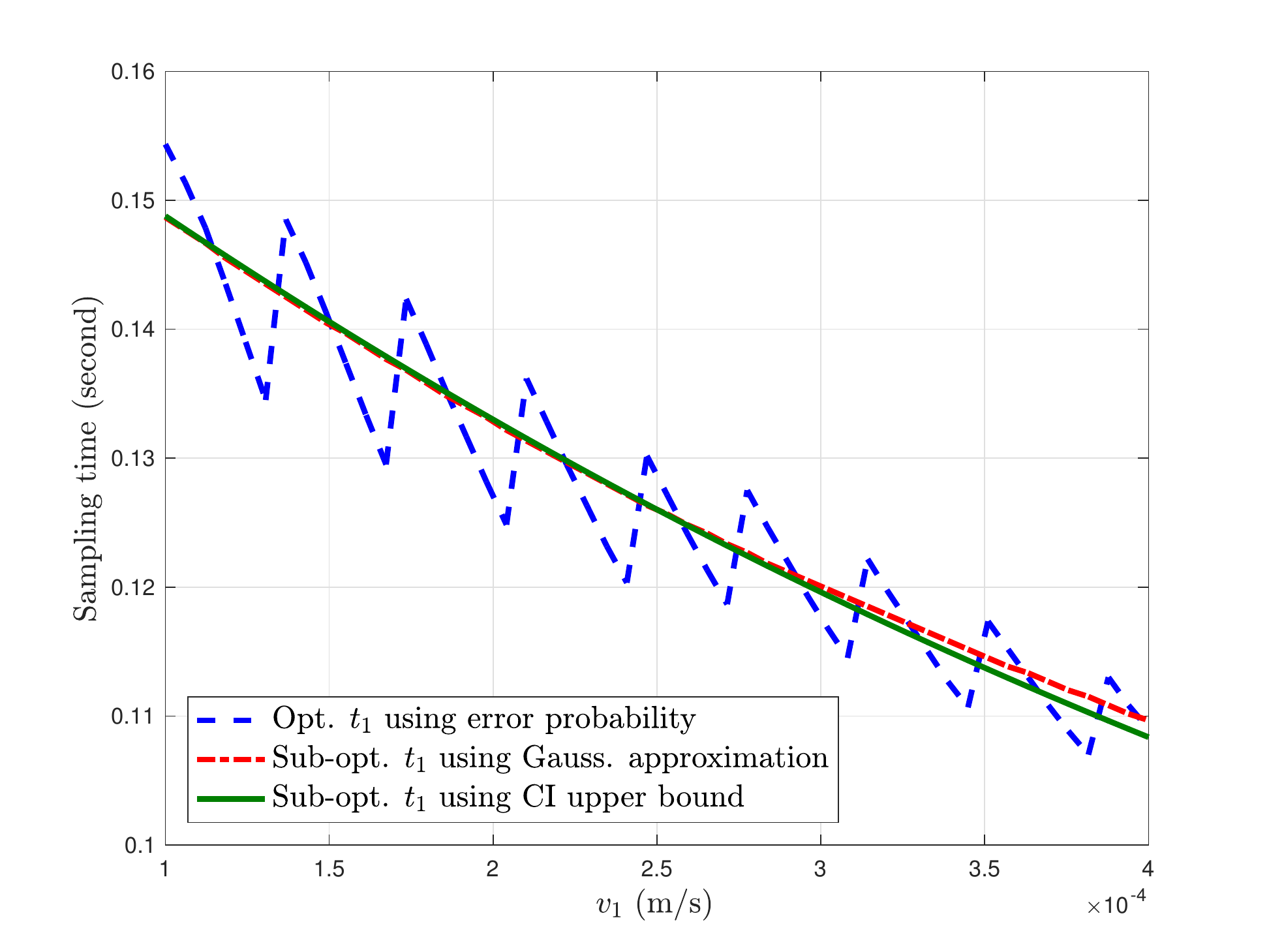}
\vspace{-1.5em}
\caption{Sampling times versus $v_1$ for $L=1$.}
\label{fig_tm}
\end{subfigure}\quad 
\begin{subfigure}[t]{0.23\textwidth}
\centering
\includegraphics[trim={1.5cm 0 0 0cm},scale=0.25]{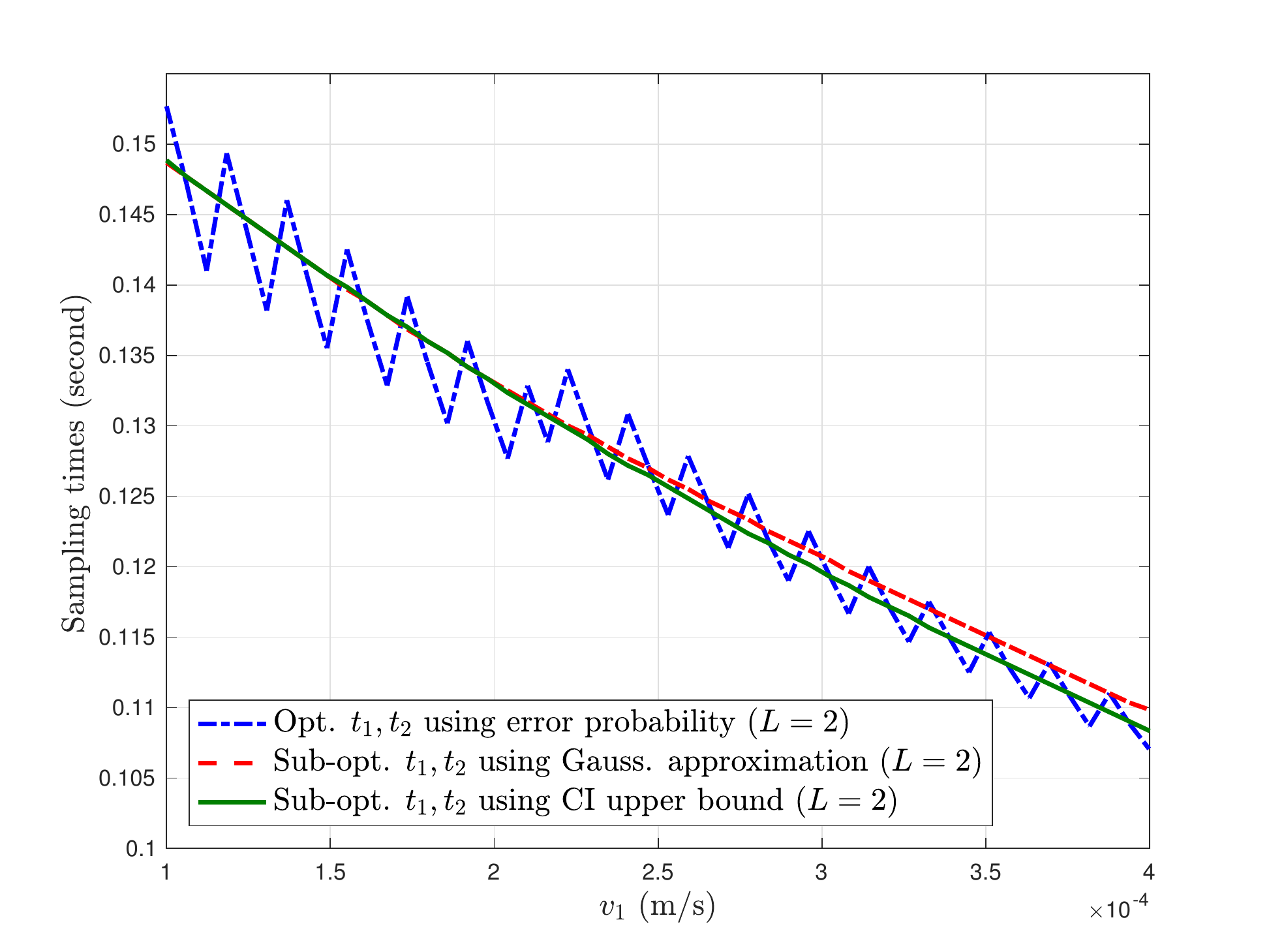}
\vspace{-1.5em}
\caption{Sampling times versus $v_1$ for $L=2$.}
\label{fig_tm_v0_lspb_ook_2}
\end{subfigure}
\caption{Error performance and sampling times for a flow velocity detector with $M=2$.}
\label{figtotalcapacity}
\vspace{-1.0em}
\end{figure*}
\subsubsection{Binary hypothesis ($M=2$)} 
The error probability and its Gaussian approximation for binary hypothesis with one-sample decoder, derived in \eqref{Pe_L_1} and \eqref{Pe_approx_Gauss_L_1}, respectively, and the CI upper bound, derived in \eqref{Peu_1sps_1}, are depicted in Fig. \ref{fig_pe_tm} versus the sampling time $t_1$ for $v_0=0, v_1=4\times 10^{-4}$ m/s.
The sampling times, which minimize the error probability, and its Gaussian approximation and CI upper bound are $t_{1}^*=0.1090$ s, $t_{1,\textrm{G}}=0.1097$ s, and $t_{1,\textrm{CI}}=0.1083$ s, respectively. It is seen that optimum and sub-optimum sampling times are nearly the same. 
We assume $v_0=0$ and depict the error probability, its Gaussian approximation, and CI upper bound for $L=1,2,3$ in Fig. \ref{fig_pe_v0} versus $v_1$ for their related optimum and sub-optimum sampling times. As expected, the error probability, the Gaussian approximation and the CI upper bound decrease as $v_1$ increases. Further, it is seen that the Gaussian approximation is nearly the same as the exact error probability. But, it makes distance as the error probability reduces. The CI upper bound and the error probability has a nearly constant gap in all values of $v_1$. 
The error probabilities and their Gaussian approximation and CI upper bound for $L=2,3$ have the same behavior as $L=1$ with the difference that they decrease as $L$ increases.\\
The sub-optimum sampling times, using CI upper bound and Gaussian approximation, given in Lemmas \ref{lemma_opt_tm_chernoff} and \ref{lemma_opt_tm_gauss}, along with the optimum sampling time using the exact error probability, are depicted versus $v_1$ in Fig. \ref{fig_tm}. It is seen that the sub-optimum sampling times are nearly the same, and decrease as $v_1$ increases and the optimum sampling time, which minimizes the error probability, fluctuates around the sub-optimum value (the fluctuation is small and because of the discrete nature of the Poisson distribution). 
In Fig. \ref{fig_tm_v0_lspb_ook_2}, the sampling times are depicted versus $v_1$ for a two-sample decoder. 
As mentioned in the previous section, the analytic results show that for $L$ sample decoder, the sampling times which minimize the CI upper bound are the same, which is verified by simulations i.e., $t_{1,\textrm{CI}}=...=t_{L,\textrm{CI}}$. It is seen using simulations that the $L$ sampling times which minimize the error probability and its Gaussian approximation are also equal, i.e., $t_1^*=...=t_L^*$ and $t_{1,\textrm{G}}=...=t_{L,\textrm{G}}$ for our simulation parameters. Further, it can be seen by comparing Fig. \ref{fig_tm} and Fig. \ref{fig_tm_v0_lspb_ook_2} that the optimum sampling times for $L=1,2$ are approximately equal.

\begin{figure*}
\centering
\begin{subfigure}[t]{0.23\textwidth}
\centering
\includegraphics[trim={1.5cm 0 0 0},scale=0.25]{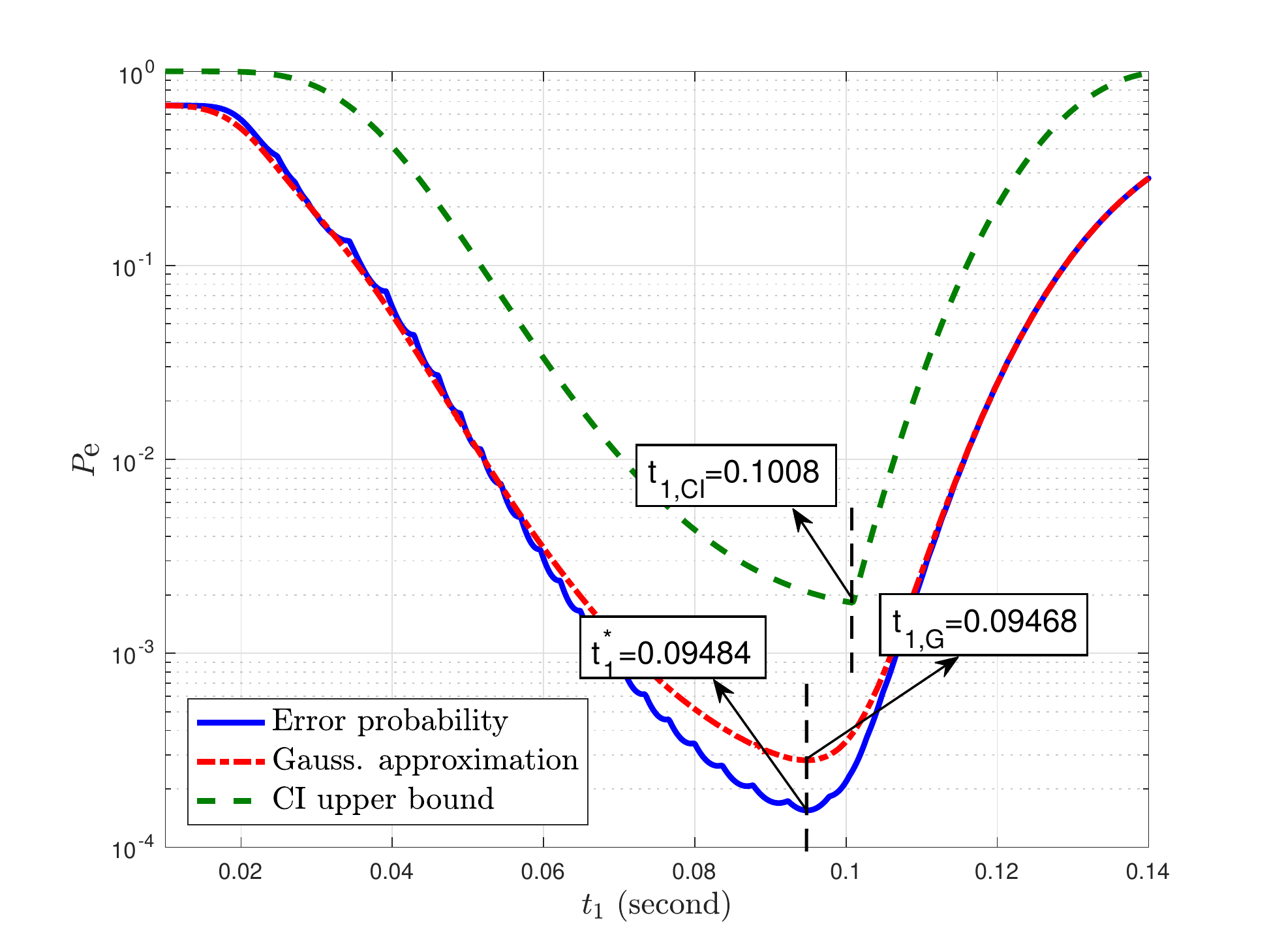}
\vspace{-1.5em}
\caption{Error performance versus $t_1$ for $L=1$.}
\label{fig_pe_tm_1sps_M3}
\end{subfigure}\quad 
\begin{subfigure}[t]{0.23\textwidth}
\centering
\includegraphics[trim={1.5cm 0 0 0},scale=0.25]{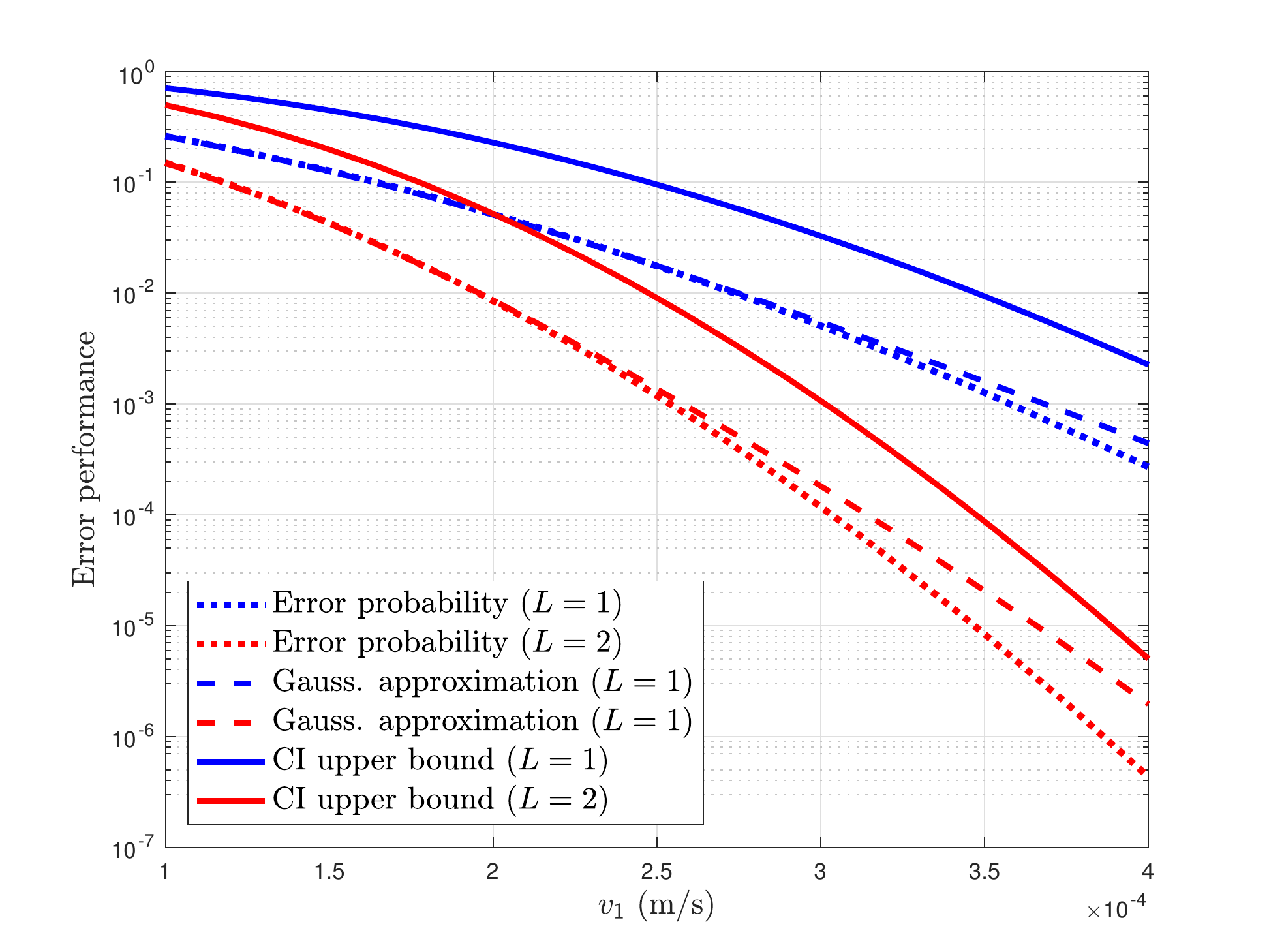}
\vspace{-1.5em}
\caption{Error performance versus $v_1$ for $L=1,2,3$ with related optimum sampling times.}
\label{fig_pe_v0_1sps_M3}
\end{subfigure}\quad
\begin{subfigure}[t]{0.23\textwidth}
\centering
\includegraphics[trim={1.5cm 0 0 0},scale=0.25]{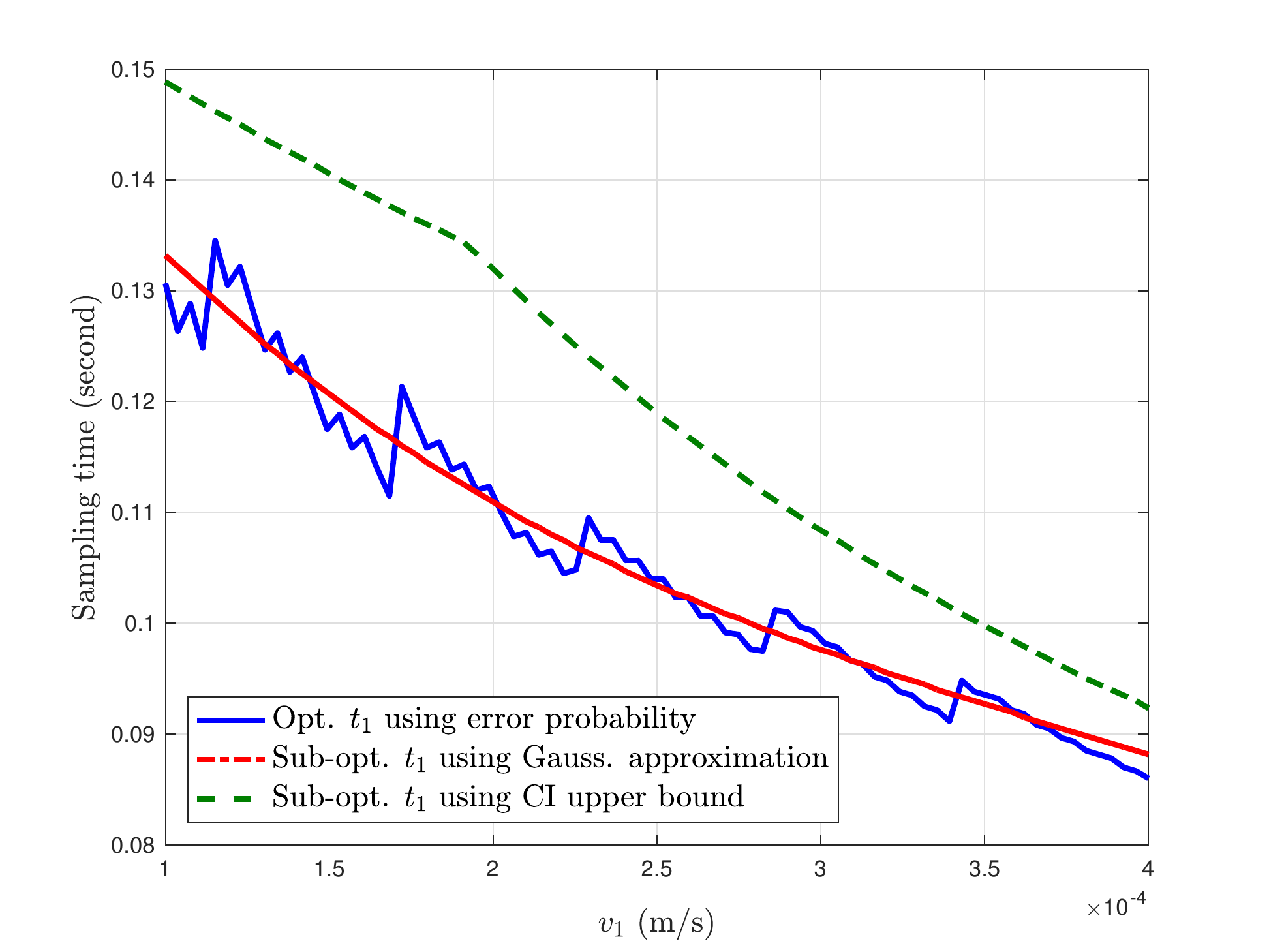}
\vspace{-1.5em}
\caption{Sampling times versus $v_1$ for $L=1$.}
\label{fig_tm_v0_1sps_M3}
\end{subfigure}\quad
\begin{subfigure}[t]{0.23\textwidth}
\centering
\includegraphics[trim={1.5cm 0 0 0},scale=0.25]{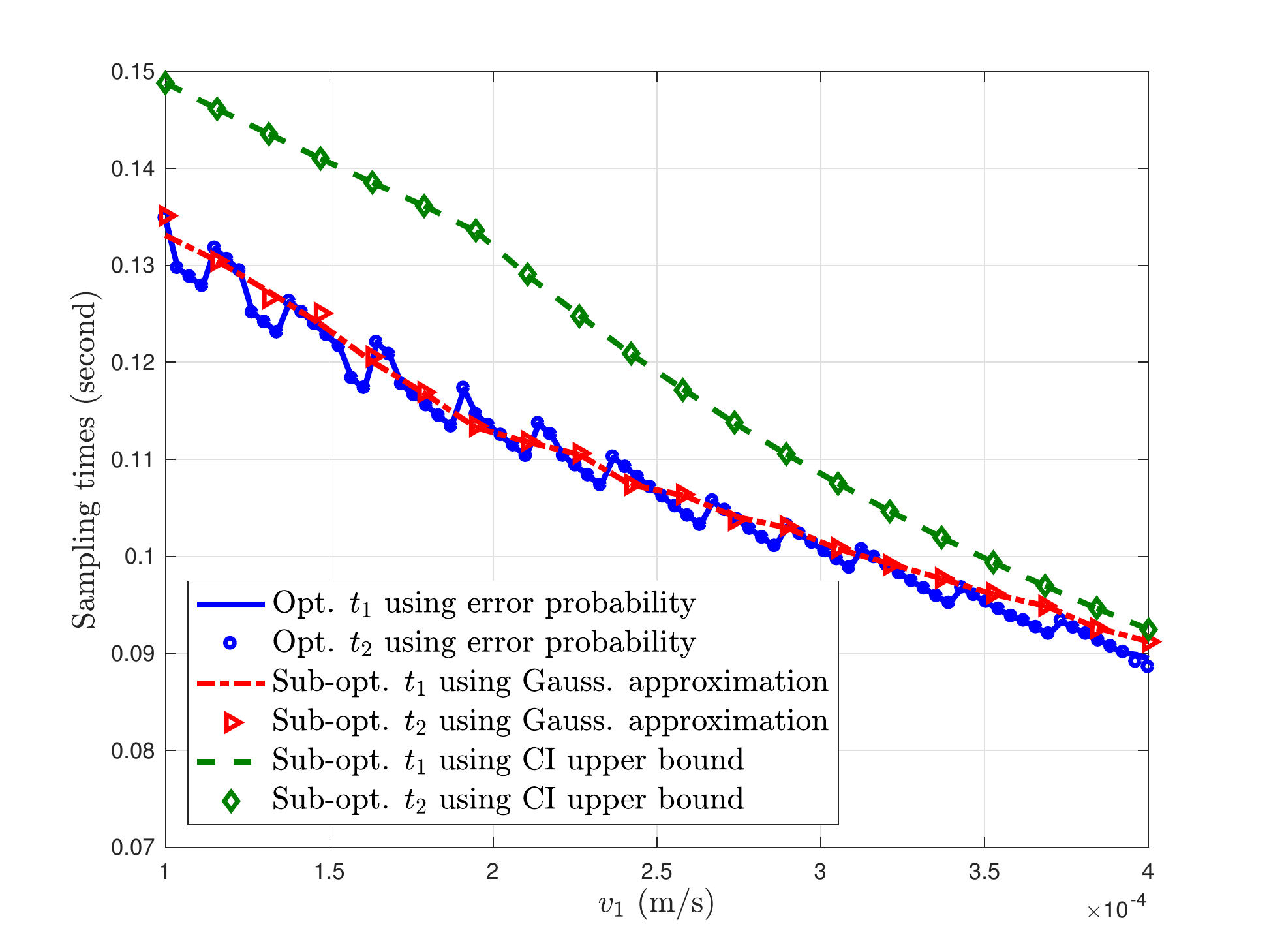}
\vspace{-1.5em}
\caption{Sampling times versus $v_1$ for $L=2$.}
\label{fig_tm_v0_2sps_M3}
\end{subfigure}
\caption{Error performance and sampling times for a flow velocity detector with $M=3$.}
\label{figtotalcapacity}
\vspace{-1.0em}
\end{figure*}
\subsubsection{Multiple hypotheses ($M>2$)}\label{simulation_ook}
Here, we assume $M=3$ hypotheses.
 The error probability, Gaussian approximation and CI upper bound versus the sampling time $t_1$ for $v_0=0$, $v_1=4\times10^{-4}$ m/s, and $v_2=8\times10^{-4}$ m/s are depicted Fig. \ref{fig_pe_tm_1sps_M3}. The sampling times that minimize the error probability, its Gaussian approximation and CI upper bound are obtained as $t_{1}^*=0.09484$ s, $t_{1,\textrm{G}}=0.09468$ s, and $t_{1,\textrm{CI}}=0.1008$ s, respectively. 
It is seen that the CI upper bound has a gap with the error probability in all values despite the binary case due to using union bound in multiple hypotheses case. The optimum value and the sub-optimum values of the sampling times are nearly equal. However, the sub-optimum value using CI upper bound has made a small gap from the optimum value compared to binary case, which may be due to the union bound. 
The error probability, the Gaussian approximation and the CI upper bound versus $v_1$ for the sampling times which minimize them are provided in Fig. \ref{fig_pe_v0_1sps_M3} for $L=1,2$. It is seen that as expected, the error probability, Gaussian approximation, and CI upper bound decrease as $L$ increases.\\
The sampling times which minimize the error probability, the Gaussian approximation, and the CI upper bound are depicted in Fig. \ref{fig_tm_v0_1sps_M3} versus $v_1$ (we assumed $v_0=0, v_2=2v_1$ and changed $v_1$). As seen in this figure, the optimum value of the sampling time fluctuates around the sub-optimum value using Gaussian approximation. But the sub-optimum value using CI upper bound has a distance from these values, which decreases as $v_1$ increases. For $L=2$, we depict the two sampling times $t_1$ and $t_2$ versus $v_1$ in Fig. \ref{fig_tm_v0_2sps_M3}. It is seen that similar to the binary case, the sampling times $t_1$ and $t_2$ which minimize the error probability, the Gaussian approximation, and the CI upper bound are equal, i.e., $t_1^*=t_2^*$, $t_{1,\textrm{G}}=t_{2,\textrm{G}}$, and $t_{1,\textrm{CI}}=t_{2,\textrm{CI}}$. Further, the optimum and sub-optimum sampling times are nearly the same as the values of the optimum and sub-optimum sampling times in one sample decoder.\\
For a large value of $L$ (e.g., $L=50$), we obtain the sub-optimum sampling times which minimize the CI upper bound using the optimization problem in \eqref{sub_opt_sampling_times}. We assume $v_0=0$, $v_1=10^{-4}$ m/s, and $v_2=2\times10^{-4}$ m/s. Using \eqref{sub_opt_sampling_times}, we obtain the $L=50$ sampling times as $t_1\approx...\approx t_{50}\approx0.1488$s. This is also confirmed by the results of Lemma \ref{lemma_opt_tm_inf_L} (which can be obtained from \eqref{optimization_problem_lemma5} equal to $t_{1, \textrm{Chernoff}}=t_{2, \textrm{Chernoff}}=t_{3, \textrm{Chernoff}}=0.1488$s). Note that Lemma \ref{lemma_opt_tm_inf_L} anticipates that the sub-optimum sampling times are {\em at most} ${M \choose 2}$ different times. This is somehow counter-intuitive since we obtain a single distinct sampling time, while we expect to obtain three different sampling times. For some other simulation parameters, the same result, i.e., a single sampling time, is observed. 
Another approach to find the three sampling times that Lemma \ref{lemma_opt_tm_inf_L} anticipates is to obtain the optimum times that discriminate between $\{H_0, H_1\}$, $\{H_1,H_2\}$, and $\{H_0, H_2\}$ by using one-sample decoder and minimizing the CI upper bound. For $\{H_0, H_1\}$, we get $t_1=0.1488$s, for $\{H_1,H_2\}$, we get $t_2=0.1231$s, and for $\{H_0, H_2\}$, we get $t_3=0.1330$s. Then, we use these sampling times in the optimization problem in Lemma \ref{lemma_opt_tm_inf_L} and obtain the three weights as $w_1^*=1, w_2^*=0, w_3^*=0$, which matches the results obtained from \eqref{sub_opt_sampling_times} and Lemma \ref{lemma_opt_tm_inf_L}. 

\begin{figure}
\centering
\begin{subfigure}[t]{0.23\textwidth}
\centering
\includegraphics[trim={1.5cm 0 0 0},scale=0.25]{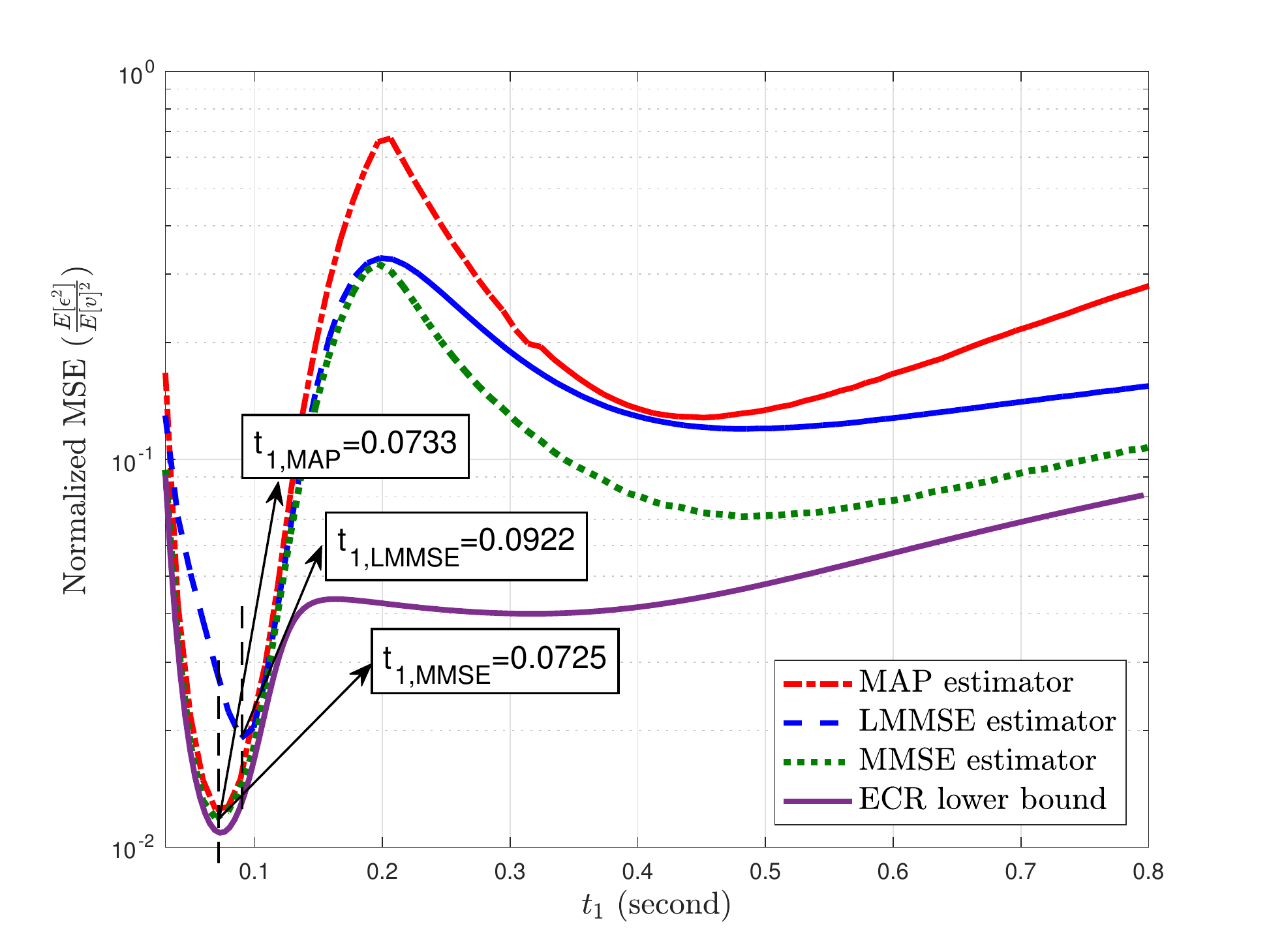}
\caption{Normalized MSE versus $t_1$.}
\label{fig_Eepsilon2_versus_t1}
\end{subfigure}\quad
\begin{subfigure}[t]{0.23\textwidth}
\centering
\includegraphics[trim={1.5cm 0 0 0},scale=0.25]{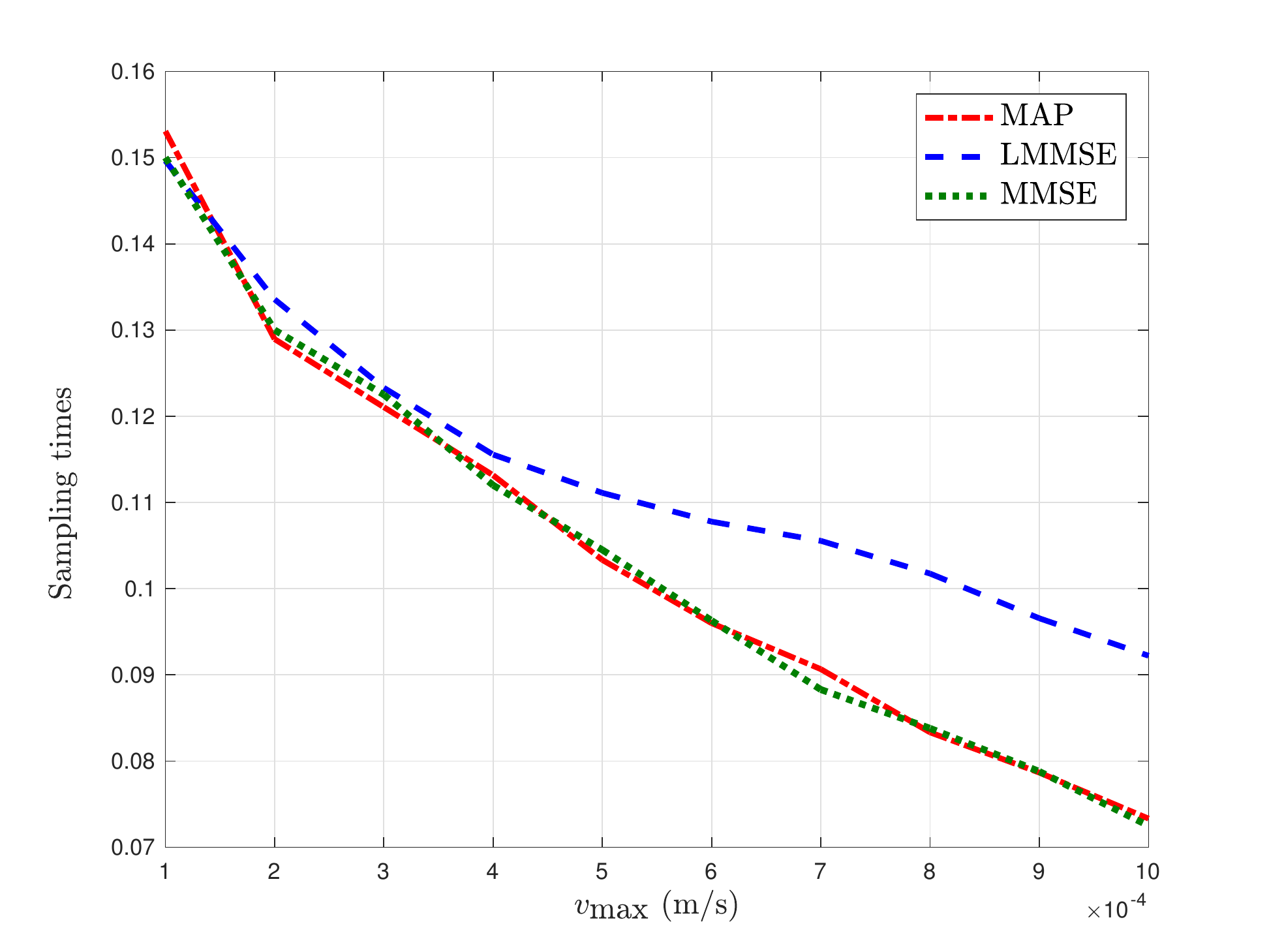}
\caption{Optimum sampling times versus $v_\textrm{max}$.}
\label{fig_tm_opt_versus_v1}
\end{subfigure}
\caption{Normalized MSE and optimum sampling times for a flow velocity estimator with $L=1$.}
\vspace{-1.5em}
\label{figtotalestimator}
\end{figure}

\subsection{Flow velocity estimator}\label{simulation_estimator}
Here, we assume that the magnitude of $\bm{v}$ has uniform distribution in the range $[v_\textrm{min}, v_\textrm{max}]$.
The MSE of estimation, $\E[\epsilon^2]$, normalized to $E[v]^2$, versus the sampling time is depicted in Fig. \ref{fig_Eepsilon2_versus_t1} for the MAP, MMSE, and LMMSE estimators along with the ECR lower bound given in Lemma \ref{lemma_ECR_est_L_performance}. 
We assume $v_\textrm{min}=0, v_\textrm{max}=10^{-3}$ m/s. As expected the MMSE estimator has the least MSE. However, the LMMSE does not always have better performance than MAP, which is because of the force of linearity to the estimated value in LMMSE. Further, it is seen that the performance of the MAP estimator is near the MMSE estimator and the ECR lower bound around the optimum sampling time.
Using this figure, the sampling times which minimize $\E[\epsilon^2]$ for the MAP, MMSE, and LMMSE estimators are obtained as $t_{1,\textrm{MAP}}=0.0733$ s, $t_{1,\textrm{MMSE}}=0.0725$ s, and $t_{1,\textrm{LMMSE}}=0.0922$ s, respectively. 
It is seen that the optimum sampling times of the MAP and MMSE are nearly the same.
However, the optimum sampling time of the LMMSE has a small gap from these values.
We assume $v_\textrm{min}=0$ and depict the optimum sampling times which minimize the MSE of the MAP,
MMSE, 
and LMMSE estimators
versus $v_\textrm{max}$ in Fig. \ref{fig_tm_opt_versus_v1}. It is seen that the optimum sampling times reduce as $v_\textrm{max}$ increases.

\section{Concluding Remarks and Future Works}\label{conclusion}
In this paper, we designed a molecular flow velocity meter which consists of a molecule releasing node and a molecular receiver to detect the medium flow velocity. 
We first assumed $M$ hypotheses for the medium flow velocity and a $L$-sample decoder at the receiver and obtained the optimum maximum-a-posteriori (MAP). We derived the error probability, its Gaussian approximation, and CI upper bound to analyze the performance of the detector. Further, we obtained the optimum and sub-optimum sampling times using the error probability, its Gaussian approximation, and CI upper bound. When $L \rightarrow \infty$, we obtained an interesting result using the CI upper bound which shows that for $M$ hypotheses, the sub-optimum sampling times yields to ${M \choose 2}$ sampling times $t_1, t_2,..., t_{M \choose 2}$ with ${M \choose 2}$ weights $w_1, w_2,...,w_{M \choose 2}$, i.e., $Lw_l$ sampling times are equal to $t_l$. For the simulation parameters, it is seen that these sampling times are the sampling times which minimize the CI upper bound for discriminating each two hypotheses. This results in a much simpler optimization problem to obtain the sub-optimum sampling times. Then, we assumed randomly chosen constant flow velocity in the medium and obtained the MAP and MMSE estimators for the $L$-sample receiver. 
We considered the mean square error (MSE) of the estimators and obtained the Bayesian Cramer-Rao (BCR) and expected Cramer-Rao (ECR) lower bounds on the MSE. We obtained the sampling times which minimize the MSE numerically. We showed that when $L \rightarrow \infty$, for the MAP estimator, two different sampling times are enough for estimation, i.e., $L{\tilde w_1}$ sampling times are $t_1$ and $L{\tilde w_2}$ sampling times are $t_2$. 
The molecular flow velocity meter can have applications in health care to monitor the function of the heart. It can also be used to design a new modulation scheme in MC, in which information is encoded in the medium flow velocity, i.e., similar to the classic communications that medium-based communication is introduced, we can introduce flow-based communication in MC. This makes the transmitter much simpler which is an important challenge in MC. 
\bibliographystyle{ieeetr}
\bibliography{reftest}

\appendices
\section{Proof of Lemma \ref{lemma_Pe_2}}\label{AppendixProoflemma2}
Let $R_i=\prod_{l=1}^{L} (\lambda_{i,l})^{y_j}\exp(-\lambda_{i,l})$. Using the optimum decision rule given in \eqref{decision_rule_csk}, the error probability can be obtained as
\begin{align}\label{eq_MAP_error}
\nonumber
&P_\textrm{e}
=\frac{1}{M}\sum_{i=0}^{M-1}\big[1-\Prob\{\cap_{j\in I, j \neq i} R_i>R_j |H_i\}\big]\\\nonumber
&=1-\frac{1}{M}\sum_{i=0}^{M-1}\sum^{\infty}_{\substack{y_1,...,y_L=0, \\R_i>R_j,~ j\in I,~ j \neq i}} \Prob(y_1,...,y_L|H_i)\\
&=1-\frac{1}{M}\sum_{i=0}^{M-1}\sum^{\infty}_{\substack{y_1,...,y_L=0, \\R_i>R_j,~ j\in I,~ j \neq i}} \prod_{l=1}^{L} \Prob(y_l|H_i).
\end{align}
The condition $R_i>R_j$ reduces to $w_{i,j,l} y_l>\beta_{i,j}$, where $w_{i,j,l}=\ln(\frac{\lambda_{i,l}}{\lambda_{j,l}})$ and $\beta_{i,j}=\sum_{l=1}^{L}(\lambda_{i,l}-\lambda_{j,l})$ (similar to \eqref{threshold2}), and hence, $P_\textrm{e}$ simplifies to
\begin{equation}
\begin{aligned}
\nonumber
P_\textrm{e}=1-\frac{1}{M}\sum_{i=0}^{M-1}\sum^{\infty}_{\substack{y_1,...,y_L=0, \\ \sum_{l=1}^{L} w_{i,j,l} y_l>\beta_{i,j},\\j\in I,~j \neq i}}\prod_{l=1}^{L} \Prob(Y_l=y_l|H_i).
\end{aligned}
\end{equation}
Now, by substituting the Poisson distribution for $\Prob(Y_l=y_l|H_i)$, we obtain \eqref{Pe_Poisson}.

\section{Proof of Corollary \ref{corollary_lemma_Pe_2_1}}
\label{AppendixProofcorollary3_1}
For binary hypothesis, \eqref{Pe_Poisson} is simplified as
\begin{align}
\nonumber
P_\textrm{e}&=1-\frac{1}{2}\big[\sum_{\substack{y_1,...,y_L=0,\\ \sum_{l=1}^{L} w_{0,1,l} y_l>\beta_{0,1}}}^{\infty}\prod_{l=1}^{L} \frac{(\lambda_{0,l})^{y_l} \exp(-\lambda_{0,l})}{y_l!}\\
&\quad+\sum_{\substack{y_1,...,y_L=0,\\ \sum_{l=1}^{L} w_{1,0,l} y_l>\beta_{1,0}}}^{\infty}\prod_{l=1}^{L} \frac{(\lambda_{1,l})^{y_l} \exp(-\lambda_{1,l})}{y_l!}\big)\big].
\end{align}
For this case, we have $w_{0,1,l}=-w_{1,0,l}=w_L$ and $\beta_{0,1}=-\beta_{1,0}=\beta$. Hence,
\begin{align}
\nonumber
P_\textrm{e}
&=\frac{1}{2}\big[1-\sum_{\substack{y_1,...,y_L=0,\\ \sum_{l=1}^{L} w_l y_l>\beta}}^{\infty}\prod_{l=1}^{L} \frac{(\lambda_{0,l})^{y_l} \exp(-\lambda_{0,l})}{y_l!}\\
&\quad+\sum_{\substack{y_1,...,y_L=0,\\ \sum_{l=1}^{L} w_l y_l>\beta}}^{\infty}\prod_{l=1}^{L} \frac{(\lambda_{1,l})^{y_l} \exp(-\lambda_{1,l})}{y_l!}\big)\big],
\end{align}
which reduces to \eqref{Pe_M_2}. For the Gaussian approximation, we have
\begin{align}\label{eq_pe_gauss}
\nonumber
P_\textrm{e}
&=\frac{1}{2}\sum_{i=0}^{M-1}\big[1-\Prob\{\sum_{l=1}^{L} w_l y_l>\beta |H_0\}\\
&\quad+\Prob\{\sum_{l=1}^{L} w_l y_l>\beta |H_1\}\big].
\end{align}
Since $Y_l$s are independent Gaussian variables, $Y=\sum_{l=1}^{L} w_l Y_l$ is a Gaussian variable with mean $\E[Y]=\sum_{l=1}^{L} w_l \lambda_{i,l}$ and variance $\V(Y)=\sum_{l=1}^{L} w_l^2 \lambda_{i,l}$, for $H_i$. Hence, \eqref{eq_pe_gauss} reduces to \eqref{Pe_approx_Gauss_M_2}.

\section{Proof of Lemma \ref{lemma_chernoff_UPe_Lsps_1}}
\label{AppendixProoflemma3}
The error probability of the MAP detector with $M$ hypotheses in \eqref{eq_MAP_error} can also be written as 
\begin{align}
P_{\textrm{e}}
&=\frac{1}{M}\sum_{i=0}^{M-1}\Prob\{\cup_{j\in I, j \neq i} R_i<R_j |H_i\}.
\end{align}
Now, we upper bound the error probability as follows:
\begin{align}
&P_{\textrm{e}}\overset{(a)}\leq \frac{1}{M}\sum_{i=0}^{M-1}\sum_{\substack{j=0,\\ j \neq i}}^{M-1}\Prob\{R_i<R_j |H_i\}\\\nonumber 
&~=\frac{1}{M}\sum_{i=0}^{M-1}\sum_{\substack{j=0,\\j \neq i}}^{M-1}\sum_{y_1,...,y_L=0}^{\infty}1\big\{\Prob(y_1,...,y_L|H_{i})\\\nonumber
&~\quad <\Prob(y_1,...,y_L|H_{j})\big\}\Prob(y_1,...,y_L|H_{i})\\\nonumber
&~=\frac{1}{M}\sum_{\substack{i_1,i_2=0, \\ i_1 \neq i_2}}^{M-1}\sum_{y_1,...,y_L=0}^{\infty}\min\big\{\Prob(y_1,...,y_L|H_{i_1}),\\\nonumber
&~\quad \Prob(y_1,...,y_L|H_{i_2})\big\}\\\nonumber
&~\overset{(b)}\leq \frac{1}{M}\sum_{\substack{i_1,i_2=0, \\ i_1 \neq i_2}}^{M-1}\sum_{y_1,...,y_L=0}^{\infty}\Prob(y_1,...,y_L|H_{i_1})^{s_{i_1,i_2}}\\\nonumber
&~\quad \times\Prob(y_1,...,y_L|H_{i_2})^{1-s_{i_1,i_2}}\\\nonumber 
&~\overset{(c)}= \frac{1}{M}\sum_{\substack{i_1,i_2=0, \\ i_1 \neq i_2}}^{M-1}\prod_{l=1}^{L}\sum_{y_l=0}^{\infty}\Prob(y_l|H_{i_1})^{s_{i_1,i_2}}\Prob(y_l|H_{i_2})^{1-s_{i_1,i_2}}\\\nonumber 
&~\overset{(d)}= \frac{1}{M}\sum_{\substack{i_1,i_2=0, \\ i_1 \neq i_2}}^{M-1}\exp(-D_{i_1,i_2}(s_{i_1,i_2}))\\\nonumber
&~\overset{(e)}\leq \frac{1}{M}{M \choose 2}\max_{\substack{i_1,i_2 \in I, \\ i_1 \neq i_2}}\exp(-D_{i_1,i_2}(s_{i_1,i_2})), 
\end{align}
where $D_{i_1,i_2}(s_{i_1,i_2})=\sum_{l=1}^{L} [\lambda_{i_1,l}s_{i_1,i_2}+\lambda_{i_2,l}(1-s_{i_1,i_2})-\lambda_{i_1,l}^{s_{i_1,i_2}} \lambda_{i_2,l}^{1-s_{i_1,i_2}}]$, (a) is due to the union bound, (b) is due to eq. \eqref{chernof_ineq}, (c) is due to assuming independent observations at the receiver, (d) is due to Poisson distribution for observations, i.e., $\Prob(y_l|H_i)=\frac{(\lambda_{i,l})^{y_l}\exp(-\lambda_{i,l})}{y_l!}$, $i\in I$, $l \in \{1,...,L\}$, and (e) is due to substituting the maximum term for each term of the summation. Since the bound holds for all values of $s_{i_1,i_2} \in (0,1)$, it also holds for the optimum values of $s_{i_1,i_2}$, which is obtained by minimizing $\exp\big(-D_{i_1,i_2}(s_{i_1,i_2})\big)$, i.e., maximizing $D_{i_1,i_2}(s_{i_1,i_2})$ with respect to $s_{i_1,i_2}$. 
Hence, we obtain the upper bound as \eqref{Peu_Lsps_1}. 
The optimum $s_{i_1,i_2}$ is obtained by minimizing $\exp\big(-D_{i_1,i_2}(s_{i_1,i_2})\big)$, i.e., maximizing $D_{i_1,i_2}(s_{i_1,i_2})$, with respect to $s_{i_1,i_2}$. Hence $s_{i_1,i_2}^*$ is the solution of $\frac{d}{ds_{i_1,i_2}}D_{i_1,i_2}(s_{i_1,i_2})=0$, which can be simplified as \eqref{opt_s_Lsps_1}. 

\section{Proof of Corollary \ref{corollary_chernoff_UPe_Lsps_1_1}}
\label{AppendixProofcorollary4_1}
From holder's inequality, for any positive vectors $\bm{x}=(x_1,x_2,...,x_n)$ and $\bm{y}=(y_1,y_2,...y_n)$ and for any $p,q$, satisfying $p>1$ and $\frac{1}{p}+\frac{1}{q}=1$, we have $(\sum_{i=1}^{n} x_i^p)^{\frac{1}{p}}(\sum_{i=1}^{n} y_i^q)^{\frac{1}{q}}\geq \sum_{i=1}^{n} x_i y_i$. Using this inequality for $\bm{x}=(\lambda_{i_1,1}^{s_{i_1,i_2}},...,\lambda_{i_1,L}^{s_{i_1,i_2}}), \bm{y}=(\lambda_{i_2,1}^{1-s_{i_1,i_2}},...,\lambda_{i_2,L}^{1-s_{i_1,i_2}})$, $p=\frac{1}{s_{i_1,i_2}}$, and $q=\frac{1}{1-s_{i_1,i_2}}$, we have:
\begin{equation*}
\begin{aligned}
\sum_{l=1}^{L}\lambda_{i_1,l}^{s_{i_1,i_2}} \lambda_{i_2,l}^{1-s_{i_1,i_2}} \leq (\sum_{l=1}^{L}\lambda_{i_1,l})^{s_{i_1,i_2}} (\sum_{l=1}^{L}\lambda_{i_2,l})^{1-s_{i_1,i_2}}.
\end{aligned}
\end{equation*}
Hence, we bound $D_{i_1,i_2}(s_{i_1,i_2})$ in \eqref{Peu_Lsps_1} as follows:
\begin{align}
\nonumber
D_{i_1,i_2}(s_{i_1,i_2}) &\geq (\sum_{l=1}^{L} \lambda_{i_1,l})s_{i_1,i_2}+(\sum_{l=1}^{L}\lambda_{i_2,l})(1-s_{i_1,i_2})\\
&~-(\sum_{l=1}^{L}\lambda_{i_1,l})^{s_{i_1,i_2}} (\sum_{l=1}^{L}\lambda_{i_2,l})^{1-s_{i_1,i_2}}.
\end{align}
Let $K_{i_1,i_2}(s_{i_1,i_2})=(\sum_{l=1}^{L} \lambda_{i_1,l})s_{i_1,i_2}+(\sum_{l=1}^{L}\lambda_{i_2,l})(1-s_{i_1,i_2})-(\sum_{l=1}^{L}\lambda_{i_1,l})^{s_{i_1,i_2}} (\sum_{l=1}^{L}\lambda_{i_2,l})^{1-s_{i_1,i_2}}$. Then, \eqref{Peu_Lsps_1} reduces to \eqref{Peu_Lsps_2}. Now using this bound, the optimum value of $s_{i_1,i_2}$ is obtained by maximizing $K_{i_1,i_2}(s_{i_1,i_2})$ as the solution of the following equation:
\begin{align}
&\frac{d }{d s_{i_1,i_2}} K_{i_1,i_2}(s_{i_1,i_2})=\sum_{l=1}^{L} \lambda_{i_1,l}-\sum_{l=1}^{L}\lambda_{i_2,l}\\\nonumber
&\quad-\ln (\sum_{l=1}^{L} \lambda_{i_1,l})(\sum_{l=1}^{L} \lambda_{i_1,l})^{s_{i_1,i_2}}(\sum_{l=1}^{L}\lambda_{i_2,l})^{1-s_{i_1,i_2}}\\\nonumber &\quad +
\ln (\sum_{l=1}^{L} \lambda_{i_2,l})(\sum_{l=1}^{L}\lambda_{i_1,l})^{s_{i_1,i_2}}(\sum_{l=1}^{L}\lambda_{i_2,l})^{1-s_{i_1,i_2}}=0,
\end{align}
which reduces to
\small
$$s_{i_1,i_2}\ln(\frac{\sum_{l=1}^{L} \lambda_{i_1,l}}{\sum_{l=1}^{L}\lambda_{i_2,l}})+\ln \ln(\frac{\sum_{l=1}^{L} \lambda_{i_1,l}}{\sum_{l=1}^{L}\lambda_{i_2,l}})=\ln(\frac{\sum_{l=1}^{L} \lambda_{i_1,l}}{\sum_{l=1}^{L}.\lambda_{i_2,l}}-1).$$
\normalsize
Hence, $s_{i_1,i_2}^*$ is obtained as \eqref{opt_s_Lsps_2}.

\section{Proof of Lemma \ref{lemma_opt_tm_gauss}}
\label{AppendixProoflemma6}
From \eqref{sub_opt_sampling_times_G}, we should solve the optimization problem 
$\max_{t_{1},...,t_{L}} P_\textrm{e,G},$
where $P_\textrm{e,G}$ is given in \eqref{Pe_approx_Gauss_M_2}. Hence, the sub-optimum values of $t_{1},...,t_{L}$ are the solutions of $\nabla P_\textrm{e,G}=[\frac{\partial P_\textrm{e,G}}{\partial t_{1}}, ...,\frac{\partial P_\textrm{e,G}}{\partial t_{L}}]$ $=0$. Let $\mu_i=\sum_{l=1}^{L}w_l\lambda_{i,l}$ and $\sigma_i=\sqrt{\sum_{l=1}^{L} w_l^2\lambda_{i,l}}$ in \eqref{Pe_approx_Gauss_M_2}. Hence, from $\frac{\partial P_\textrm{e,G}}{\partial t_{l}}=0$, we obtain:
\begin{align}\label{setofeqs_tms_opt}
&e^{\frac{-(\beta-\mu_0)^2}{2\sigma_0^2}} \big(\frac{(\frac{d}{d t_l}\beta-\frac{d}{d t_l}\mu_0)\sigma_0-(\frac{d}{d t_l}\sigma_0)(\beta-\mu_0)}{\sigma_0^{2}}\big)\\\nonumber
&-e^{\frac{-(\beta-\mu_1)^2}{2\sigma_1^2}} \big(\frac{(\frac{d}{d t_l}\beta-\frac{d}{d t_l}\mu_1)\sigma_1-(\frac{d}{d t_l}\sigma_1)(\beta-\mu_1)}{\sigma_1^{2}}\big)=0.
\end{align}
Let $g_{i,l}=\frac{d}{dt_l}\lambda_{i,l}$. Hence, from definition of $\beta$ and $w_l$ in Corollary \ref{corollary_lemma_threshold_L_1}, we have $\frac{d}{d t_l}\beta=g_{0,l}-g_{1,l}$, and 
\begin{align}
&\frac{d}{d t_l}\mu_i=(\frac{g_{0,l}}{\lambda_{0,l}}-\frac{g_{1,l}}{\lambda_{1,l}})\lambda_{i,l}+w_l g_{i,l}, \quad i=0,1\\\nonumber
&\frac{d}{d t_l}\sigma_i=\frac{w_l}{2\sigma_i} \big[2(\frac{g_{0,l}}{\lambda_{0,l}}-\frac{g_{1,l}}{\lambda_{1,l}})\lambda_{i,l}+w_l g_{i,l}\big], \quad i=0,1.
\end{align}
Hence, we obtain the set of equations in \eqref{setofeqs_tms_opt_2}. For the location invariant flow velocity and transparent receiver, $\lambda_{i,l}=V_\textrm{R}\zeta h_0(\bm{r}_0-\int_{t_{0}}^{t_{1}} \bm{v}_i(\tau) d\tau,t)$. Using \eqref{eq_h0}, we have
\begin{align}
\nonumber
g_{i,l}&=V_\textrm{R}\zeta \frac{d}{d t_{l}} h_0(\bm{r}_0-\int_{t_\textrm{r}}^{t_{l}} \bm{v}_i(\tau) d\tau,t_{l})\\\nonumber
&=V_\textrm{R}\zeta  \frac{d}{d t_{l}} \big[\frac{1}{(4\pi D (t_l-t_\textrm{r}))^{\frac{3}{2}}}e^{-\frac{||\bm{r}_0-\int_{t_\textrm{r}}^{t_{l}}\bm{v}_i(\tau)d\tau||^2}{4D(t_l-t_\textrm{r})}}\bigg]\\\nonumber
&=\lambda_{i,l}\bigg[\frac{-3}{2(t_{l}-t_\textrm{r})}+\frac{\langle \bm{v}_i(t_l),\bm{r}_0-\int_{t_\textrm{r}}^{t_{l}}\bm{v}_i(\tau)d\tau \rangle }{2D(t_l-t_\textrm{r})}\\
&\quad+\frac{||\bm{r}_0-\int_{t_\textrm{r}}^{t_{l}}\bm{v}_i(\tau)d\tau||^2}{4D(t_l-t_\textrm{r})^2} \bigg].
\end{align}

\section{Proof of Lemma \ref{lemma_opt_tm_chernoff}}
\label{AppendixProoflemma7}
To obtain the sub-optimum sampling times using \eqref{Peu_Lsps_2}, we should solve 
\begin{align}
\max_{t_{1},...,t_{L}} \max_{s}  \sum_{l=1}^{L}[\lambda_{0,l}s+\lambda_{1,l}(1-s)-\lambda_{0,l}^{s} \lambda_{1,l}^{1-s}].
\end{align}
Let $f(t_l,s)=\lambda_{0,l}s+\lambda_{1,l}(1-s)-\lambda_{0,l}^{s} \lambda_{1,l}^{1-s}$. Hence, the optimum values of $t_{1},...,t_{L}$, and $s$ are the solutions of $\nabla \sum_{l=1}^{L}f(t_l,s)=[\frac{\partial f(t_{1},s)}{\partial t_{1}}, ..., \frac{\partial f(t_{L},s)}{\partial t_{L}},\sum_{l=1}^{L}\frac{\partial f(t_{l},s)}{\partial s}]=0$. From $\frac{\partial f(t_{l},s)}{\partial t_{l}}=0, l=1,...,L$, we conclude that $t_{1}=...=t_{L}$. Hence, from $\sum_{l=1}^{L}\frac{\partial f(t_{l},s)}{\partial s}=0$, we obtain $\frac{\partial f(t_{1},s)}{\partial s}=0$. Thus, $s$ and $t_{1}$ are the solutions of $[\frac{\partial f(t_{1},s)}{\partial t_{1}},\frac{\partial f(t_{1},s)}{\partial s}]=0$. This means that the sub-optimum values of $t_{1},...,t_{L}$ are equal to the values when $L=1$.\\
Now, we obtain the set of equations for the sub-optimum sampling time when $L=1$, i.e., $t_1$. 
$\frac{\partial f(t_{1},s)}{\partial s}=0$, results in $s=\frac{\ln(\frac{\lambda_{0,1}}{\lambda_{1,1}}-1)-\ln\ln(\frac{\lambda_{0,1}}{\lambda_{1,1}})}{\ln(\frac{\lambda_{0,1}}{\lambda_{1,1}})}$, and $\frac{\partial f(t_{1},s)}{\partial t_{1}}=0$, yields to
\small
\begin{align}
\nonumber
g_{0,1}s+g_{1,1}(1-s)&-s g_{0,1} (\frac{\lambda_{1,1}}{\lambda_{0,1}})^{1-s}-(1-s) g_{1,l} (\frac{\lambda_{0,1}}{\lambda_{1,1}})^{s}=0,
\end{align}
\normalsize
where $g_{i,l}$ is defined in Lemma \ref{lemma_opt_tm_gauss}.

\section{Proof of Corrolary \ref{corollary_MAP_est_L}}
\label{AppendixProofcorrollary8_1}
For $\bm{v}=v\bm{d}$, and uniform distribution for $v$, the MAP estimation of $v$ is 
\begin{align}\label{estimation_rule_csk2}
\hat{v}=\arg\max_{v \in S_v} y_1 \ln (\lambda_{1}(v\bm{d}))-\lambda_{1}(v\bm{d}).
\end{align}
In the following, let $R_\textrm{est,u}(v)=y_1 \ln (\lambda_{1}(v\bm{d}))-\lambda_{1}(v\bm{d})$. Hence, we should find the solutions of $R_\textrm{est,u}^{'}(v)=\frac{d}{dv}R_\textrm{est,u}(v)=0$ which maximize $R_\textrm{est,u}(v)$ and fall in $S_v$. For the transparent receiver, we have $R_\textrm{est,u}^{'}(v)=\frac{1}{2D}(r_0-v.(t_1-t_\textrm{r})).(y_1-\lambda_{1}(v\bm{d}))=0$.
If there is no maximizer in this range, we should consider $v_\textrm{min},v_\textrm{max}$. Hence, the candidates of the maximizer are the values of $v$ which satisfiy the equations $r_0-v.(t_1-t_\textrm{r})=0$ and $\lambda_{1}(v\bm{d})=y_1$. From $r_0-v.(t_1-t_\textrm{r})=0$, we obtain $v_1=\frac{r_0}{t_1-t_\textrm{r}}$. Note that $v_1$ maximizes $\lambda_{1}(v\bm{d})$, i.e., $\lambda_{1}(v\bm{d})$ is a positive function with maximum $\lambda_{1}(v_1\bm{d})=\frac{\zeta V_R}{(4\pi D (t_1-t_\textrm{r}))^{\frac{3}{2}}}$. For the second equation, we have three cases:

Case 1) $y_1=\lambda_{1}(v_1\bm{d})$: In this case, the only solution of the equation $\lambda_l(v_1\bm{d})=y_1$ is equal to $v_1$.

Case 2) $y_1>\lambda_{1}(v_1\bm{d})$: In this case, the second equation $\lambda_l(v\bm{d})=y_1$ does not have any solutions for $\bm{v}$ since $y_l$ is greater than the maximum value of $\lambda_l(v\bm{d})$.

Case 3) $y_1<\lambda_{1}(v_1\bm{d})$: In this case, the equation $\lambda_l(v\bm{d})=y_1$ has two solutions 
as $v_2=\frac{r_0+\sqrt{\Delta}}{t_l-t_\textrm{r}}$ and $v_3=\frac{r_0-\sqrt{\Delta}}{t_l-t_\textrm{r}}$, where $\Delta=-4D (t_1-t_\textrm{r})(\ln{y_l}-\ln{(\lambda_{1}(v_1\bm{d}))})$. In Case 1, we have 
$R_\textrm{est,u}^{'}(v)=\frac{1}{2D}(r_0-v.(t_1-t_\textrm{r})).(\lambda_{1}(v_1\bm{d})-\lambda_{1}(v\bm{d}))$. Since $v_1$ is the maximizer of $\lambda_{1}(v_1\bm{d})$, $\lambda_{1}(v_1\bm{d})-\lambda_{1}(v\bm{d})$ is positive for all $v \neq v_1$. Hence, for $v<v_1$, $R_\textrm{est,u}^{'}(v)>0$ and for $v>v_1$, $R_\textrm{est,u}^{'}(v)<0$, and thus, $v_1$ is the maximizer. Now, using the second derivative of $R_\textrm{est,u}(v)$, we show that in Case 2, $v_1$ is the only maximizer of $R_\textrm{est,u}(v)$ and in Case 3, $v_1$ is the minimizer and $v_2$ and $v_3$ are the maximizers of $R_\textrm{est,u}(v)$, and if both values fall in $S_v$, the estimator gives one of the values $v_2$ and $v_3$ randomly as the estimated value of $v$.\\
The second derivative of $R_\textrm{est,u}(v)$ can be obtained as
\begin{align}\label{eq_second_der}
\nonumber
R_\textrm{est,u}^{''}(v)&=\frac{d^2}{dv^2} R_\textrm{est,u}(v)=-\frac{1}{2D}(t_1-t_\textrm{r})(y_1-\lambda_{1}(v\bm{d}))\\
&\quad-\frac{1}{4D^2}(r_0-v.(t_1-t_\textrm{r}))^2\lambda_{1}(v\bm{d}).
\end{align}
 For $v=v_1$, we have $r_0-v.(t_1-t_\textrm{r})=0$ and hence,
\begin{align}
R_\textrm{est,u}^{''}(v)=-(t_1-t_\textrm{r})(y_1-\lambda_{1}(v\bm{d})).
\end{align}
Now in Case 2, we have $R_\textrm{est,u}^{''}(v_1)<0$, and hence, $\bm{v}_1$ is the maximizer. 
In Case 3, we have $R_\textrm{est,u}^{''}(v_1)>0$, and hence, $\bm{v}_1$ is the minimizer.\\
For $v_2$ and $v_3$ in Case 3, we have $y_1=\lambda_{1}(v\bm{d})$, and hence,
\begin{align}
R_\textrm{est,u}^{''}(v)=-\frac{1}{4D^2}(r_0-v.(t_1-t_\textrm{r}))^2\lambda_{1}(v\bm{d}).
\end{align}
Since $(\bm{r}_0-v.(t_l-t_\textrm{r}))^2>0$ and $\lambda_l(\bm{v})$ is a positive function, $R_\textrm{est,u}^{''}(v)|_{v=v_1,v_2}<0$ and hence, $\bm{v}_2$ and $\bm{v}_3$ are the maximizers. 

\section{Proof of Lemma \ref{lemma_MMSE_est_L}}
\label{AppendixProoflemma9}
For the MMSE estimator, when the mean and variance of $v_i, i \in \{x,y,z\}$ is finite, we have $\hat{v}_i=\E[v_i|y_1,...,y_L]$ \cite{van2004detection}.
For $\hat{v}_x$, we have
\small
\begin{align}
&\Prob(v_x|y_1,...y_L)=\frac{\Prob(y_1,...,y_L|v_x)p_x(v_x)}{\Prob(y_1,...,y_L)}\\\nonumber
&=\frac{p_x(v_x)\int_{v_y, v_z}p_y(v_y)p_z(v_z)\Prob(y_1,...,y_L|v_x,v_y,v_z)d v_z d v_y}{\int p_x(v_x)p_y(v_y)p_z(v_z)\Prob(y_1,...,y_L|v_x,v_y,v_z) dv_z dv_y dv_x}.
\end{align}
\normalsize
For the independent observations, we have
\begin{align}
&\Prob(v_x|y_1,...y_L)=\\\nonumber
&\quad\frac{p_x(v_x)\int_{v_y,v_z}p_y(v_y)p_z(v_z)\Pi_{l=1}^{L}\Prob(y_l|v_x,v_y,v_z)d v_z d v_y }{\int p_x(v_x)p_y(v_y)p_z(v_z)\Pi_{l=1}^{L}\Prob(y_l|v_x,v_y,v_z) dv_z dv_y dv_x}.
\end{align}
Therefore,
\small
\begin{align}
&\hat{v}_x=E(v_x|y_1,...y_L)=\\\nonumber
&\frac{\int_{v_x} v_x p_x(v_x)\int_{v_y,v_z}p_y(v_y)p_z(v_z)\Pi_{l=1}^{L}\Prob(y_l|\bm{v})d v_z d v_y d v_x }{\int p_x(v_x)p_y(v_y)p_z(v_z)\Pi_{l=1}^{L}\Prob(y_l|\bm{v}) dv_z dv_y dv_x}.
\end{align}
\normalsize
Let $p_{x,y,z}(\bm{v})=p_x(v_x)p_y(v_y)p_z(v_z)$. Since the conditional probability distribution of $Y_l, l=1,...,L$ given $\bm{v}$ is Poiss($\lambda_l(\bm{v})$), we obtain the equation \eqref{estimation_rule_MMSE}. \\
For the LMMSE, the estimator of $\bm{v}$ can be obtained as
$\hat{\bm{v}}=\bm{y}A+\bm{b},$
where $A=C_{YY}^{-1}C_{Y\bm{v}}, \bm{y}=(y_1,y_2,...,y_L), \bm{b}=\bm{\mu}-\bm{\lambda} A$, in which
\begin{align}
&\bm{\mu}=\E[\bm{v}]=(\E[v_x],\E[v_y],\E[v_z]),\\\nonumber
& \bm{\lambda}=(\E[Y_1],\E[Y_2],...,\E[Y_L]),
\end{align}
\small
\begin{align}
&C_{YY}=
\begin{bmatrix}
\V(Y_1) & \C(Y_1,Y_2) & .... & \C(Y_1,Y_L)\\
\C(Y_2,Y_1) & \V(Y_2) & .... & \C(Y_2,Y_L)\\
\vdots & \vdots &  & \vdots\\\nonumber
\C(Y_L,Y_1) & \C(Y_L,Y_2) & .... & \V(Y_L)
\end{bmatrix},\\\nonumber
& C_{Y\bm{v}}=\begin{bmatrix}
\C(Y_1,v_x)& \C(Y_1,v_y)  &  \C(Y_1,v_z)\\
\C(Y_2,v_x) & \C(Y_2,v_y)  & \C(Y_1,v_z)\\
\vdots & \vdots & \vdots \\
\C(Y_L,v_x)& \C(Y_L,v_y) &  \C(Y_L,v_z)
\end{bmatrix}.
\end{align}
\normalsize
Now, since the observations are assumed to be independent, we have $\C(Y_{l_1},Y_{l_2})=0$, for $l_1 \neq \l_2$. Hence it is straightforward to obtain \eqref{estimation_rule_csk_MMSE}.

\section{Proof of Lemma \ref{lemma_CR_est_L_performance}}
\label{AppendixProoflemma10}
We obtain the matrixes $J_D$ and $J_P$ in the BCR lower bound given in \eqref{cramer_rao_bound_BCR}, \eqref{cramer_rao_bound2}. The $(i,j)$-th entry of $J_P$ can be obtained as 
\begin{align}
&\{J_P\}_{i,j}=-\E_{v}[\frac{\partial^2}{\partial v_i \partial v_j}\ln(p_{x,y,z}(\bm{v}))]\\\nonumber
&=-\E_{v}[\frac{\partial^2}{\partial v_i \partial v_j}(\ln(p_{x}(v_x))+\ln(p_{y}(v_y))+\ln(p_{z}(v_z)))].
\end{align}
Hence, it is straightforward to obtain \eqref{CR_lower_bound2}. The $(i,j)$-th entry of the matrix $J_F(\bm{v})$ is obtained as
 \begin{align}
 \nonumber
\{J_F(\bm{v})\}_{i,j}=-\E_{\bm{Y}|v}[\frac{\partial^2}{\partial v_i\partial v_j}\big(\sum_{l=1}^{L}\ln(\Prob(Y_l|v))\big)]\\
=-\sum_{l=1}^{L}\E_{\bm{Y}|v}[\frac{\partial^2}{\partial v_i\partial v_j}\big(Y_l\ln(\lambda_l(\bm{v}))-\lambda_l(\bm{v})\big)],
 \end{align}
 which can be simplified to
 \small
 \begin{align}
 \nonumber
\{J_F(\bm{v})\}_{i,j}&=-\sum_{l=1}^{L}\E_{\bm{y}|v}\bigg[\big(\frac{\partial}{\partial v_i}\big[\frac{1}{\lambda_l(\bm{v})}.\frac{\partial \lambda_l(\bm{v})}{\partial v_j}\big]\big)(y_l-\lambda_l(\bm{v}))\\
&\quad +(\frac{1}{\lambda_l(\bm{v})}.\frac{\partial \lambda_l(\bm{v})}{\partial v_j})(-\frac{\partial \lambda_l(\bm{v})}{\partial v_i})\bigg].
 \end{align}
 \normalsize
 Now, since $\E_{Y_l|v}[Y_l]=\lambda_{1}(v\bm{d})$ and the second term is not related to $\bm{y}$, we have
 \begin{align}
\{J_F(\bm{v})\}_{i,j}=\sum_{l=1}^{L}\frac{1}{\lambda_l(\bm{v})}.\frac{\partial \lambda_l(\bm{v})}{\partial v_j}\frac{\partial \lambda_l(\bm{v})}{\partial v_i}.
 \end{align}
Hence, the $(i,j)$-th entry of the matrix $J_D=\E_{\bm{v}}[J_F(\bm{v})]$ is obtained as \eqref{CR_lower_bound}.

\begin{IEEEbiographynophoto}{Maryam Farahnak-Ghazani}
received the B.Sc. and M.Sc. degrees in Electrical engineering from Sharif University of Technology (SUT), Tehran, Iran, in 2014 and 2016, respectively. She is currently a Ph.D. candidate at the Department of Electrical Engineering, Sharif University of Technology, Tehran, Iran.
Her main research interests include nano-scale communication, molecular communication networks, information theory, and wireless communnication.
\end{IEEEbiographynophoto}
\vskip -1\baselineskip plus -1fil

\begin{IEEEbiographynophoto}{Mahtab Mirmohseni}
(S'06-M'12-SM'19) is an associate professor at Department of Electrical Engineering, Sharif university of Technology (SUT), Iran. She is also affiliated with the Information Systems and Security Laboratory (ISSL), Sharif University of Technology.
She received the B.Sc., M.Sc. and Ph.D. degrees from Department of Electrical Engineering, Sharif University, all in the field of Communication Systems in 2005, 2007 and 2012, respectively. She was a postdoctoral researcher at Royal Institute of Technology (KTH), Sweden, in the School of Electrical Engineering till Feb. 2014. She was the recipient of the Award of the national festival of the Women and Science (Maryam Mirzakhani Award), 2019, and also was selected as an exemplary reviewer for IEEE Transactions on Communications in 2016. Her current research interests include different aspects of information theory, mostly focusing on molecular communication and secure and private communication.
\end{IEEEbiographynophoto}
\vskip -1\baselineskip plus -1fil

\begin{IEEEbiographynophoto}{Masoumeh Nasiri-kenari}
received her B.Sc. and M.Sc. degrees in Electrical Engineering from Isfahan University of Technology, Isfahan, Iran, in 1986 and 1987, respectively, and her Ph.D. degree in Electrical Engineering from University of Utah, Salt Lake City, in 1993. From 1987 to 1988, she was a Technical Instructor and Research Assistant at Isfahan University of Technology. Since 1994, she has been with the Department of Electrical Engineering, Sharif University of Technology, Tehran, Iran, where she is now a Professor. \\
Dr. M. Nasiri-Kenari founded Wireless Research Laboratory (WRL) of the Electrical Engineering Department in 2001 to coordinate the research activities in the field of wireless communication. Current main activities of WRL are Energy Harvesting and Green communications, 5G and Molecular Communications. From 1999-2001, She was a Co-Director of Advanced Communication Research Laboratory, Iran Telecommunication Research Center, Tehran, Iran. She is a recipient of Distinguished Researcher Award and Distinguished Lecturer Award of EE department at Sharif University of Technology in years 2005 and 2007, respectively, a Research Chair on Nano Communication Networks from Iran National Science Foundation (INSF) and the 2014 Premium Award for Best Paper in IET Communications. She holds a Research Grant on Green Communication in Multi-Relay Wireless Networks for years 2015-2018, from Swedish Research Council. Since 2014, she serves as an Associate Editor of IEEE Trans. Communications.
\end{IEEEbiographynophoto}
\end{document}